\documentclass[letterpaper, 10pt, conference]{ieeeconf}

\IEEEoverridecommandlockouts

\newcommand{\docversion}{2}    
\usepackage{comment}  

\newif\ifextended     
\newif\ifarxiv        
\newif\ifsubmit       
\newif\iffull         

\ifnum\docversion=1
  \extendedtrue  \arxivfalse \submitfalse \fulltrue
\fi
\ifnum\docversion=2
  \extendedfalse \arxivtrue  \submitfalse \fulltrue
\fi
\ifnum\docversion=3
  \extendedfalse \arxivfalse \submittrue  \fullfalse
\fi

\ifextended
  \includecomment{extendedonly}
\else
  \excludecomment{extendedonly}
\fi

\ifextended
  \excludecomment{compactonly}
\else
  \includecomment{compactonly}
\fi

\iffull
  \includecomment{fullonly}
\else
  \excludecomment{fullonly}
\fi

\ifsubmit
  \includecomment{submitonly}
\else
  \excludecomment{submitonly}
\fi

\ifarxiv
  \includecomment{arxivonly}
\else
  \excludecomment{arxivonly}
\fi

\newcommand{\extonly}[1]{\ifextended #1\fi}

\newcommand{\fulonly}[1]{\iffull #1\fi}

\newcommand{\subonly}[1]{\ifsubmit #1\fi}

\newcommand{\arxonly}[1]{\ifarxiv #1\fi}

\newcommand{\cdconly}[1]{\ifextended\else #1\fi}

\usepackage{amsmath,amssymb,amsfonts}
\usepackage{mathtools}

\usepackage{amsthm}
\usepackage{bm}
\usepackage{xcolor}
\usepackage{todonotes}

\let\labelindent\relax
\usepackage{enumitem}
\usepackage{graphicx}
\usepackage{svg}
\usepackage{booktabs}
\usepackage{subcaption}

\usepackage{algorithm}
\usepackage{algpseudocode}

\usepackage[hidelinks,hypertexnames=false]{hyperref}
\usepackage[capitalize,noabbrev,nameinlink]{cleveref}

\usepackage{subcaption}
\usepackage{caption}
\usepackage[backend=biber,style=ieee,sorting=none,maxbibnames=6,minbibnames=1]{biblatex}
\newcommand{\R}{\mathbb{R}}
\newcommand{\N}{\mathbb{N}}
\newcommand{\X}{\mathbb{R}^{n_x}}
\newcommand{\XX}{\mathcal{X}_0}
\newcommand{\U}{\mathcal{U}}
\newcommand{\Cset}{\mathcal{C}}
\newcommand{\Kset}{\mathcal{X}}

\newcommand{\Rset}{\mathcal{R}}

\definecolor{editblue}{RGB}{0,70,170}

\newcommand{\Prob}{\mathbb{P}}

\newcommand{\ind}{\mathbf{1}}

\newcommand{\norm}[1]{\left\lVert #1 \right\rVert}

\newcommand{\ip}[2]{\left\langle #1,#2 \right\rangle}
\newcommand{\dd}{\mathrm{d}}

\let\CzNorm\FuncNorm

\newcommand{\res}{\varepsilon}      
\newcommand{\Oset}{\Omega}
\newcommand{\kappaSR}{\kappa}

\newcommand{\revised}[1]{{\color{editblue}#1}}

\DeclareMathOperator*{\argmin}{arg\,min}

\newtheorem{theorem}{Theorem}
\newtheorem{example}{Example}
\newtheorem{lemma}{Lemma}
\newtheorem{proposition}{Proposition}
\newtheorem{corollary}{Corollary}
\newtheorem{definition}{Definition}
\newtheorem{assumption}{Assumption}

\theoremstyle{remark}
\newtheorem{remark}{Remark}

\crefname{assumption}{Assumption}{Assumptions}
\crefname{proposition}{Proposition}{Propositions}
\crefname{lemma}{Lemma}{Lemmas}
\crefname{theorem}{Theorem}{Theorems}
\crefname{corollary}{Corollary}{Corollaries}
\crefname{definition}{Definition}{Definitions}
\crefname{remark}{Remark}{Remarks}
\crefname{algorithm}{Algorithm}{Algorithms}

\newcommand{\ct}{\mathrm{ct}}
\newcommand{\dt}{\mathrm{dt}}

\title{\LARGE \bf
Robust Conformal CBF and CLF Controllers\\ via Iterative Policy Updates
}

\author{Omid Mirzaeedodangeh$^{1}$, Eliot Shekhtman$^{2}$, Nikolai Matni$^{2,3}$, and Lars Lindemann$^{1}$%
\thanks{$^{1}$Automatic Control Laboratory, ETH Z\"urich, Switzerland.}%
\thanks{$^{2}$Computer and Information Science, University of Pennsylvania, USA.}%
\thanks{$^{3}$Electrical and Systems Engineering, University of Pennsylvania, USA.}%
}

\begin{document}

\maketitle
\thispagestyle{empty}
\pagestyle{empty}

\begin{abstract}

Conformal prediction (CP) has been used to obtain probabilistic bounds on the error between a learned dynamics model and the true but unknown system. Such CP bounds can then be embedded into robust control Lyapunov function (CLF) and control barrier function (CBF) frameworks. However, such an approach does not retain stability/safety guarantees because of the distribution shift between the closed-loop trajectory distribution under the deployed CLF/CBF policy and the trajectory distribution from which the CP bound and its guarantees were derived. To address this issue, we propose an episodic framework that iteratively updates the robust conformal CLF/CBF policy while maintaining stability/safety guarantees across episodes. 
We achieve this by (1) using adversarially robust conformal prediction, and (2) quantifying a distribution shift budget that allows us to control how much the model error can increase across policy updates. This distribution shift budget is derived via a closed-loop trajectory sensitivity analysis, yielding an implicit and an explicit update rule for the CP bound. We analyze convergence of our algorithm, which we demonstrate on three case studies. To the best of our knowledge, these are the first results that provide stability/safety guarantees for robust conformal CBF/CLF policies.

\end{abstract}

\vspace{0.25em}
\textbf{Index terms:}  Distribution shift;  robust conformal prediction;  robust control barrier and Lyapunov functions.

\ifextended\input{sections/01_introduction}\fi
\cdconly{\section{Introduction}
\label{sec:intro}

Control Lyapunov functions (CLFs) and control barrier functions (CBFs) are classical tools for enforcing stability and safety in nonlinear control \cite{sontag1989universal,ames2019cbf}. When an accurate model of the dynamical system is available, CLF and CBF conditions can be enforced pointwise---most commonly via a convex quadratic program (QP)---to synthesize feedback control laws that stabilize equilibria and render safe sets forward invariant. In many cases, however, the true dynamics are not known and controllers are instead synthesized using learned or identified nominal models. This introduces a model error that can result in unstable and unsafe behavior.

Robust CLFs and CBFs were proposed to embed a worst-case model-error bound into the control design to address this issue \cite{freeman2008robust,xu2015robustness}. However, this bound depends on the system's operating regime and is usually unknown. Data-driven estimates of this bound can easily be obtained in practice, but they are often heuristic and only under-approximate the true model-error, which can render data-driven error bounds invalid and  result yet again in unstable or unsafe behavior.

\begin{figure}[t]
    \centering
    \includegraphics[
        width=1\linewidth,
    ]{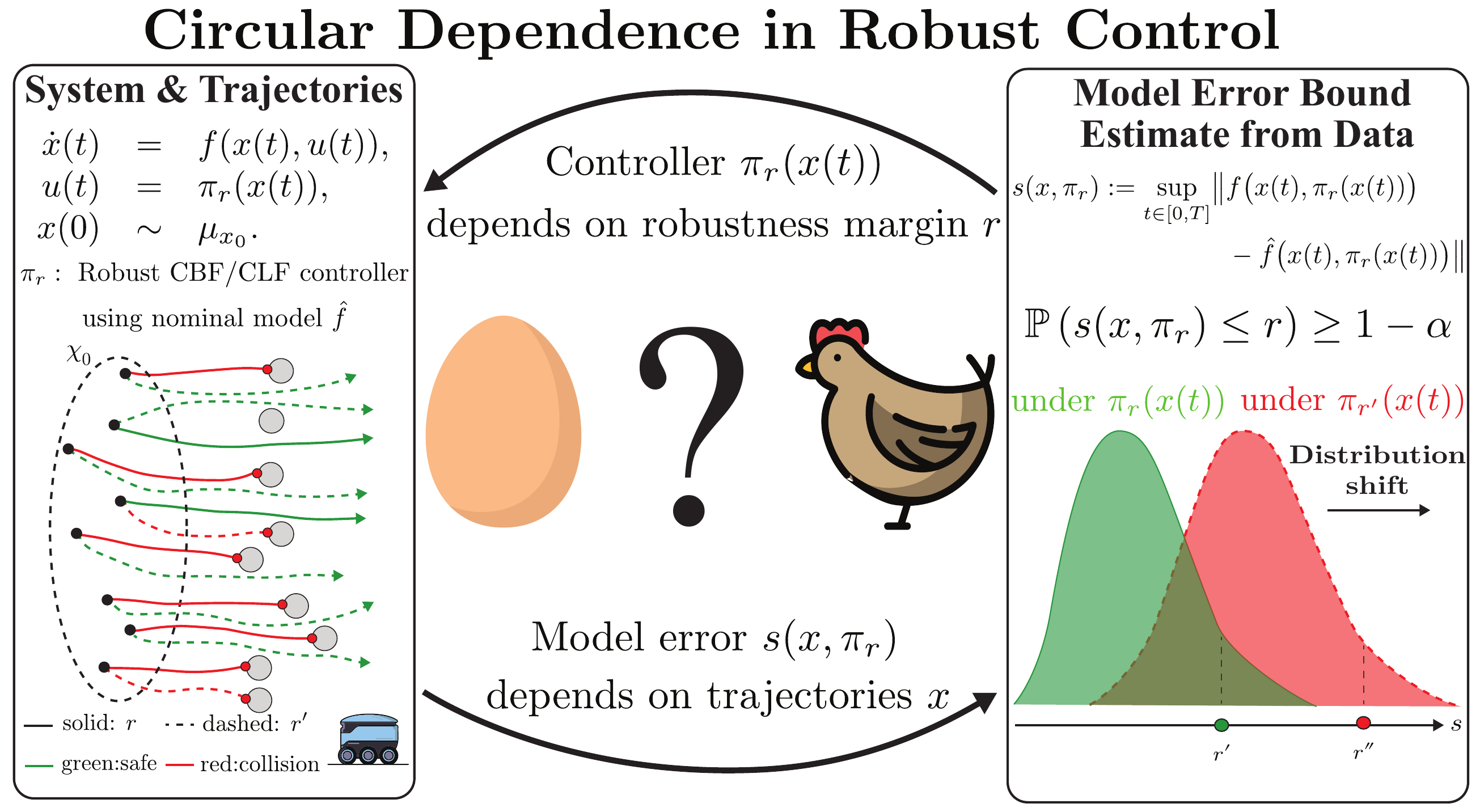}
    \caption{Circular dependence in robust conformal CLF/CBF control. The robustness margin \(r\) determines the controller \(\pi_r\), while the trajectories generated by \(\pi_r\) determine the model error $s$ that is used to estimate \(r\).}
    \label{fig:circular-dependence}
    \vspace{-.7cm}
\end{figure}

Conformal prediction (CP) is a statistical tool for uncertainty quantification in settings without distribution shift \cite{vovk2005alrw,angelopoulos2023conformal,lindemann2025formal}. The authors in \cite{hsu2025} estimate said model-error bound from an offline  calibration dataset by calibrating a trajectory-level model-error using CP. The obtained CP bound comes with probabilistic guarantees and can then be embedded into a robust CLF/CBF framework. However, this probabilistic guarantee does not carry over into stability/safety guarantees because of the distribution shift between the closed-loop trajectory distribution under the CLF/CBF policy and the distribution of the  calibration data. We study how to retain probabilistic stability/safety guarantees when using a robust conformal CLF/CBF policy. 

Our starting point is the episodic framework in \cite{mirzaeedodangeh2025safe}, which proposes iterative policy updates to address interaction-induced distribution shift in safe planning. Here, we iteratively update the CLF/CBF policy by recomputing the CP bound with calibration data collected under the current policy. This creates a circular dependence: the CLF/CBF policy depends on the CP bound, while the trajectory-level model-error and hence the CP bound depend on the policy. This circular dependence is illustrated in \Cref{fig:circular-dependence}. We break this circularity and maintain stability/safety guarantees across episodes by using adversarially robust conformal prediction (ARCP) \cite{gendler2021arcp}. Our main contributions are:
\begin{itemize}[leftmargin=1.5em]
   \item We highlight the issue of distribution shifts for conformal CLF and CBF policies and provide analytical examples where stability/safety cannot be maintained.  

  \item We propose  an episodic framework that addresses distribution shifts for conformal CLF and CBF policies by connecting adversarially robust CP to a distribution shift budget to control how much the model error can increase across a policy update. This leads to an implicit fixed-point and an explicit update rule for the CP bound.

  \item We show that our framework inherits probabilistic stability/safety guarantees across episodes. We further analyze conditions for convergence of our algorithm.
\end{itemize}



\subsection{Related work}
\label{sec:related_work}

\emph{Lyapunov and  barrier functions in control.} CLF and CBF policies are synthesized via convex quadratic programs  \cite{sontag1989universal,ames2019cbf}.  The regularity properties of such optimization-based controllers, which will be important to us later, were analyzed in \cite{mestres2025regularity}.  Robust extensions of CLFs and CBFs can ensure stability and safety even under model uncertainty and will provide the starting point for our work \cite{freeman2008robust,xu2015robustness}. 

\emph{Conformal prediction in control.} A rapidly growing body of literature uses CP for uncertainty quantification in control. For instance, CP was used for safe planning in dynamic and uncertain environments \cite{lindemann2023safeplanning,wang2024probabilistically,ren2023robots}, perception-based control under sensor uncertainty \cite{yang2023safeperception,zhang2025conformal}, and predictive runtime verification \cite{lindemann2023conformal,cairoli2023conformal}; see \cite{lindemann2025formal} for a survey. These works use CP to quantify state estimation, prediction, or model uncertainty, which is then used for downstream planning and control. These works require exchangeable data and do not apply in settings with distribution shifts. 

\emph{Conformal prediction and distribution shifts.} 
Recent work on CP has focused on dealing with distribution shifts, e.g., via adaptive CP for non-exchangeable time series data \cite{gibbs2021adaptive}, weighted CP for distribution shifts of covariates \cite{tibshirani2019covshift}, distributionally robust CP for data perturbations within an ambiguity set \cite{cauchois2024robust}, and adversarially robust CP for worst-case data perturbations \cite{gendler2021arcp}. Adaptive CP was used for control to adapt to arbitrary distribution shifts in \cite{dixit2023adaptive,sheng2024safe}, but generally fails to provide probabilistic control guarantees. Verification and control algorithms using $f$-divergence and Wasserstein distance-based ambiguity sets were proposed in \cite{rahaman2026environments,zhao2024rprvshift}, but require exact knowledge of the ambiguity set. Weighted CP was used in \cite{d2026statistical,srinivasan2026oodmpc}, but requires estimating probability ratios between calibration and deployment distributions  in practice, which can result in degrading probabilistic coverage. Conformal policy learning \cite{huang2024confpolicylearning} designs switching policies to detect and react to distribution shifts. Performative risk control \cite{li2025performativerisk} applies to settings where the CP results itself influences the data distribution, and provides iterative calibration procedures. By contrast, our focus is on policy-induced distribution shifts of CLF/CBF policies for which we use adversarial CP.  Lastly, the work in \cite{contreras2024sodampc} combines out-of-distribution detection with safe fallback controllers.

\emph{Conformal prediction for CLF/CBF policies.} CP has been integrated into CLF/CBF frameworks, e.g.,  \cite{tayal2025cpncbf,cped2025ncbf} use CP to verify learned neural CBFs, \cite{zhang2025conformal,yang2023safeperception} integrate uncertainty sets of state estimators into CBFs, and \cite{zhu2025sparc} uses CP-based prediction sets within CBFs for coupled controllable and uncontrollable agents. The authors in \cite{zhou2024adaptivecp,zhou2025safeRLacp} design learning and policy-iteration methods that combine adaptive CP with CBFs. Closest to our work is \cite{hsu2025} which obtains a CP bound on the error between a learned dynamics model and the true system that is then integrated into a robust CBF.  In contrast, we aim to retain stability/safety guarantees  and acocunt for distribution shifts by proposing an iterative policy update method. Rather than reweighting or adaptively tracking distribution shifts, we transfer probabilistic stability/safety guarantees across a policy update by using adversarially robust CP and by explicitly quantifying the worst-case distribution shift. Conceptually, our work is also related to the notion of performative prediction in supervised learning \cite{hardt2023performativepastfuture}, but our problem is grounded in a controls setting that requires stability/safety guarantees.  We further note that we view recent work on conformal policy control \cite{prinster2026cpc} as adjacent in spirit as it studies how much a policy update changes a distribution while satisfying control constraints.

}
\section{Problem formulation}
\label{sec:problem}

We study CLF/CBF policies when an approximate model of the unknown system dynamics is available. The main text focuses on the continuous-time setting, while discrete-time analogues are provided in the \subonly{extended version~\cite{our_extended}.}%
\fulonly{Appendix.}
\subonly{The extended version also includes the omitted proofs, additional simulations, and larger figures.}
Let $x(t)\in\X$ and $u(t)\in\U$ denote  state and control inputs of the system at time $t\ge 0$, where $\U\subseteq\R^{n_u}$ are input constraints. The  system dynamics are described by
\begin{equation}
  \dot x(t) = f(x(t),u(t)) = \hat f(x(t),u(t)) + \res(x(t),u(t)),
  \label{eq:ct_true}
\end{equation}
where $f:\X\times\U\to \X$ is an unknown continuous function, while $\hat f:\X\times\U\to \X$ is a known nominal model (e.g., learned from data) that induces the unknown model-error $\res(x,u) := f(x,u) - \hat f(x,u)$. A state-feedback policy is a measurable map $\pi:\X\to\U$, and we write $u(t):=\pi(x(t))$ for brevity. For a fixed time horizon $T>0$ and an initial condition $x(0)\in \X$, let $x(0:T):=\{x(t)\}_{t=0}^T$ and $u(0:T):= \{u(t)\}_{t=0}^T$ denote the corresponding system and input trajectories. For convenience, let us also define the joint state-input trajectory $\tau := \bigl(x(0:T),u(0:T)\bigr)$.

\subsection{Robust Control Barrier and Control Lyapunov Functions}
\label{sec:robust_clf_cbf}

A well-known approach to account for unknown model-errors is to enforce \emph{robustified versions} of standard CLF/CBF inequalities. These account for the model error via a robustness margin $r\ge 0$ which should capture a bound on the error $\|\res(x,u)\|\le r$, where $\|\cdot\|$ denotes the Euclidean norm.

\begin{definition}[Robust CBF]
\label{def:cbf}
Let $\Kset\subseteq\X$ be an open set and  $h:\X\to\R$ be a continuously differentiable function. Define the  set $\Cset:=\{x\in\X:h(x)\ge 0\}$ and assume that $\Cset\subseteq\Kset$. We say that $h(x)$ is a \emph{robust control barrier function (rCBF) on $\Kset$} with decay rate $\gamma>0$ and robustness margin $r\ge 0$ if, for every $x\in\Kset$, there exists  $u\in\U$ satisfying
\begin{equation}
  \ip{\nabla h(x)}{\hat f(x,u)} + \gamma h(x) \ge \|\nabla h(x)\|\, r.
  \label{eq:ct_cbf}
\end{equation}
\end{definition}

\begin{definition}[Robust CLF]
\label{def:clf}
Let $\Kset\subseteq\X$ be an open set, $V:\X\to\R_{\ge 0}$ be a continuously differentiable and positive-definite function\fulonly{\footnote{$V(\cdot)$ is positive definite if $V(0)=0$ and $V(x)>0$ for $x\neq 0$.}}, $V_{\max}>0$ be a constant, and  $\underline\alpha_V(\cdot),\overline\alpha_V:\R\to \R$ be locally Lipschitz continuous class-$\mathcal{K}_\infty$ functions\fulonly{\footnote{$\alpha_V(\cdot)$ is a class-$\mathcal{K}_\infty$ function if it is strictly increasing with $\alpha_V(0)=0$.}} such that
$\underline\alpha_V(\|x\|)\le V(x)\le\overline\alpha_V(\|x\|)$
for all $x\in\Kset$. Define the set $\mathcal{V}:=\{x\in\X:V(x)\le V_{\max}\}$  and assume that $\mathcal{V}\subseteq\Kset$. We say that $V(x)$ is a \emph{robust control Lyapunov function (rCLF) on $\Kset$} with decay rate $c>0$ and robustness margin $r\ge 0$ if, for every $x\in\Kset$, there exists $u\in\U$ satisfying
\begin{equation}
  \ip{\nabla V(x)}{\hat f(x,u)} + c V(x) \le -\|\nabla V(x)\|\, r.
  \label{eq:ct_clf}
\end{equation}
\end{definition}

For any $(x,u)\in\Kset\times\U$ satisfying $\|\res(x,u)\|\le r$, the Cauchy--Schwarz inequality yields $-\ip{\nabla h(x)}{\res(x,u)} \le \|\nabla h(x)\|\,r$ and $\ip{\nabla V(x)}{\res(x,u)} \le  \|\nabla V(x)\|\,r$. Hence, if \eqref{eq:ct_cbf} and \eqref{eq:ct_clf} hold, then the true dynamics \eqref{eq:ct_true} satisfy the nominal conditions $\dot{h}(x):=\ip{\nabla h(x)}{ f(x,u)}  \ge -\gamma h(x)$ and $\dot{V}(x):=\ip{\nabla V(x)}{ f(x,u)}  \le -c V(x)$, respectively.

Hence, one usually assumes that $\|\res(x,u)\|\le r$ holds for all $(x,u)\in \Kset \times \U$, but the trajectory-level error bound
\begin{equation}
  \sup_{t\in[0,T]}\|\res(x(t),u(t))\| \le r
  \label{eq:traj_residual_bound}
\end{equation}
actually suffices to certify stability and safety of a solution $x(t)$. In both cases, it can be shown that: (1) $x(t)\in\mathcal{C}$ for all $t\in[0,T]$ if $\mathcal{C}$ is compact and $x(0)\in\mathcal{C}$, and (2)  $\|x(t)\|\le\underline\alpha_V^{-1}\bigl(e^{-ct}\overline\alpha_V(\|x(0)\|)\bigr)$ for all $t\in[0,T]$ if  $x(0)\in\mathcal{V}$. See \subonly{Appendix~\ref{app:det} of~\cite{our_extended} for more details}%
\fulonly{Appendix~\ref{app:det} for more details}.

CBF/CLF policies are  implemented via convex quadratic programs (QPs) for which we consider control-affine nominal models $\hat f(x,u):=\hat f_0(x)+\hat g(x)u$, which render the constraints in \eqref{eq:ct_cbf} and \eqref{eq:ct_clf} affine in $u$. For safety, and given nominal input $u_{\mathrm{nom}}(x)$, the robust CBF-QP policy is:
\begin{equation}
\begin{aligned}
  \pi^{\mathrm{cbf}}_r(x)
  \in\argmin_{u\in\U}\;& \tfrac12\|u-u_{\mathrm{nom}}(x)\|^2\;
  \text{s.t. } \eqref{eq:ct_cbf}.
\end{aligned}
\label{eq:cbf_qp}
\end{equation}
For stability, the robust CLF-QP policy is:
\begin{equation}
\begin{aligned}
  \pi^{\mathrm{clf}}_r(x)
  \in\argmin_{u\in\U}\;& \tfrac12\|u\|^2\;
  \text{s.t. } \eqref{eq:ct_clf}.
\end{aligned}
\label{eq:clf_qp}
\end{equation}
When  safety is desired we set $\pi_r(x):=\pi^{\mathrm{cbf}}_r(x)$, and when  stability is desired we set $\pi_r(x):=\pi^{\mathrm{clf}}_r(x)$. We assume that the policy $\pi_r(x)$ is continuous in $x$; conditions guaranteeing continuity of $\pi_r(x)$ are presented in \cite{mestres2025regularity}.

The certificates above require choosing a margin $r$ that satisfies \eqref{eq:traj_residual_bound}, yet the residual $\res$ is unknown. Classical robust designs require a known deterministic bound on $\|\res(x,u)\|$ over the domain $\Kset\times\U$ \cite{freeman2008robust,xu2015robustness}.  Data-driven estimates of $r$ are either heuristic or  under-approximate the model-error.

\subsection{Conformal rCBF and rCLF  Induce Distribution Shifts}
\label{sec:baseline_cp}


 Ideally, we would like to ensure that \eqref{eq:traj_residual_bound} holds for all trajectories $x(t)$ with initial state such that $x(0)\in\mathcal{X}$. As this is difficult to achieve, we instead sample initial states from a set $\XX\subseteq \mathcal{X}$. Therefore, let $\mu_{\XX}$ be a probability distribution with support over
$\XX$, and sample initial conditions as
$x(0)\sim\mu_{\XX}$. {For any set $\mathcal A\subseteq\X$, the condition $\mu_{\XX}(\mathcal A)=1$ means that an initial condition $x(0)$ drawn from $\mu_{\XX}$ lies in $\mathcal A$ almost surely.} In theory, one can freely choose $\XX$ (e.g., $\XX=\mathcal{C}$ or $\XX=\mathcal{V}$) and $\mu_{\XX}$ (e.g., a uniform distribution), whereas in practice both $\XX$ and $\mu_{\XX}$ are implicitly determined by the available data.  Since the dynamics \eqref{eq:ct_true} are deterministic, a realization of $x(0)$ uniquely determines the 
trajectory $\tau$; consequently, all randomness originates from
$x(0)$. We further limit our attention to trajectories over the time interval $[0,T]$ and write $\tau\sim\mathcal{D}_\pi$ for the induced trajectory distribution.  In this way, we will be able to obtain \emph{probabilistic} safety and stability certificates over the set of
initial conditions $\XX$, i.e., we will be able to guarantee $\Prob \big(x(t)\in\Cset,\ \forall t\in[0,T]\big)\ge 1-\alpha$ and $\Prob \big(\|x(t)\|\le\underline\alpha_V^{-1}\bigl(e^{-ct}\overline\alpha_V(\|x(0)\|)\bigr),\ \forall t\in[0,T]\big)\ge 1-\alpha$ for a miscoverage level $\alpha\in(0,1)$.

Conformal robustness, as presented in \cite{hsu2025}, uses conformal prediction to obtain a probabilistic bound for the trajectory-level error $\res(x(t),u(t))$ in \eqref{eq:traj_residual_bound}. To apply conformal prediction, let
$x^{(1)}(0),\dots,x^{(n)}(0)\sim \mu_{\XX}$ be $n$ independent
and identically distributed (i.i.d.) random variables, often referred to as the calibration data. Given a generic policy $\pi$, we denote the calibration trajectories that follow from the system \eqref{eq:ct_true} under the policy $u(t)=\pi(x(t))$ by
$\tau^{(i)}:=\bigl(x^{(i)}(0:T),u^{(i)}(0:T)\bigr)$ for $i=1,\dots,n$.   Based on this, we define the continuous-time nonconformity score
\begin{equation}
  s^{\ct}(\tau) := \sup_{t\in[0,T]} \|\res(x(t),u(t))\|
  \label{eq:score_ct}
\end{equation}
and let $s^{(i)}:=s^{\ct}(\tau^{(i)})$ for  calibration trajectories $i=1,\dots,n$ and
$s:=s^{\ct}(\tau)$ for the test
trajectory.
{To have well-defined nonconformity scores, we have to assume that the policy $\pi$ ensures that $\tau,\tau^{(1)},\hdots,\tau^{(n)}$ are defined over $[0,T]$, i.e., $\tau(t),\tau^{(1)}(t),\hdots,\tau^{(n)}(t)$ exist for all $t\in[0,T]$.} We note that the supremum operator in \eqref{eq:score_ct} is measurable
under mild regularity conditions, see%
\subonly{ Appendix~\ref{app:ct_meas} of~\cite{our_extended}}%
\fulonly{ Appendix~\ref{app:ct_meas}}. Since the calibration and test trajectories are i.i.d. random variables following $\mathcal{D}_{\pi}$, the
associated nonconformity scores are also i.i.d. following some induced distribution $\mathcal{S}_{\pi}$. In practice, one typically obtains sampled  data, making the computation of the continuous-time  score $s^{\ct}(\tau)$ difficult. \subonly{Appendix~\ref{app:ct_meas} of~\cite{our_extended}}%
\fulonly{Appendix~\ref{app:ct_meas}} provides a method to compute an upper bound in  this case. 

Split conformal prediction \cite{vovk2005alrw,lindemann2025formal} constructs a probabilistic upper bound of $s$ from the calibration data
$D^{\mathrm{cal}}:=\{s^{(1)},\dots,s^{(n)}\}$. For a
miscoverage level $\alpha\in(0,1)$, the split conformal threshold is
$s^{[k]}$ with
$k:=\lceil(1-\alpha)(n+1)\rceil$, where $s^{[k]}$ denotes the $k$th
order statistic of $\{s^{(1)},\dots,s^{(n)},+\infty\}$.\footnote{The $k$th order statistic $s^{[k]}$ of $\{s^{(1)},\dots,s^{(n)},+\infty\}$ is equivalent to the $k$th smallest value of $\{s^{(1)},\dots,s^{(n)},\infty\}$.} The next validity guarantee follows standard arguments from
\cite{vovk2005alrw,angelopoulos2023conformal}; see%
\subonly{ Appendix~\ref{app:arcp} of~\cite{our_extended}}%
\fulonly{ Appendix~\ref{app:arcp}} for a proof.
\begin{lemma}[Split conformal prediction {\cite{vovk2005alrw}}]
\label{thm:split_conformal_validity} Given the i.i.d. random trajectories
 $\tau,\tau^{(1)},\dots,\tau^{(n)}\sim \mathcal D_{\pi}$ under a policy $\pi$, the induced nonconformity scores $s,s^{(1)},\dots,s^{(n)}\sim \mathcal{S}_{\pi}$, and  the
miscoverage level $\alpha\in(0,1)$. Then, it holds that
$\Prob_{n+1}\bigl(s \le  s^{[k]}\bigr) \ge 1-\alpha$ with $k:=\lceil(1-\alpha)(n+1)\rceil$.\footnote{We note that $\Prob_{n+1}(\cdot)$ is the $(n+1)$-fold product probability measure of $\Prob(\cdot)$, which is known to approximate the calibration-conditional probability measure $\Prob(\cdot|D^{\mathrm{cal}})$. Indeed, it is known that $\Prob_{n+1}(\cdot)$ approximates  $\Prob(\cdot|D^{\mathrm{cal}})$ with increasing accuracy as $n$ increases \cite{angelopoulos2023conformal,lindemann2025formal}. We later use a variant of conformal prediction that more directly captures $\Prob(\cdot|D^{\mathrm{cal}})$.}
\end{lemma}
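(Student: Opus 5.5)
The plan is the standard exchangeability argument for split conformal prediction. Because $\tau,\tau^{(1)},\dots,\tau^{(n)}$ are i.i.d.\ draws from $\mathcal D_\pi$ and $s^{\ct}$ is a fixed measurable map (by the measurability discussion above), the scores $s,s^{(1)},\dots,s^{(n)}$ are i.i.d.\ from $\mathcal S_\pi$. In particular the vector $(s^{(1)},\dots,s^{(n)},s)$ is exchangeable: its joint law is invariant under permutations of its $n+1$ entries.

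First I dispose of the trivial case. If $k=\lceil(1-\alpha)(n+1)\rceil>n$, then $s^{[k]}=+\infty$, because $+\infty$ is the largest element of $\{s^{(1)},\dots,s^{(n)},+\infty\}$. The event $\{s\le s^{[k]}\}$ then holds surely. So assume $k\le n$.

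Next I reduce to a rank statement. Write $s^{(n+1)}:=s$ and let $q$ be the $k$th smallest element of the full multiset $\{s^{(1)},\dots,s^{(n+1)}\}$. The key deterministic claim is
\[
s\le s^{[k]} \iff s\le q .
\]
For the forward direction, suppose $s\le s^{[k]}$. At least $k$ of the $n$ calibration scores are $\le s^{[k]}$, so together with $s$ at least $k+1$ of the $n+1$ values are $\le s^{[k]}$. Hence $q\le s^{[k]}$. If $s>q$, replacing $s$ by a value $\le q$ shows at least $k$ calibration scores are $\le q$, so $s^{[k]}\le q<s$, a contradiction. For the reverse direction, suppose $s\le q$. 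Then removing $s$ from the multiset removes an element $\le q$, so the $k$th smallest calibration score is $\ge q\ge s$. I will verify this step carefully with a short case check on the position of $s$ in the sorted order, including ties. This equivalence is the main place where care is needed.

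Finally I apply exchangeability. Define $A_i:=\{s^{(i)}\le q\}$ for $i=1,\dots,n+1$. Here $q$ is a symmetric function of all $n+1$ scores, so by exchangeability $\Prob_{n+1}(A_i)$ is the same for every $i$. Deterministically, at least $k$ of the $n+1$ values are $\le q$, whether or not there are ties. Therefore
\[
\sum_{i=1}^{n+1}\ind_{A_i}\ge k,
\]
and taking expectations gives $(n+1)\,\Prob_{n+1}(A_{n+1})\ge k$. This yields $\Prob_{n+1}(s\le s^{[k]})=\Prob_{n+1}(A_{n+1})\ge k/(n+1)\ge 1-\alpha$, by the choice of $k$.

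Ties need no continuity assumption, because the bound above uses "at least $k$" rather than exact ranks. The only delicate step is therefore the deterministic equivalence between $s^{[k]}$ and the augmented quantile $q$.
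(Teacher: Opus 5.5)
Your proof is correct. It rests on the same exchangeability idea as the paper's, but it is organized differently and is more careful about ties.

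The paper's proof is one line. The rank of $s$ among $\{s,s^{(1)},\dots,s^{(n)}\}$ is uniform on $\{1,\dots,n+1\}$, and a rank of at most $k$ forces $s\le s^{[k]}$, so $\Prob_{n+1}(s\le s^{[k]})\ge k/(n+1)$. The uniform-rank claim is only literally true if ties have probability zero or are broken uniformly at random. The paper leaves this implicit, even though $\mathcal S_\pi$ may well have atoms.

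You replace the rank by the symmetric statistic $q$ and the count $\sum_{i}\ind_{A_i}\ge k$. This yields $\Prob_{n+1}(s\le q)\ge k/(n+1)$ with no tie-breaking convention. The price is the extra deterministic comparison between $q$ and $s^{[k]}$.

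That comparison is easier than you make it. You only need the inclusion $\{s\le q\}\subseteq\{s\le s^{[k]}\}$, which holds because $q\le s^{[k]}$ always. Indeed, $\{s^{(1)},\dots,s^{(n)},s\}$ is obtained from $\{s^{(1)},\dots,s^{(n)},+\infty\}$ by lowering one entry, and order statistics can only decrease under such lowering. This makes both your separate treatment of $k>n$ and the forward direction of your equivalence unnecessary.
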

By construction, the event $\{s\le r\}$ is equivalent to the condition in
\eqref{eq:traj_residual_bound}. Therefore, one could think that a high-probability bound on $s$, as obtained in \Cref{thm:split_conformal_validity}, could yield probabilistic stability and safety certificates. Motivated by this observation,  conformal robustness from \cite{hsu2025} sets $r:=s^{[k]}$ and
enforces the CLF and CBF constraints in \eqref{eq:ct_cbf} and \eqref{eq:ct_clf} with this margin. The
next result combines the conformal validity of
\Cref{thm:split_conformal_validity} with CLF and CBF constraints in \eqref{eq:ct_cbf} and \eqref{eq:ct_clf} to yield
safety and stability certificates under some rather restrictive assumptions, which we discuss thereafter. 
\begin{lemma}[Conformal control certificates for a fixed policy]
\label{thm:baseline_cr}
Given the i.i.d. random trajectories
 $\tau,\tau^{(1)},\dots,\tau^{(n)}\sim \mathcal D_{\pi}$ under a policy $\pi$, the miscoverage level $\alpha\in(0,1)$, and the robustness margin
$r:=s^{[k]}$ with $k:=\lceil(1-\alpha)(n+1)\rceil$.

 \emph{(Safety).} Let $h(x)$ be a robust CBF on $\Kset$ with decay rate $\gamma$ and margin
$r$. Assume that $\mathcal{C}$ is compact. Furthermore, let $\pi(x)$ be a continuous function that enforces the CBF constraint \eqref{eq:ct_cbf} for all $x\in\Kset$. 
{If $\mu_{\XX}(\Cset)=1$}, then
\begin{extendedonly}
\begin{align*}
  \Prob_{n+1}(x(t)\in\Cset,\ \forall t\in[0,T])\ge 1-\alpha.
\end{align*}
\end{extendedonly}
\begin{arxivonly}
\begin{equation*}
  \Prob_{n+1}(x(t)\in\Cset,\ \forall t\in[0,T])
  \ge 1-\alpha.
\end{equation*}
\end{arxivonly}
\begin{submitonly}
$\Prob_{n+1}(x(t)\in\Cset,\ \forall t\in[0,T])\ge 1-\alpha$.
\end{submitonly}

\emph{(Stability).}
Let $V(x)$ be a robust CLF on $\Kset$  with decay rate $c$ and margin
$r$. Assume that $\mathcal{V}$ is compact. Furthermore, let $\pi(x)$ be a continuous function that enforces the CLF constraint \eqref{eq:ct_clf} for all $x\in\Kset$.
{If $\mu_{\XX}(\mathcal{V})=1$}, then
\begin{extendedonly}
\begin{align*}
  \Prob_{n+1}(\|x(t)\|\le\underline\alpha_V^{-1}\bigl(e^{-ct}\overline\alpha_V(\|x(0)\|)\bigr),\  \forall t\in[0,T])\ge 1-\alpha.
\end{align*}
\end{extendedonly}
\begin{arxivonly}
\begin{equation*}
  \Prob_{n+1}\bigl(\|x(t)\|\le\underline\alpha_V^{-1}(e^{-ct}\overline\alpha_V(\|x(0)\|)),
  \forall t\in[0,T]\bigr)\ge 1-\alpha.
\end{equation*}
\end{arxivonly}
\begin{submitonly}
$\Prob_{n+1}(\|x(t)\|\le\underline\alpha_V^{-1}(e^{-ct}\overline\alpha_V(\|x(0)\|)),\ \forall t\in[0,T])\ge 1-\alpha$.
\end{submitonly}
\end{lemma}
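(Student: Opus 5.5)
The plan is to reduce both claims to the event $\{s\le r\}$ and then invoke \Cref{thm:split_conformal_validity}. Throughout, the policy $\pi$ is treated as fixed: it does not depend on the calibration data. This is the restrictive assumption flagged after the statement, and it is what keeps the test score $s$ and the calibration scores $s^{(1)},\dots,s^{(n)}$ i.i.d. from $\mathcal S_\pi$. Under it, \Cref{thm:split_conformal_validity} gives $\Prob_{n+1}(s\le s^{[k]})\ge 1-\alpha$. Since $r=s^{[k]}$, we have $\Prob_{n+1}(E)\ge1-\alpha$ with $E:=\{\sup_{t\in[0,T]}\|\res(x(t),u(t))\|\le r\}$, which is exactly condition \eqref{eq:traj_residual_bound}.

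The remaining work is deterministic: show that, almost surely on $E$, the desired certificate holds. The probability bound then follows by monotonicity. Almost surely, $x(0)\in\Cset$ (respectively $x(0)\in\mathcal V$), by the assumption $\mu_{\XX}(\Cset)=1$ (respectively $\mu_{\XX}(\mathcal V)=1$).

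\emph{Safety.} Fix a realization in $E$ with $x(0)\in\Cset$. Let $t^*:=\sup\{t\in[0,T]: x(\sigma)\in\Kset \text{ for all } \sigma\in[0,t]\}$. Since $\Cset\subseteq\Kset$ and $\Kset$ is open, $t^*>0$. On $[0,t^*)$, the constraint \eqref{eq:ct_cbf} holds along the trajectory. By Cauchy--Schwarz and $\|\res\|\le r$, we get $\dot h(x(t))\ge-\gamma h(x(t))$. The comparison lemma then yields $h(x(t))\ge e^{-\gamma t}h(x(0))\ge0$, so $x(t)\in\Cset$ on $[0,t^*)$. Because $\Cset$ is compact (hence closed) and $x(\cdot)$ is continuous, $x(t^*)\in\Cset\subseteq\Kset$. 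Openness of $\Kset$ then contradicts $t^*<T$ unless $t^*=T$, so $x(t)\in\Cset$ on all of $[0,T]$. Existence of the solution on $[0,T]$ is the standing assumption of the paper.

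\emph{Stability.} The argument is identical, using \eqref{eq:ct_clf}. It gives $\dot V\le -cV$, hence $V(x(t))\le e^{-ct}V(x(0))\le V_{\max}$. So $x(t)$ stays in the compact set $\mathcal V\subseteq\Kset$, and the same continuation argument applies. The sandwich bounds give $\underline\alpha_V(\|x(t)\|)\le V(x(t))\le e^{-ct}\overline\alpha_V(\|x(0)\|)$. Inverting the class-$\mathcal K_\infty$ function $\underline\alpha_V$ yields the claim.

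The main subtlety, rather than any calculation, is the invariance/continuation step. The constraints only hold on $\Kset$, so one must show that the trajectory never leaves $\Kset$ before the Lyapunov/barrier inequality can be used. This is handled by the first-exit-time argument with compactness, as above. One should also note that $h(x(t))$ and $V(x(t))$ are $C^1$ in $t$, since $\pi$ is continuous and hence $x(\cdot)$ is $C^1$; this justifies applying the comparison lemma.
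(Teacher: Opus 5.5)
Your proposal is correct and follows the paper's proof. Like the paper, you restrict to the event $\{s\le r\}$, which has probability at least $1-\alpha$ by \Cref{thm:split_conformal_validity}, and on that event you rerun the deterministic comparison argument of \Cref{thm:det_ct_cbf,thm:det_ct_clf} realization-wise, using $x(0)\in\Cset$ (resp.\ $x(0)\in\mathcal V$) almost surely. Your first-exit-time argument simply spells out the continuation step that the paper handles by citing a standard result.
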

\Cref{thm:baseline_cr} first appeared in \cite{hsu2025} and is here presented in slightly different form. 
A proof is provided in%
\subonly{ Appendix~\ref{app:event} of~\cite{our_extended}}%
\fulonly{ Appendix~\ref{app:event}}. 
Notably, \Cref{thm:baseline_cr} assumes that the calibration and test trajectories are i.i.d. random trajectories, which  can in practice only be achieved when the policy $\pi$ is fixed a-priori so that calibration and test trajectories follow the same distribution $\mathcal{D}_{\pi}$. This is difficult to achieve as it requires $\pi$
to enforce the CBF constraint \eqref{eq:ct_cbf} (or the CLF constraint \eqref{eq:ct_clf}) with a margin
$r:=s^{[k]}$. The issue is circular: the policy  \(\pi\)  depends on the margin \(r\), while the margin $r$ itself depends on the trajectory distribution $\mathcal{D}_{\pi}$ and thereby also on the policy \(\pi\).
Additionally, fixing the policy \(\pi\) a priori defeats the purpose, as it prevents policy synthesis via the CBF-QP \(\pi_r^{\mathrm{cbf}}(x)\) in \eqref{eq:cbf_qp} and the CLF-QP \(\pi_r^{\mathrm{clf}}(x)\) in \eqref{eq:clf_qp}. Replacing \(r\) in \eqref{eq:cbf_qp} (or \eqref{eq:clf_qp}) by the conformal threshold $s^{[k]}$ therefore violates the fixed-policy assumption in \Cref{thm:baseline_cr}. The next example, for which we provide detailed derivations in%
\subonly{ Appendix~\ref{app:counterexample} of~\cite{our_extended}}%
\fulonly{ Appendix~\ref{app:counterexample}}, shows that this issue is not merely theoretical, but that a CBF-QP $\pi_r^{\mathrm{cbf}}(x)$ that uses the robustness margin
$r:=s^{[k]}$ can actually violate safety guarantees.
\begin{example}\label{prop:counterexample_main}
\begin{fullonly}
Consider system~\eqref{eq:ct_true} with model $\hat{f}(x(t),u(t)):=u(t)$ and model error $\varepsilon(x(t),u(t)):=-(2+x(t))u(t)$, so that $\dot x(t)=-(1+x(t))u(t)$. Let the safe set $\Cset:=\{h(x)\ge 0\}$ be defined by $h(x)=x$, $\U:=\R$, and $x(0)\sim\mu_{\XX}:=\mathrm{Unif}[0,1]$. Fix $T>0$, $\alpha\in(0,1)$, and consider the calibration policy $\pi(x):=-u_0$ with $u_0>0$. Under this calibration policy, we have $\Prob_{\tau\sim\mathcal D_\pi}\bigl(x(t)\in\Cset,\ \forall t\in[0,T]\bigr)=1$, i.e., the system is safe almost surely. Let now $r$ denote the $(1-\alpha)$-quantile of $s^{\ct}(\tau)$ under $\tau\sim\mathcal D_\pi$. Indeed, we obtain $r=u_0\bigl(1+(2-\alpha)e^{u_0T}\bigr)$, see \Cref{app:counterexample} for detailed derivations. If $r$ is now used as the robustness margin in the CBF-QP~\eqref{eq:cbf_qp} with nominal input $u_{\mathrm{nom}}(x)\equiv -u_0$, then the synthesized controller is $\pi_r^{\mathrm{cbf}}(x)=\max\{-u_0,\ r-\gamma x\}$. Moreover, if $0<\gamma<r/2$, then
$\Prob_{\tau\sim\mathcal D_{\pi_r^{\mathrm{cbf}}}}\bigl(s^{\ct}(\tau)\le r\bigr)=0$.
If, in addition, $T(r-\gamma)\ge 1$, then $\Prob_{\tau\sim\mathcal D_{\pi_r^{\mathrm{cbf}}}}\bigl(x(t)\in\Cset,\ \forall t\in[0,T]\bigr)=0$, i.e., the controller renders the system unsafe almost surely, as illustrated in \Cref{fig:prop1}.
\end{fullonly}
\begin{submitonly}
Consider system~\eqref{eq:ct_true} with $\hat{f}(x,u):=u$ and $\varepsilon(x,u):=-(2+x)u$, so that $\dot x(t)=-(1+x(t))u(t)$. Let $h(x)=x$, $\Cset:=\{x:h(x)\ge 0\}$, $\U:=\R$, and $x(0)\sim\mathrm{Unif}[0,1]$. Fix $T>0$, $\alpha\in(0,1)$, and a calibration policy $\pi(x):=-u_0$ with $u_0>0$, under which the system is safe almost surely. Compute the analytical $(1-\alpha)$-quantile of $s^{\ct}(\tau)$  as $r=u_0(1+(2-\alpha)e^{u_0T})$ and use $r$ in the CBF-QP~\eqref{eq:cbf_qp} with $u_{\mathrm{nom}}(x)\equiv -u_0$, yielding $\pi_r^{\mathrm{cbf}}(x)=\max\{-u_0,\,r-\gamma x\}$. Detailed derivations are provided in Appendix~\ref{app:counterexample} of the extended version~\cite{our_extended}. If $0<\gamma<r/2$ and $T(r-\gamma)\ge 1$, then $\Prob_{\tau\sim\mathcal{D}_{\pi_r^{\mathrm{cbf}}}}(x(t)\in\Cset,\,\forall\,t\in[0,T])=0$, i.e., the controller renders the system unsafe almost surely, see \Cref{fig:prop1}. 
\end{submitonly}
 \begin{arxivonly}
    \vspace{-.5cm}
\end{arxivonly}
\end{example}
\begin{submitonly}
    \vspace{-0.4cm}
\end{submitonly}
\begin{compactonly}
    \begin{figure}[htbp!]
        \centering        \captionsetup{font=small,skip=2pt}
        \begin{subfigure}[t]{0.52\linewidth}           \includegraphics[width=\linewidth]{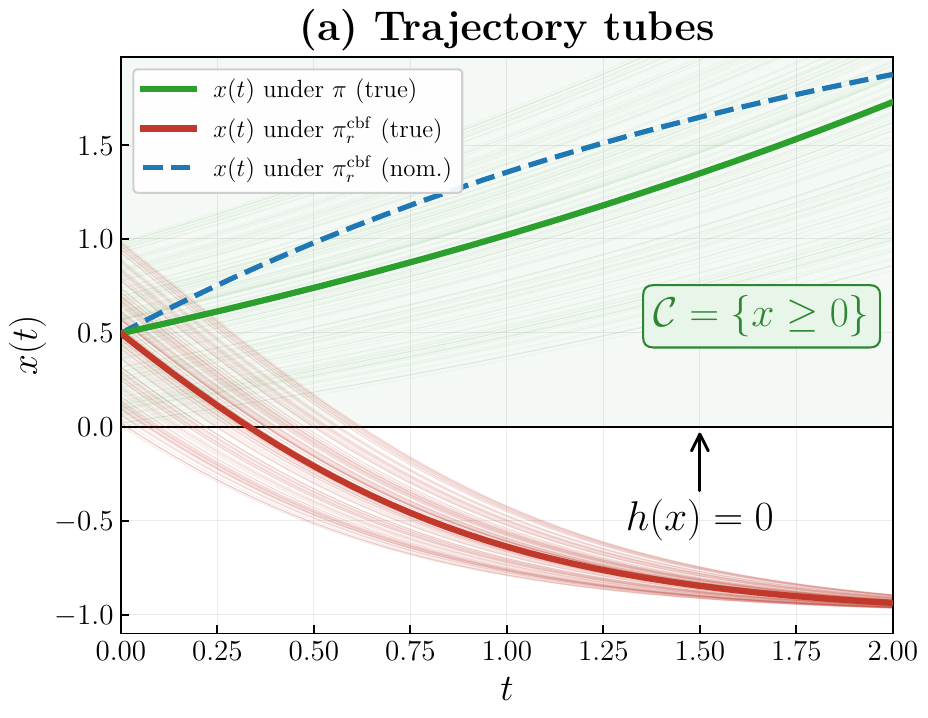}
        \end{subfigure}\hfill
        \begin{subfigure}[t]{0.47\linewidth}
\includegraphics[width=\linewidth]{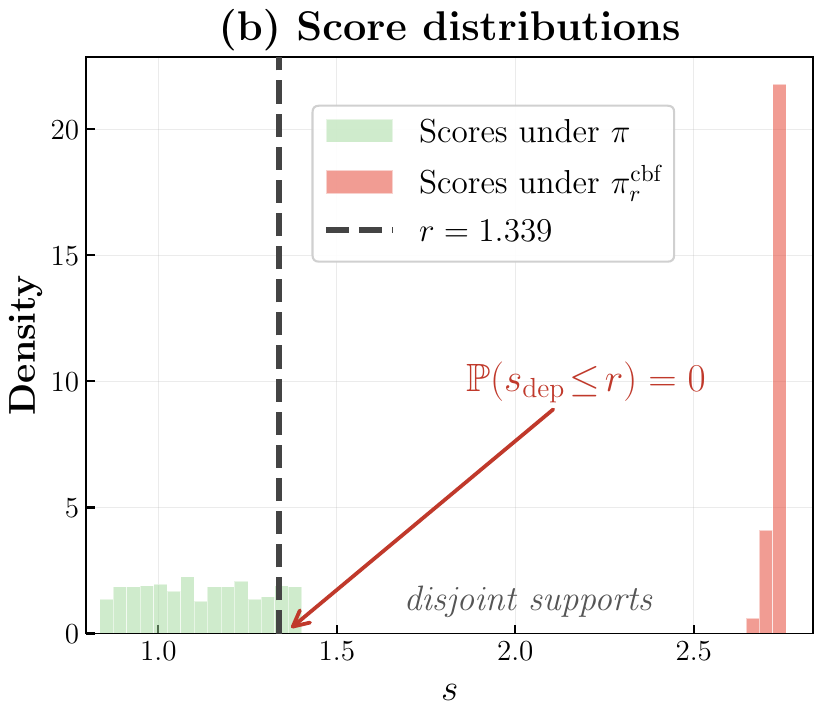}
        \end{subfigure}       \caption{Example~\ref{prop:counterexample_main} with $u_0=0.3$, $T=2$, $\alpha=0.1$, $\gamma=0.5$. \textbf{(a)}~Trajectories under $\pi$ (green, safe) and $\pi_r^{\mathrm{cbf}}$ (red, unsafe). \textbf{(b)}~Calibration and deployment scores have disjoint support. }
        \label{fig:prop1}
        \begin{submitonly}
    \vspace{-.5cm}
\end{submitonly}
    \end{figure}
     \begin{arxivonly}
    \vspace{-.6cm}
\end{arxivonly}
\end{compactonly}

\begin{extendedonly}
    \begin{figure}[htbp!]
        \centering      \captionsetup{font=small,skip=2pt}
        \begin{subfigure}[t]{0.34\linewidth}
    \includegraphics[width=\linewidth]{figures/final/ex1_trajectories.pdf}
        \end{subfigure}\hfill
        \begin{subfigure}[t]{0.31\linewidth}
    \includegraphics[width=\linewidth]{figures/final/ex1_scores.pdf}
        \end{subfigure} \hfill 
        \begin{subfigure}[t]{0.34\linewidth}
     \includegraphics[width=\linewidth]{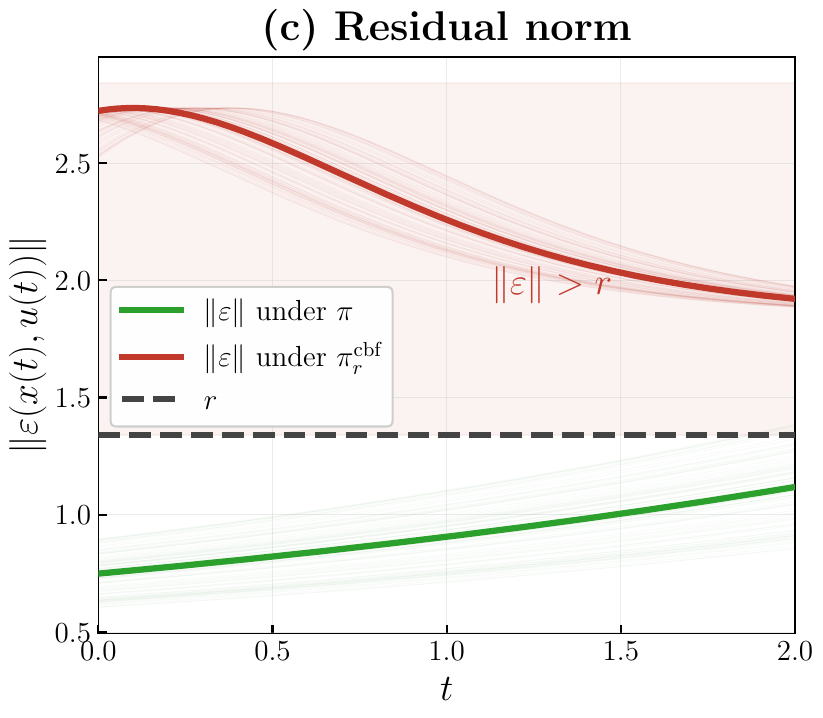}
        \end{subfigure}
\caption{Example~\ref{prop:counterexample_main} with $u_0=0.3$, $T=2$, $\alpha=0.1$, $\gamma=0.5$. \textbf{(a)}~Trajectories under $\pi$ (green, safe) and $\pi_r^{\mathrm{cbf}}$ (red, unsafe). \textbf{(b)}~Calibration and deployment scores have disjoint support. \textbf{(c)}~Residual norm under $\pi_r^{\mathrm{cbf}}$ exceeds $r$ from $t=0$.}
        \label{fig:prop1}
    \end{figure}
\end{extendedonly}
\subsection{Adversarially Robust Conformal Prediction}
\label{sec:arcp}

 To transfer control certificates beyond the  setting in \Cref{thm:baseline_cr}, we
use adversarially robust conformal prediction (ARCP)
\cite{gendler2021arcp}. ARCP provides guarantees under
bounded perturbations of the nonconformity score; here, the ``perturbation'' is the change in
nonconformity score induced by the policy. ARCP does so by increasing the conformal threshold by a nonnegative, possibly calibration-data-dependent, perturbation term $M$ that upper-bounds the change in the nonconformity score. We present a calibration-conditional version of ARCP, relying on results from calibration-conditional CP 
\cite{vovk12conditional,duchi2025sample}, summarized in
\cite[Lemma 2]{lindemann2025formal}.

\begin{lemma}[Adversarially robust conformal prediction
  {\cite{gendler2021arcp,duchi2025sample}}]
\label{lem:arcp}
Let $s,s^{(1)},\dots,s^{(n)}\sim\mathcal{S}$ be i.i.d. random variables,  $D^{\mathrm{cal}}:=\{s^{(i)}\}_{i=1}^n$,  $\alpha,\delta\in(0,1)$, and 
$\bar\alpha:=\alpha-\sqrt{\ln(1/\delta)/(2n)}>0$. Let $M(D^{\mathrm{cal}})\ge 0$ be a non-negative function of the calibration data.  Suppose that $\tilde s$ is a random variable such that $\tilde s\le s+M(D^{\mathrm{cal}})$ almost surely. Then, with $k:=\lceil(1-\bar\alpha)n\rceil$, it holds that
\begin{equation}\label{eq:ARCP}
\Prob_{n}\Bigl\{
    \Prob\bigl(
      \tilde s \le s^{[k]}+M(D^{\mathrm{cal}})
    \mid D^{\mathrm{cal}}\bigr)
    \ge 1-\alpha
  \Bigr\}\ge 1-\delta .
  \footnote{We note that $\Prob_{n}\{\cdot\}$ is the $n$-fold product probability measure of $\Prob(\cdot)$, while $\Prob(\cdot \mid D^{\mathrm{cal}})$ is the probability measure of $\Prob(\cdot)$ conditioned on $D^{\mathrm{cal}}$.} 
\end{equation}
\end{lemma}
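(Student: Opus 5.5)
The argument has two parts. First, a calibration-conditional guarantee for the clean score $s$: with probability at least $1-\delta$ over $D^{\mathrm{cal}}$, we have $\Prob(s\le s^{[k]}\mid D^{\mathrm{cal}})\ge 1-\alpha$. Second, a deterministic monotonicity step that transfers this guarantee to $\tilde s$ through the almost-sure domination $\tilde s\le s+M(D^{\mathrm{cal}})$.

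\emph{Step 1 (calibration-conditional coverage).} Let $F$ be the CDF of $\mathcal S$, and work conditionally on $D^{\mathrm{cal}}$. Since the test score $s$ is independent of $D^{\mathrm{cal}}$, we have $\Prob(s\le s^{[k]}\mid D^{\mathrm{cal}})=F(s^{[k]})$. It therefore suffices to show $\Prob_n\{F(s^{[k]})\ge 1-\alpha\}\ge 1-\delta$.

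To do this, let $q:=\inf\{t: F(t)\ge 1-\alpha\}$ be the lower $(1-\alpha)$-quantile. By right-continuity of $F$, the event $\{s^{[k]}\ge q\}$ implies $F(s^{[k]})\ge 1-\alpha$. Moreover, $\{s^{[k]}<q\}$ means that at least $k$ of the $n$ samples fall strictly below $q$. Each sample does so independently with probability $p:=F(q^-)\le 1-\alpha$.

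Hence the count $B:=\#\{i: s^{(i)}<q\}$ is $\mathrm{Bin}(n,p)$, which is stochastically dominated by $\mathrm{Bin}(n,1-\alpha)$. Hoeffding's inequality then gives
\begin{equation*}
\Prob_n(B\ge k)\le \Prob\bigl(B/n-(1-\alpha)\ge (1-\bar\alpha)-(1-\alpha)\bigr)\le \exp\bigl(-2n(\alpha-\bar\alpha)^2\bigr)=\delta,
\end{equation*}
using $k/n\ge 1-\bar\alpha$ and $\alpha-\bar\alpha=\sqrt{\ln(1/\delta)/(2n)}$. The assumption $\bar\alpha>0$ guarantees $k\le n$, so $s^{[k]}$ is a genuine order statistic among the $n$ calibration scores and no $+\infty$ padding is needed. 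This step is the one I expect to be most delicate: the ties and atoms of $\mathcal S$ must be handled through the left limit $F(q^-)$ and the inequality $p\le 1-\alpha$. Equivalently, one can cite the DKW-type bound summarized in \cite[Lemma 2]{lindemann2025formal}.

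\emph{Step 2 (robust transfer).} On the good event $\{F(s^{[k]})\ge 1-\alpha\}$, which has $\Prob_n$-probability at least $1-\delta$, we use the almost-sure inclusion
\begin{equation*}
\{s\le s^{[k]}\}\subseteq\{s+M(D^{\mathrm{cal}})\le s^{[k]}+M(D^{\mathrm{cal}})\}\subseteq\{\tilde s\le s^{[k]}+M(D^{\mathrm{cal}})\}.
\end{equation*}
Since $M$ is a function of $D^{\mathrm{cal}}$ alone, it is a constant under the conditional measure. The inclusion therefore gives $\Prob(\tilde s\le s^{[k]}+M\mid D^{\mathrm{cal}})\ge \Prob(s\le s^{[k]}\mid D^{\mathrm{cal}})\ge 1-\alpha$. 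Because the good event is contained in the event appearing in \eqref{eq:ARCP}, this proves \eqref{eq:ARCP}.

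Two technical points remain. The first is measurability of the conditional probabilities; it holds because $s^{[k]}$ and $M$ are measurable functions of $D^{\mathrm{cal}}$. The second is that the almost-sure domination must hold jointly with $D^{\mathrm{cal}}$. Both are routine.
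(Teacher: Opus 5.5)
Your proof is correct and has the same structure as the paper's proof. Both first establish calibration-conditional coverage for the clean score $s$ on a good event of $\Prob_n$-probability at least $1-\delta$, and both then use the inclusion $\{s\le s^{[k]}\}\subseteq\{\tilde s\le s^{[k]}+M(D^{\mathrm{cal}})\}$, with $M$ treated as a constant under conditioning. The only difference is that the paper cites this coverage guarantee from \cite{duchi2025sample} via \cite[Lemma~2]{lindemann2025formal}, whereas your Step~1 derives it directly from the binomial count of samples below the lower $(1-\alpha)$-quantile plus Hoeffding's inequality, which is a valid self-contained substitute.
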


ARCP, as originally presented in \cite{gendler2021arcp}, gives marginal guarantees of the form $\Prob_{n+1}(\tilde s\le s^{[k]}+M(D^{\mathrm{cal}}))\ge1-\alpha$. In contrast, \Cref{lem:arcp} gives calibration-conditional guarantees of the form \eqref{eq:ARCP}.
These guarantees are conditional on $D^{\mathrm{cal}}$ and in the form needed for our convergence analysis. The proof, given \fulonly{in Appendix~\ref{app:arcp}}\subonly{in Appendix~\ref{app:arcp} of \cite{our_extended}}, uses  conditional split conformal prediction from \cite{duchi2025sample}, as summarized in \cite[Lemma~2]{lindemann2025formal}, and then uses an  event inclusion induced by the bound $\tilde s\le s+M(D^{\mathrm{cal}})$. Accordingly, the order statistic is taken at the sample-conditional level $k=\lceil(1-\bar\alpha)n\rceil$, rather than at the marginal split conformal level $\lceil(1-\alpha)(n+1)\rceil$. We note that \Cref{thm:split_conformal_validity,thm:baseline_cr} could also be stated in calibration-conditional terms, but this alone would not address the circularity and distribution-shift issues discussed above.

\section{Conformal rCBF and rCLF  via Iterative Policy Updates}
\label{sec:algorithm}


To address the aforementioned circular dependency between the policy $\pi$ and the margin $r$, we reframe the problem into an iterative policy update framework. We start by considering an episode index
$j\in \{0\}\cup \N$. At episode $j$, a robustness margin $r_j\ge 0$ is used --- important details on the construction of $r_j$  follow below --- to
synthesize the policy $\pi_j:=\pi_{r_j}$ via the CBF-QP \eqref{eq:cbf_qp} or
the CLF-QP \eqref{eq:clf_qp}.  In each episode, we then sample
$n_j$ initial states 
$x_j(0),x_j^{(1)}(0),\dots,x_j^{(n_j)}(0)\stackrel{\mathrm{i.i.d.}}{\sim}
\mu_{\XX}$ from which we collect the  state-input 
trajectories $\tau_j,\tau_j^{(1)},\hdots, \tau_j^{(n_j)}\stackrel{\mathrm{i.i.d.}}{\sim}
\mathcal{D}_{\pi_j}$, where $\mathcal{D}_{\pi_j}$ denotes again the trajectory distribution under policy $\pi_j$.  We hence make the following standing assumption.

\begin{assumption}[]
\label{ass:within_episode_iid}
At each episode $j\in \{0\}\cup \N$,  the  initial conditions and the associated state-input trajectories are such that
$x_j(0),x_j^{(1)}(0),\dots,x_j^{(n_j)}(0)\stackrel{\mathrm{i.i.d.}}{\sim}\mu_{\XX}$ and  $\tau_j,\tau_j^{(1)},\hdots, \tau_j^{(n_j)}\stackrel{\mathrm{i.i.d.}}{\sim}
\mathcal{D}_{\pi_j}$, respectively. The  state-input trajectories $\tau_j,\tau_j^{(1)},\ldots,\tau_j^{(n_j)}$ are defined over the time interval $[0,T]$, i.e., $\tau_j(t),\tau_j^{(1)}(t),\ldots,\tau_j^{(n_j)}(t)$ exist for all $t\in[0,T]$.
\end{assumption}
Following the ARCP methodology from \Cref{sec:arcp},  we define the test nonconformity score $s_j := s^{\ct}(\tau_{j})\sim \mathcal{S}_{\pi_j}$ and the calibration nonconformity scores $s_j^{(i)} := s^{\ct}(\tau_{j}^{(i)})\sim \mathcal{S}_{\pi_j}$, where $\mathcal{S}_{\pi_j}$ denotes the nonconformity score distributions. We then define the calibration dataset $D_j^{\mathrm{cal}} := \{s_j^{(i)}\}_{i=1}^{n_j}$ and compute the conformal threshold $q_j := s_j^{[k_j]}$, where $s_j^{[k_j]}$ is the $k_j$th order statistic of the finite set $D_j^{\mathrm{cal}}$ and $k_j:=\lceil(1-\bar{\alpha}_j)n_j\rceil$ with $\bar\alpha_j
  := \alpha - \sqrt{\ln(1/\delta)/(2n_j)}$.

At episode $j=0$, we assume to be given a margin $r_0$ along with the corresponding policy $\pi_0$, which can but is not required to result in safety or stability initially. The main problem that we address in this paper is a framework for iteratively updating $r_{j+1}$
\begin{extendedonly}
such that
\begin{align*}
&\textbf{Episodic validity:}\;\Prob_{n_j}\Bigl\{
    \Prob\bigl(
      s_{j+1} \le r_{j+1}\mid D_j^{\mathrm{cal}}\bigr)
    \ge 1-\alpha
  \Bigr\}\ge 1-\delta,\quad \forall j\in\N, \\
&\textbf{Minimized margins:}\; r_{j+1}\text{ is minimized for all } j\in\N, \\
&\textbf{Convergence of margins:}
\; r_{j+1}\to r^*\text{ eventually for $r^*>0$.}
\end{align*}
\end{extendedonly}
\begin{arxivonly}
such that
\begin{align*}
&\textbf{Episodic validity:}\;\Prob_{n_j}\Bigl\{
    \Prob\bigl(
      s_{j+1} \le r_{j+1}\mid D_j^{\mathrm{cal}}\bigr)
    \ge 1-\alpha
  \Bigr\}\\[-0.2em]
&\qquad \qquad\qquad\qquad\qquad\qquad\qquad\ge 1-\delta,\quad \forall j\in\N, \\
&\textbf{Minimized margins:}\; r_{j+1}\text{ is minimized for all } j\in\N, \\
&\textbf{Convergence of margins:}
\; r_{j+1}\to r^*\text{ eventually for $r^*>0$.}
\end{align*}
\end{arxivonly}
\begin{submitonly}
to achieve: (i)~\emph{episodic validity}: $\Prob_{n_j}\bigl\{
    \Prob(
      s_{j+1} \le r_{j+1}\mid D_j^{\mathrm{cal}})
    \ge 1-\alpha
  \bigr\}\ge 1-\delta$  for all $j\in\N$; (ii)~\emph{minimized margins}: $r_{j+1}$ is as small as possible for all $j\in\N$; and (iii)~\emph{convergence}: $r_{j+1}\to r^*$ for some $r^*>0$.
\end{submitonly}

  When per-episode validity holds, we can achieve episodic safety/stability with high probability, similarly to \Cref{thm:baseline_cr}.


\subsection{Bounding Policy-Induced Distribution Shifts}
\label{sec:policy_shift}

The conformal threshold $q_j$, obtained from calibration data under the policy $\pi_j$, is valid only for  $\pi_j$  in the current episode  so that $\Prob_{n_j}\Bigl\{
    \Prob\bigl(
       s_j \le q_j
    \mid D_j^{\mathrm{cal}}\bigr)
    \ge 1-\alpha
  \bigr\}\ge 1-\delta$. However, $q_j$  is not directly valid for the
policy $\pi_{j+1}$ in the next episode, because the trajectory distribution changes from
$\mathcal{D}_{\pi_j}$ to $\mathcal{D}_{\pi_{j+1}}$. We now instantiate \Cref{lem:arcp} in this episodic framework  to obtain probabilistic guarantees for $s_{j+1}$ in terms of $q_j$ and an offset $M_{j+1}$ that describes the discrepancy between the distributions $\mathcal{S}_{\pi_j}$ and $\mathcal{S}_{\pi_{j+1}}$. 

It remains to construct $M_{j+1}$.  Recall that the nonconformity score $s^{\ct}(\tau)$ is a function of the state-input trajectory $\tau$, which is  a function of the initial state $x(0)$ and the policy $\pi$. We make this dependency explicit by writing $s^{\ct}(x(0),\pi)$.
\begin{assumption}
\label{ass:domination}
Let the initial condition be $x_0\sim \mu_{\XX}$. For any two policies $\pi'$ and $\pi''$, there exists a nonnegative, bounded function $\rho(\pi',\pi'')$ such that
\begin{equation*}
 s^{\ct}(x_0,\pi') \le s^{\ct}(x_0,\pi'') + \rho(\pi',\pi'')
 \;\;\text{almost surely}.
\end{equation*}
\end{assumption}
A sufficient condition to ensure \Cref{ass:domination} is when the functions $f(x,u)$, $\hat{f}(x,u)$, and $\pi_r(x)$ are Lipschitz continuous (details are provided in \Cref{ass:ct_lip}). The key idea for transferring guarantees from episode $j$ to episode $j+1$ is to use \Cref{lem:arcp} and compare the policies $\pi_{j+1}$ and $\pi_j$ using the constant $M_{j+1}:=\rho(\pi_{j+1},\pi_j)$. As $\pi_j$ will depend on the calibration data $D_j^{\mathrm{cal}}$, the constant $M_{j+1}$ will also depend on $D_j^{\mathrm{cal}}$.  The next result is proven in%
\subonly{ Appendix~\ref{app:event} of~\cite{our_extended}}%
\fulonly{ Appendix~\ref{app:event}}.



\begin{theorem}
\label{thm:srcr_validity}
Let \Cref{ass:within_episode_iid,ass:domination} hold. Given an episode $j\in\{0\}\cup \N$, the induced nonconformity scores $s_j,s^{(1)}_j,\dots,s^{(n_j)}_j\sim \mathcal{S}_{\pi_j}$, and the miscoverage levels $\alpha,\delta\in(0,1)$. If $M_{j+1}:=\rho(\pi_{j+1},\pi_j)$, then 
 \begin{align*}
\Prob_{n_j}\Bigl\{
    \Prob\bigl(
      s_{j+1} \le q_j+M_{j+1}
    \mid D^{\mathrm{cal}}_j\bigr)
    \ge 1-\alpha
  \Bigr\}\ge 1-\delta.
\end{align*}
Consequently, every robustness margin $r_{j+1}$ satisfying $r_{j+1} \ge q_j + M_{j+1}$ guarantees
\begin{equation}
\Prob_{n_j}\Bigl\{
    \Prob\bigl(
      s_{j+1} \le r_{j+1}
    \mid D^{\mathrm{cal}}_j\bigr)
    \ge 1-\alpha
  \Bigr\}\ge 1-\delta.
  \label{eq:srcr_validity}
\end{equation}
\end{theorem}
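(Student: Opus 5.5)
The plan is to reduce the statement to \Cref{lem:arcp}, applied with episode-$j$ quantities, and then check that the episode-$(j+1)$ score is dominated in the sense the lemma requires. Fix the episode $j$. By \Cref{ass:within_episode_iid}, the calibration scores $s_j^{(1)},\dots,s_j^{(n_j)}$ and the test score $s_j$ are i.i.d. from $\mathcal{S}_{\pi_j}$. We therefore instantiate \Cref{lem:arcp} with $\mathcal{S}:=\mathcal{S}_{\pi_j}$, $n:=n_j$, $s:=s_j$, $D^{\mathrm{cal}}:=D_j^{\mathrm{cal}}$, and $M(D^{\mathrm{cal}}):=M_{j+1}=\rho(\pi_{j+1},\pi_j)$. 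Since $k_j=\lceil(1-\bar\alpha_j)n_j\rceil$, the order statistic $s^{[k]}$ in the lemma is exactly $q_j$.

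The nontrivial step is to identify the perturbed score $\tilde s$ with $s_{j+1}$ and to verify the almost-sure domination $\tilde s\le s+M(D^{\mathrm{cal}})$. I would couple the two episodes through a common initial condition. Both $s_j=s^{\ct}(x_0,\pi_j)$ and $s_{j+1}=s^{\ct}(x_0,\pi_{j+1})$ are deterministic functions of an initial state, and that state has the same law $\mu_{\XX}$ in every episode. So we may realize $s_{j+1}$ as $s^{\ct}(x_j(0),\pi_{j+1})$, with $x_j(0)$ the test initial state. This leaves the conditional law of $s_{j+1}$ given $D_j^{\mathrm{cal}}$ unchanged, because $\pi_{j+1}$ depends only on data up to episode $j$ and $x_j(0)$ is independent of $D_j^{\mathrm{cal}}$.

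Conditionally on $D_j^{\mathrm{cal}}$, the policies $\pi_j,\pi_{j+1}$ are fixed, so \Cref{ass:domination} with $\pi'=\pi_{j+1}$ and $\pi''=\pi_j$ gives $s_{j+1}\le s_j+\rho(\pi_{j+1},\pi_j)$ almost surely. Here $\rho$ is nonnegative and bounded, so $M\ge 0$ is a legitimate data-dependent offset. I expect this coupling and measurability step, i.e. that $M_{j+1}$ is a function of $D_j^{\mathrm{cal}}$ alone and that the domination holds conditionally, to be the main point requiring care. I would handle it by working throughout conditionally on $D_j^{\mathrm{cal}}$ and noting that the lemma's proof only uses the event inclusion $\{s\le s^{[k]}\}\subseteq\{\tilde s\le s^{[k]}+M\}$.

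With the domination in hand, \Cref{lem:arcp} yields the first display directly. For the consequence, take any $r_{j+1}\ge q_j+M_{j+1}$. Then $\{s_{j+1}\le q_j+M_{j+1}\}\subseteq\{s_{j+1}\le r_{j+1}\}$, so the inner conditional probability can only increase. Hence the outer event $\{\Prob(\cdot\mid D_j^{\mathrm{cal}})\ge1-\alpha\}$ for the first bound is contained in the corresponding event for $r_{j+1}$, and \eqref{eq:srcr_validity} follows by monotonicity of $\Prob_{n_j}$. If $r_{j+1}$ itself depends on $D_j^{\mathrm{cal}}$, the inclusion holds pointwise in the calibration data, so the argument is unaffected.
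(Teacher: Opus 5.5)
Your proposal is correct and follows essentially the same route as the paper. You couple $s_{j+1}$ to the episode-$j$ test initial condition $x_j(0)$, invoke \Cref{ass:domination} to get $s^{\ct}(x_j(0),\pi_{j+1})\le s_j+M_{j+1}$, apply \Cref{lem:arcp} with $q_j=s_j^{[k_j]}$, and conclude for $r_{j+1}\ge q_j+M_{j+1}$ by event inclusion. The paper does the transfer back to $s_{j+1}$ at the end rather than up front, and it states explicitly that $x_{j+1}(0)$ is independent of $D_j^{\mathrm{cal}}$ with law $\mu_{\XX}$, which your coupling step also needs but only uses implicitly.
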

\subsection{Algorithmically Updating the Robustness Margin}
\label{sec:margin_update}

The choice of $r_{j+1}$ from \Cref{thm:srcr_validity} provides validity guarantees for the policy $\pi_{j+1}$. However, solving the inequality $r_{j+1} \ge q_j + M_{j+1}$ for $r_{j+1}$ is generally challenging due to the dependence of $M_{j+1}=\rho(\pi_{j+1},\pi_j)$ on $\pi_{j+1}$ and consequently also on $r_{j+1}$. This leads to an implicit fixed-point problem since $r_{j+1}$ depends on $\pi_{j+1}$, which itself is a function of $r_{j+1}$. In the remainder, we propose an algorithmic solution for the case where the functions $f(x,u)$, $\hat{f}(x,u)$, and $\pi_r(x)$ are Lipschitz continuous. 

\begin{assumption}
\label{ass:ct_lip}
There exist compact sets $\Rset:=[r_{\min},r_{\max}]\subseteq\R_{\ge 0}$ and $ \Oset\subseteq \Kset$ with $\mathrm{int}(\Oset)\supseteq \XX $ such that, for  $r\in \Rset$ and $x(0)\in\XX$, the solution $x(t)$ to \eqref{eq:ct_true} under $\pi_r(x)$   is such that $x(t)\in\Oset$ for all $t\in[0,T]$. Additionally, the functions $f(x,u)$, $\hat{f}(x,u)$, and $\pi_r(x)$ are Lipschitz continuous on $\Oset\times\U$, i.e., there exist Lipschitz constants $L_x,L_u,L_{\hat{f},x},L_{\hat{f},u},L_\pi\ge 0$ such that:
\begin{fullonly}
\begin{enumerate}[label=(\roman*),leftmargin=2.0em]
  \item 
  $\norm{f(x,u)-f(x',u')} \le L_x\norm{x-x'}+L_u\norm{u-u'}$ for all $(x,u),(x',u')\in \Oset\times\U$,
  \item $\norm{\hat{f}(x,u)-\hat{f}(x',u')} \le L_{\hat{f},x}\norm{x-x'}+L_{\hat{f},u}\norm{u-u'}$ for all $(x,u),(x',u')\in \Oset\times\U$, and
  \item $\norm{\pi_r(x)-\pi_r(x')}\le L_{\pi}\norm{x-x'}$ for all $x,x'\in\Oset$.
\end{enumerate}
\end{fullonly}
\begin{submitonly}
(i)~$\norm{f(x,u)-f(x',u')} \le L_x\norm{x-x'}+L_u\norm{u-u'}$, (ii)~$\norm{\hat{f}(x,u)-\hat{f}(x',u')} \le L_{\hat{f},x}\norm{x-x'}+L_{\hat{f},u}\norm{u-u'}$, and (iii)~$\norm{\pi_r(x)-\pi_r(x')}\le L_{\pi}\norm{x-x'}$, for all $(x,u),(x',u')\in \Oset\times\U$ and $x,x'\in\Oset$.
\end{submitonly}
\end{assumption}

\begin{remark}[Role of $\Oset$]
\label{rem:Omega_X0}
The compact set $\Oset$ contains trajectories $x(t)$, which are generated by the policy $\pi_r$  for any $r\in\Rset$, for all times $t\in[0,T]$.  The reason for introducing $\Rset$ is that, in order to compute $r_{j+1}$, we must first quantify how the nonconformity score changes when the policy changes from $\pi_{r_j}$ to $\pi_{r}$ for arbitrary values of $r\in\Rset$. Consequently, trajectories generated by any policy $\pi_r$ with $r\in\Rset$ remain in $\Oset\subseteq\Kset$ in which the rCBF and rCLF constraints as well as the Lipschitz conditions (i), (ii), and (iii) from \Cref{ass:ct_lip} are evaluated. The set $\Oset$ is chosen to be compact so that the Lipschitz constants are finite; requiring global Lipschitz bounds on all of $\X$ would be unnecessary conservative.
\end{remark}

Note that conditions (i) and (ii) in \Cref{ass:ct_lip} imply that the model error $\res(x,u) = f(x,u) - \hat{f}(x,u)$ is also Lipschitz continuous on $\Oset\times\U$ with constants $L_{\res,x}\le L_x + L_{\hat{f},x}$ and $L_{\res,u}\le L_u + L_{\hat{f},u}$. Additionally, condition (iii) holds for strongly regular quadratic programs, which can be shown via the QP sensitivity analysis provided in%
\subonly{ Appendix~\ref{app:qp} of~\cite{our_extended}}%
\fulonly{ Appendix~\ref{app:qp}}. 

We can now compute a constant $\beta_T>0$ such that, for every $r,r'\in\Rset$ and every initial condition $x(0)\sim\mu_{\XX}$,
\begin{equation}
  \bigl|s^{\ct}(x(0),\pi_{r'})-s^{\ct}(x(0),\pi_r)\bigr|
  \le \beta_T\,\CzNorm{\pi_{r'}-\pi_r}{\Oset},
  \label{eq:betaT_main}
\end{equation}
where $\CzNorm{\pi_{r'}-\pi_r}{\Oset}:=\sup_{x\in\Oset}\norm{\pi_{r'}(x)-\pi_r(x)}$. The constant $\beta_T$ depends on the horizon $T$ and the Lipschitz constants from \Cref{ass:ct_lip}; its derivation via a Gr\"onwall inequality argument is shown in%
\subonly{ Appendix~\ref{app:sensitivity} of~\cite{our_extended}}%
\fulonly{ Appendix~\ref{app:sensitivity}}. The constant $\beta_T$ grows exponentially in $T$, so it can be conservative for long horizons. One may instead estimate $\hat\beta_T\ge\beta_T$ from calibration data~\cite{huang2023sample,fazlyab2019efficient} with confidence $1-\delta_\beta$; since $M_{j+1}$ may depend on $D_j^{\mathrm{cal}}$ (see \Cref{thm:srcr_validity}), this is valid and degrades only the outer confidence from $1-\delta$ to $1-\delta-\delta_\beta$.

\textbf{Implicit solution.} Following~\eqref{eq:betaT_main}, the choice of $\rho(\pi_{j+1},\pi_j):=\beta_T\,\CzNorm{\pi_{j+1}-\pi_j}{\Oset}$ satisfies \Cref{ass:domination}. Motivated by \Cref{thm:srcr_validity}, the smallest valid robustness margin $r_{j+1}$ for the next episode is any minimizer of
\begin{equation}
  r_{j+1}
  \in\argmin\bigl\{r:\ r\ge q_j
  + \beta_T\,\CzNorm{\pi_{r}-\pi_{r_j}}{\Oset}\bigr\}.
  \label{eq:implicit_update}
\end{equation}
As the right-hand side of~\eqref{eq:implicit_update} depends on $r$ through $\CzNorm{\pi_r-\pi_{r_j}}{\Oset}$, which is a continuous function of $r$, the feasible set is determined by $g_j(r):=r-q_j-\beta_T\,\CzNorm{\pi_r-\pi_{r_j}}{\Oset}$ on the compact set $\Rset$. Thus, it can be found by any standard scalar root-finding method, such as grid search.

\textbf{Explicit solution.} We can convert the  inequality~\eqref{eq:implicit_update} to an explicit rule if the policy $\pi_r$ is Lipschitz continuous in $r$. 
\begin{assumption}\label{ass4}
     The function $\pi_r(x)$ is Lipschitz continuous on $\Rset$, i.e., there exists a Lipschitz constant $L_U\ge 0$ such that $\CzNorm{\pi_r-\pi_{r'}}{\Oset}
  \le L_U\,|r-r'|$ for all $r,r'\in\Rset$.
\end{assumption}
In%
\subonly{ Appendix~\ref{app:qp} of~\cite{our_extended},}%
\fulonly{ Appendix~\ref{app:qp},} we provide sufficient conditions on quadratic programs such that Assumption~\ref{ass4} holds for $\pi_r$. Inserting the Lipschitz bound into~\eqref{eq:betaT_main} yields the  bound
\begin{fullonly}
\begin{equation}
  \bigl|s^{\ct}(x(0),\pi_{r'})-s^{\ct}(x(0),\pi_r)\bigr|
  \le \kappaSR\, |r'-r|,
  \label{eq:kappa_score_shift}
\end{equation}
where $\kappaSR:=\beta_TL_U$ (see \subonly{ Appendix~\ref{app:sensitivity} of~\cite{our_extended}}%
\fulonly{ Appendix~\ref{app:sensitivity}}).
Equation~\eqref{eq:kappa_score_shift} now provides an alternative to equation~\eqref{eq:implicit_update} via the update
\end{fullonly}
\begin{submitonly}
\refstepcounter{equation}\label{eq:kappa_score_shift}%
$|s^{\ct}(x(0),\pi_{r'})-s^{\ct}(x(0),\pi_r)|
  \le \kappaSR\, |r'-r|$,
where $\kappaSR:=\beta_TL_U$.
This provides an alternative to~\eqref{eq:implicit_update} via the update
\end{submitonly}
\begin{fullonly}
\begin{equation}
  r_{j+1}\in\argmin\ \bigl\{r:\ r\ge q_j+\kappaSR\,|r-r_j|\bigr\}.
  \label{eq:implicit_solver}
\end{equation}
\end{fullonly}
\begin{submitonly}
\refstepcounter{equation}\label{eq:implicit_solver}%
$r_{j+1}\in\argmin\{r:\ r\ge q_j+\kappaSR\,|r-r_j|\}$.
\end{submitonly}
Every solution of\fulonly{~\eqref{eq:implicit_solver}}\subonly{ this update} satisfies the condition in \Cref{thm:srcr_validity} and therefore provides the same guarantees. When $\kappaSR<1$, the solution $r_{j+1}$\fulonly{ to~\eqref{eq:implicit_solver}} can be computed in closed-form as
\begin{fullonly}
\begin{equation}
  r_{j+1}
  :=
  \begin{cases}
    \dfrac{q_j-\kappaSR r_j}{1-\kappaSR}, & q_j\ge r_j,\\[0.85em]
    \dfrac{q_j+\kappaSR r_j}{1+\kappaSR}, & q_j<r_j.
  \end{cases}
  \label{eq:explicit_update}
\end{equation}
\end{fullonly}
\begin{submitonly}
\begin{equation}
  r_{j+1}
  := \frac{q_j-\kappaSR r_j}{1-\kappaSR}\ \text{if } q_j\ge r_j,
  \quad
  \frac{q_j+\kappaSR r_j}{1+\kappaSR}\ \text{if } q_j<r_j,
  \label{eq:explicit_update}
\end{equation}
\end{submitonly}
see \cite[Lemma 3]{mirzaeedodangeh2025safe}. Whenever we use \eqref{eq:explicit_update}, we assume that $r_{j+1}$ remains in $\Rset$, i.e., $r_{j+1} \in \Rset$ for all $j \in \{0\}\cup\mathbb{N}$.

\section{Theoretical Guarantees and  Convergence}
\label{sec:guarantees}

This section equips the iterative algorithm
from \Cref{sec:algorithm} with safety and stability guarantees. We further analyze convergence of the explicit update rule in \Cref{eq:explicit_update}.

\subsection{Conformal Safety and Stability Certificates}
\label{sec:certificates}


Using the update conditions for $r_{j+1}$ from \Cref{thm:srcr_validity}, we show invariance of $\mathcal{C}$ and stability of $\mathcal{V}$ in
\Cref{thm:ct_cbf_certificate}. The proofs are presented in%
\subonly{ Appendix~\ref{app:event} of~\cite{our_extended}}%
\fulonly{ Appendix~\ref{app:event}}.
\begin{theorem}[Conformal safety and stability certificates]
\label{thm:ct_cbf_certificate}
Let the conditions of \Cref{thm:srcr_validity} hold and $r_{j+1} \ge q_j + M_{j+1}$.

 \emph{(Safety).} Let $h(x)$ be a robust CBF on $\Kset$ with decay rate $\gamma$ and margin
$r_{j+1}$. Assume that $\mathcal{C}$ is compact. Furthermore, let $\pi_{{j+1}}(x)$ be a continuous function that enforces the CBF constraint \eqref{eq:ct_cbf} for all $x\in\Kset$. If $\mu_{\XX}(\Cset)=1$, then
\begin{extendedonly}
\begin{equation*}
  \Prob_{n_j}\Bigl\{
    \Prob\bigl(x_{j+1}(t)\in\Cset\text{ for all } t\in[0,T]
    \mid D_j^{\mathrm{cal}}\bigr)\ge 1-\alpha
  \Bigr\}\ge 1-\delta.
\end{equation*}
\end{extendedonly}
\begin{arxivonly}
\begin{multline*}
  \Prob_{n_j}\!\Bigl\{
    \Prob\bigl(x_{j+1}(t)\in\Cset,
    \ \forall t\in[0,T]\mid D_j^{\mathrm{cal}}\bigr)\\
    \ge 1-\alpha
  \Bigr\}\ge 1-\delta .
\end{multline*}
\end{arxivonly}
\begin{submitonly}
$\Prob_{n_j}\{
    \Prob(x_{j+1}(t)\in\Cset,\
    \forall t\in[0,T]
    \mid D_j^{\mathrm{cal}})
    \ge 1-\alpha
  \} \ge 1-\delta$.
\end{submitonly}
 \emph{(Stability).} Let $V(x)$ be a robust CLF on $\Kset$  with decay rate $c$ and margin
$r_{j+1}$. Assume that $\mathcal{V}$ is compact. Furthermore, let $\pi_{{j+1}}(x)$ be a continuous function that enforces the CLF constraint \eqref{eq:ct_clf} for all $x\in\Kset$. If $\mu_{\XX}(\mathcal V)=1$, then
\begin{extendedonly}
\begin{equation*}
  \Prob_{n_j}\Bigl\{
    \Prob\bigl(
      \|x_{j+1}(t)\|
      \le\underline\alpha_V^{-1}\bigl(
        e^{-ct}\overline\alpha_V(\|x_{j+1}(0)\|)\bigr)\text{ for all } t\in[0,T]
    \mid D_j^{\mathrm{cal}}\bigr)\ge 1-\alpha
  \Bigr\}\ge 1-\delta.
\end{equation*}
\end{extendedonly}
\begin{arxivonly}
\begin{multline*}
  \Prob_{n_j}\!\Bigl\{
    \Prob\bigl(
      \|x_{j+1}(t)\|
      \le\underline\alpha_V^{-1}\bigl(
        e^{-ct}\overline\alpha_V(\|x_{j+1}(0)\|)\bigr),\\
    \forall t\in[0,T]
    \mid D_j^{\mathrm{cal}}\bigr)
    \ge 1-\alpha
  \Bigr\}
  \ge 1-\delta.
\end{multline*}
\end{arxivonly}
\begin{submitonly}
$\Prob_{n_j}\{
    \Prob(
      \|x_{j+1}(t)\|
      \le\underline\alpha_V^{-1}(
        e^{-ct}\overline\alpha_V(\|x_{j+1}(0)\|)),\
    \forall t\in[0,T]
    \mid D_j^{\mathrm{cal}})
    \ge 1-\alpha
  \} \ge 1-\delta$.
\end{submitonly}
\subonly{We present discrete-time analogs in Appendix~\ref{app:dt_certificates} of~\cite{our_extended}.}
\end{theorem}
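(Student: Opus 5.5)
The proof reduces both claims to \Cref{thm:srcr_validity} through a deterministic event inclusion. The probabilistic part is already done: since $r_{j+1}\ge q_j+M_{j+1}$, \eqref{eq:srcr_validity} gives that, with $\Prob_{n_j}$-probability at least $1-\delta$ over $D_j^{\mathrm{cal}}$, the conditional probability $\Prob(s_{j+1}\le r_{j+1}\mid D_j^{\mathrm{cal}})$ is at least $1-\alpha$. The remaining work is a pathwise statement. For every realization of $x_{j+1}(0)$ in $\Cset$ (respectively $\mathcal V$) with $s_{j+1}=s^{\ct}(\tau_{j+1})\le r_{j+1}$, the trajectory satisfies the safety (respectively stability) property on $[0,T]$. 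This gives
\[
\{s_{j+1}\le r_{j+1}\}\cap\{x_{j+1}(0)\in\Cset\}\subseteq\{x_{j+1}(t)\in\Cset,\ \forall t\in[0,T]\}.
\]
Because $\mu_{\XX}(\Cset)=1$ and $x_{j+1}(0)$ is drawn from $\mu_{\XX}$ independently of $D_j^{\mathrm{cal}}$ (\Cref{ass:within_episode_iid}), the intersection with $\{x_{j+1}(0)\in\Cset\}$ costs no conditional probability. Monotonicity of $\Prob(\cdot\mid D_j^{\mathrm{cal}})$ then transfers the inner bound $1-\alpha$, and the outer event inclusion transfers the $1-\delta$ bound.

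For the deterministic step (safety), fix a realization with $\sup_{t\in[0,T]}\|\res(x(t),u(t))\|\le r_{j+1}$ and $x(0)\in\Cset$. As long as $x(t)\in\Kset$, the policy $\pi_{j+1}$ enforces \eqref{eq:ct_cbf} with margin $r_{j+1}$. Cauchy--Schwarz then gives $\dot h(x(t))=\ip{\nabla h}{\hat f}+\ip{\nabla h}{\res}\ge -\gamma h(x(t))$. The comparison lemma yields $h(x(t))\ge e^{-\gamma t}h(x(0))\ge 0$, i.e., $x(t)\in\Cset$.

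The care needed is the ``as long as $x(t)\in\Kset$'' qualifier. Let $t^*:=\sup\{t\le T: x(s)\in\Cset\ \forall s\le t\}$. By continuity of the trajectory, and since $\Cset\subseteq\Kset$ with $\Kset$ open, the trajectory stays in $\Kset$ on $[0,t^*+\eta)$ for some $\eta>0$. Applying the comparison argument on that interval shows $h\ge0$ there, which contradicts $t^*<T$. Existence of the trajectory on $[0,T]$ is given by \Cref{ass:within_episode_iid}, and compactness of $\Cset$ together with continuity of $\pi_{j+1}$ ensures the solution is well defined. I would simply invoke the deterministic result referenced in Appendix~\ref{app:det}, which the paper states covers exactly the trajectory-level bound \eqref{eq:traj_residual_bound}.

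The stability case is identical. With $\ip{\nabla V}{\res}\le\|\nabla V\|r_{j+1}$ one gets $\dot V\le -cV$, hence $V(x(t))\le e^{-ct}V(x(0))\le V_{\max}$. This keeps $x(t)\in\mathcal V\subseteq\Kset$ by the same continuation argument. The sandwich bounds $\underline\alpha_V(\|x\|)\le V(x)\le\overline\alpha_V(\|x\|)$ then give $\|x(t)\|\le\underline\alpha_V^{-1}(e^{-ct}\overline\alpha_V(\|x(0)\|))$.

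The main obstacle is not analytic but measure-theoretic bookkeeping: making sure the conditioning on $D_j^{\mathrm{cal}}$ is legitimate. The margin $r_{j+1}$ and the policy $\pi_{j+1}$ are functions of $D_j^{\mathrm{cal}}$, so the pathwise inclusion must hold for each fixed value of the calibration data. It does, because the argument above is uniform in $r_{j+1}$. The safety and stability events must also be measurable, which follows from the measurability of the supremum (Appendix~\ref{app:ct_meas}) and continuity of trajectories.
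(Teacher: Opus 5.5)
Your proposal is correct and follows the paper's proof. Like the paper, you take the event $\mathcal E_{j+1}=\{s_{j+1}\le r_{j+1}\}$ from \eqref{eq:srcr_validity}, show pathwise via the comparison argument of Appendix~\ref{app:det} that safety (resp.\ stability) holds on it when $x_{j+1}(0)\in\Cset$ (resp.\ $\mathcal V$) almost surely, and transfer the inner $1-\alpha$ and outer $1-\delta$ bounds by monotonicity; your explicit continuation argument and your use of independence from $D_j^{\mathrm{cal}}$ to show the null set $\{x_{j+1}(0)\notin\Cset\}$ costs no conditional probability are more careful than the paper's, but the route is the same.
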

\fulonly{We present the discrete-time analog in Appendix~\ref{app:dt_certificates}.} 

\subsection{Convergence of the Explicit Update Rule}
\label{sec:convergence}

We analyze the explicit update rule~\eqref{eq:explicit_update} as a stochastic recursion. For each $r\in\Rset$, recall that $\tau_r\sim\mathcal D_{\pi_r}$ denotes the trajectory of the system \eqref{eq:ct_true}  from the initial condition $x(0)\sim\mu_{\XX}$ under the policy $\pi_r$. Let 
\begin{equation}
    S_r:=s^{\ct}(\tau_r)
    \label{eq:score_random_variable_r}
\end{equation}
be the corresponding nonconformity score random variable, where $s^{\ct}(\cdot)$ is the nonconformity score in \eqref{eq:score_ct}. For each $p\in(0,1)$, define the population $p$-quantile of $S_r$ by
\begin{equation}
    Q_r(p):=\inf\{z\in\R:\Prob(S_r\le z)\ge p\}.
    \label{eq:population_quantile_r}
\end{equation}
With this notation, $S_{r_j}$ has the same distribution as the test nonconformity score $s_j$ at episode $j$. Additionally, the calibration nonconformity scores $s_j^{(1)},\ldots,s_j^{(n_j)}$ are i.i.d. samples from the same distribution. The sensitivity bound \fulonly{in~\eqref{eq:kappa_score_shift}}\subonly{$|s^{\ct}(x(0),\pi_{r'})-s^{\ct}(x(0),\pi_r)|\le\kappa|r'-r|$} and the quantile perturbation argument \subonly{in \Cref{cor:Q_lip} of Appendix~\ref{app:convergence} in~\cite{our_extended}}\fulonly{in \Cref{cor:Q_lip} of Appendix~\ref{app:convergence}} imply that the  population quantiles satisfy
\begin{equation}
  |Q_{r'}(1-\alpha)-Q_r(1-\alpha)|\le\kappa\,|r'-r|
  \qquad\forall\, r,r'\in\Rset.
  \label{eq:Q_Lipschitz}
\end{equation}
When $\kappa<1$, the function $r\mapsto Q_r(1-\alpha)$ is a contraction on $\Rset$. Hence, any fixed point in $\Rset$ is unique. We let $r_\star\in\Rset$ denote the fixed point satisfying
\begin{equation*}
    r_\star=Q_{r_\star}(1-\alpha),
\end{equation*}
i.e., the population $(1-\alpha)$-quantile of the nonconformity score equals the robustness margin $r_\star$ at this fixed point.

The conformal threshold $q_j$ is an empirical quantile and  generally differs from $Q_{r_j}(1-\alpha)$. We decompose $q_j$ as
\begin{equation}
    q_j=Q_{r_j}(1-\alpha)+\eta_j,  
    \label{eq:eta_decomposition_main}
\end{equation}
where $\eta_j:=\xi_j+b_j$ quantifies this deviation with
\begin{equation*}
    \xi_j:=q_j-Q_{r_j}(1-\bar\alpha_j),
    \;
    b_j:=Q_{r_j}(1-\bar\alpha_j)-Q_{r_j}(1-\alpha).
\end{equation*}
The term $\xi_j$ is the empirical quantile error at level $1-\bar\alpha_j$, while $b_j$ is the deterministic bias introduced by the tightened calibration level $1-\bar\alpha_j$. To control these terms, we use the following regularity condition.

\begin{assumption}
\label{ass:quantile_regularity}
There exist a neighborhood $\mathcal R_\star\subseteq\Rset$ around $r_\star$, an open interval $I\subseteq\R$, and a constant $m>0$ such that, for every $r\in\mathcal R_\star$, the cumulative distribution function
\begin{equation}
    F_r(z):=\Prob(S_r\le z)
    \label{eq:cdf_score_r_main}
\end{equation}
admits a density function $f_r(z)$ satisfying $f_r(z)\ge m$ for all $z\in I$. Furthermore, there exists an episode $J_0\in\mathbb N$ such that, for every $j\ge J_0$ and every $r\in\mathcal R_\star$, it holds that $Q_r(1-\alpha)\in I$ and $Q_r(p)\in I$ for all probability levels $p\in
    [1-\bar\alpha_j-\Delta_j,
    1-\bar\alpha_j+\Delta_j]\cap(0,1)$ 
where  $\Delta_j:=\sqrt{\frac{\ln(2/\delta_j)}{2n_j}}$.
\end{assumption}

The density bound $f_r(z)\ge m$ in  \Cref{ass:quantile_regularity} enables us to derive probabilistic bounds on the deviation $\eta_j$.  
Indeed, our quantile-error analysis \subonly{in Appendix~\ref{app:convergence} of~\cite{our_extended}}\fulonly{in \Cref{cor:eta_dkw} of Appendix~\ref{app:convergence}} implies that, for every episode $j\ge J_0$ with $r_j\in\mathcal R_\star$ and $1-\bar\alpha_j-\Delta_j,1-\bar\alpha_j+\Delta_j\in(0,1)$, we are guaranteed that
\begin{equation*}
\begin{gathered}
\Prob_{n_j}\!\left\{
|\eta_j|\le
m^{-1}\!\left(\sqrt{\frac{\ln(2/\delta_j)}{2n_j}}
+\alpha-\bar\alpha_j\right)
\right\}
\ge 1-\delta_j .
\end{gathered}
\end{equation*}
We next quantify the deviation of $r_j$ from $r_\star$.

\begin{theorem}
\label{thm:convergence}
Let the conditions of \Cref{thm:srcr_validity} and Assumptions~\ref{ass:ct_lip} and~\ref{ass4} hold. Suppose that $\kappa<1$ and let $r_\star\in\Rset$ be the unique fixed point satisfying $r_\star=Q_{r_\star}(1-\alpha)$. Define the error $e_j:=|r_j-r_\star|$ and variables $\lambda_\kappa:=2\kappa/(1-\kappa)$ and $B_\kappa:=1/(1-\kappa)$.
Then, for every $j\ge 0$, it holds that
\begin{fullonly}
\begin{equation} e_{j+1}\le\lambda_\kappa\,e_j+B_\kappa\,|\eta_j|.
  \label{eq:contraction}
\end{equation}
\end{fullonly}
\begin{submitonly}
\refstepcounter{equation}\label{eq:contraction}%
$e_{j+1}\le\lambda_\kappa\,e_j+B_\kappa\,|\eta_j|$.
\end{submitonly}
Consequently,  $e_{j+1}
    \le \lambda_\kappa^{j+1}e_0
    +B_\kappa\sum_{m=0}^{j}\lambda_\kappa^{j-m}|\eta_m|$.
If $\kappa<1/3$ and $\sup_j|\eta_j|\le C$, then $\limsup_{j\to\infty}e_j\le \frac{C}{1-3\kappa}$.
\end{theorem}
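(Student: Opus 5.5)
The plan is to work directly with the explicit update rule \eqref{eq:explicit_update}. I first rewrite it as a single perturbed affine map, then compare it with the fixed-point identity $r_\star=Q_{r_\star}(1-\alpha)$. Write $G(r):=Q_r(1-\alpha)$, so that by \eqref{eq:eta_decomposition_main} we have $q_j=G(r_j)+\eta_j$, and by \eqref{eq:Q_Lipschitz} we have $|G(r_j)-G(r_\star)|\le\kappa e_j$. Subtracting $r_\star$ from the update requires writing $r_\star$ in a matching form in each case:
\begin{equation*}
r_\star=\frac{r_\star-\kappa r_\star}{1-\kappa}=\frac{r_\star+\kappa r_\star}{1+\kappa}.
\end{equation*}

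\textbf{Case $q_j\ge r_j$.} Here
\begin{equation*}
r_{j+1}-r_\star=\frac{(q_j-r_\star)-\kappa(r_j-r_\star)}{1-\kappa}.
\end{equation*}
Since $q_j-r_\star=G(r_j)-G(r_\star)+\eta_j$, the numerator is bounded by $\kappa e_j+|\eta_j|+\kappa e_j$. This gives
\begin{equation*}
e_{j+1}\le\frac{2\kappa}{1-\kappa}e_j+\frac{1}{1-\kappa}|\eta_j|=\lambda_\kappa e_j+B_\kappa|\eta_j|.
\end{equation*}

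\textbf{Case $q_j<r_j$.} The same computation gives the bound with denominator $1+\kappa$ instead of $1-\kappa$. Since $1+\kappa\ge1-\kappa$, this bound is dominated by the one in the first case. Together the two cases establish \eqref{eq:contraction}.

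The only subtlety is that \eqref{eq:Q_Lipschitz} requires $r_j,r_\star\in\Rset$. This holds by the standing assumption that iterates stay in $\Rset$ and by the definition of $r_\star$. I would state this explicitly. Apart from that, the argument needs no probabilistic input; everything random is contained in $\eta_j$.

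\textbf{Unrolling.} The summed bound follows from \eqref{eq:contraction} by induction on $j$. Assuming the bound for $e_j$, multiply it by $\lambda_\kappa\ge0$ and add $B_\kappa|\eta_j|$. This gives $\lambda_\kappa^{j+1}e_0+B_\kappa\sum_{m=0}^{j}\lambda_\kappa^{j-m}|\eta_m|$.

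\textbf{Limit.} Note that $\kappa<1/3$ is equivalent to $\lambda_\kappa=2\kappa/(1-\kappa)<1$. Using $|\eta_m|\le C$, the geometric sum is bounded by $B_\kappa C/(1-\lambda_\kappa)$. The first term $\lambda_\kappa^{j+1}e_0$ vanishes because $e_0\le r_{\max}-r_{\min}<\infty$. Finally,
\begin{equation*}
\frac{B_\kappa}{1-\lambda_\kappa}=\frac{1/(1-\kappa)}{(1-3\kappa)/(1-\kappa)}=\frac{1}{1-3\kappa},
\end{equation*}
which yields $\limsup_{j\to\infty} e_j\le C/(1-3\kappa)$.

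The main obstacle, though modest, is the case split. One must handle each branch of \eqref{eq:explicit_update} separately, rewriting $r_\star$ with the matching coefficients, rather than trying to use a monotonicity argument on the $\argmin$ in \eqref{eq:implicit_solver}. The rest is routine.
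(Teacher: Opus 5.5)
Your proposal is correct and follows essentially the same route as the paper: you split on the two branches of the explicit update, substitute $q_j=Q_{r_j}(1-\alpha)+\eta_j$, and use the quantile Lipschitz bound to bound the numerator by $2\kappa e_j+|\eta_j|$, then finish with the same unrolling and geometric-series limit giving $C/(1-3\kappa)$. Your remark that the iterates and $r_\star$ must lie in $\Rset$ for the Lipschitz bound to apply is a small point the paper leaves implicit.
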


In equation%
\fulonly{~\eqref{eq:contraction}}%
\subonly{ $e_{j+1}\le\lambda_\kappa\,e_j+B_\kappa\,|\eta_j|$}, $\lambda_\kappa e_j$ captures the contraction of the map $r\mapsto Q_r(1-\alpha)$ with constant $\kappa<1$, while $B_\kappa|\eta_j|$ captures the finite-sample quantile-estimation perturbation caused by the deviation of $q_j$ from $Q_{r_j}(1-\alpha)$. When $\kappa<1/3$, the margin $r_j$ tracks $r_\star$ up to the  error\subonly{; see Appendix~\ref{app:convergence} of \cite{our_extended} for the proof}. Finally, we provide guarantees across multiple episods under \Cref{ass:quantile_regularity}.

\begin{corollary}
\label{cor:hp_tracking_convergence}
Let the conditions of \Cref{thm:srcr_validity} and Assumptions~\ref{ass:ct_lip}, \ref{ass4}, and~\ref{ass:quantile_regularity} hold. For an episode $J\in\mathbb N$ with $J\ge J_0$, assume that $r_j\in\mathcal R_\star$ and $1-\bar\alpha_j-\Delta_j,1-\bar\alpha_j+\Delta_j\in(0,1)$ hold for all $j=J_0,\ldots,J$ where  $\Delta_j:=\sqrt{\frac{\ln(2/\delta_j)}{2n_j}}$. Define the  event  $\mathcal H_J:=\bigcap_{j=J_0}^{J}
\left\{e_{j+1}\le
\lambda_\kappa e_j+B_\kappa\varepsilon_j\right\}$ where $\varepsilon_j:=m^{-1}\!\left(
\sqrt{\frac{\ln(2/\delta_j)}{2n_j}}
+\alpha-\bar\alpha_j\right)$.
Then, it holds that\footnote{Here, $\Prob_{0:J}\{\cdot\}$ denotes the joint probability measure over the probability measures $\Prob_{n_0}\{\cdot\},\hdots, \Prob_{n_J}\{\cdot\}$ at individual episodes.}
\begin{equation*}
    \Prob_{0:J}\{\mathcal H_J\}\ge 1-\sum_{j=J_0}^{J}\delta_j.
\end{equation*}
 If the above conditions hold for all $j\ge J_0$, $\kappa<1/3$, $\sum_{j=J_0}^\infty\delta_j<1$, and $\sup_{j\ge J_0}\varepsilon_j\le C$, then we have
\begin{equation*}
\begin{gathered}
\Prob_{0:\infty}\!\left\{
\limsup_{j\to\infty}|r_j-r_\star|
\le \frac{C}{1-3\kappa}
\right\}
\ge 1-\sum_{j=J_0}^{\infty}\delta_j .
\end{gathered}
\end{equation*}
If, in addition, $\varepsilon_j\to0$, then
\begin{equation*}
\Prob_{0:\infty}\!\left\{\lim_{j\to\infty}r_j=r_\star\right\}
\ge 1-\sum_{j=J_0}^{\infty}\delta_j .
\end{equation*}
\subonly{\vspace{-.1cm}}
\end{corollary}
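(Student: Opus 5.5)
The plan is to combine the per-episode deviation bound for $\eta_j$ (the quantile-error result stated before \Cref{thm:convergence}) with a union bound, and then feed the resulting deterministic inequalities into the recursion of \Cref{thm:convergence}.

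\textbf{Step 1: a per-episode good event.} Fix $j\in\{J_0,\ldots,J\}$. By assumption $r_j\in\mathcal R_\star$ and $1-\bar\alpha_j\pm\Delta_j\in(0,1)$, so the quantile-error bound applies. Let $E_j:=\{|\eta_j|\le\varepsilon_j\}$. This bound gives $\Prob_{n_j}\{E_j\mid \text{past}\}\ge 1-\delta_j$. The bound is conditional on the history up to episode $j$, because $r_j$ is determined by earlier episodes while the calibration scores at episode $j$ are fresh i.i.d. draws from $\mathcal S_{\pi_{r_j}}$ by \Cref{ass:within_episode_iid}. Taking expectation over the history yields the same bound under $\Prob_{0:J}$.

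\textbf{Step 2: the recursion holds on the good event.} \Cref{thm:convergence} gives $e_{j+1}\le\lambda_\kappa e_j+B_\kappa|\eta_j|$ deterministically. On $E_j$ this implies $e_{j+1}\le\lambda_\kappa e_j+B_\kappa\varepsilon_j$. Hence $\bigcap_{j=J_0}^J E_j\subseteq\mathcal H_J$.

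\textbf{Step 3: union bound.} This gives
\[
\Prob_{0:J}\{\mathcal H_J^c\}\le\sum_{j=J_0}^J\Prob_{0:J}\{E_j^c\}\le\sum_{j=J_0}^J\delta_j .
\]
For the infinite-horizon claims, let $\mathcal H_\infty:=\bigcap_{j\ge J_0}E_j$. Applying the same union bound with the countable sum, or taking the limit of the decreasing events, gives $\Prob_{0:\infty}\{\mathcal H_\infty\}\ge 1-\sum_{j\ge J_0}\delta_j$. The right-hand side is positive since $\sum_{j\ge J_0}\delta_j<1$.

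\textbf{Step 4: deterministic limits on $\mathcal H_\infty$.} On $\mathcal H_\infty$ we have $e_{j+1}\le\lambda_\kappa e_j+B_\kappa C$ for all $j\ge J_0$. Since $\kappa<1/3$, $\lambda_\kappa<1$. Unrolling from $J_0$ gives
\[
\limsup_{j\to\infty} e_j\le \frac{B_\kappa C}{1-\lambda_\kappa}=\frac{C}{1-3\kappa}.
\]
Here $\frac{1/(1-\kappa)}{(1-3\kappa)/(1-\kappa)}=\frac{1}{1-3\kappa}$, which is the same computation as in \Cref{thm:convergence}. The finite $e_{J_0}$ is harmless because $\lambda_\kappa^{j}e_{J_0}\to0$. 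If in addition $\varepsilon_j\to0$, I would use the standard lemma: if $e_{j+1}\le\lambda e_j+a_j$ with $\lambda<1$ and $a_j\to0$, then $e_j\to0$. To prove it, bound $\sum_{m}\lambda^{j-m}a_m$ by splitting the sum at an index $N$ beyond which $a_m<\epsilon$. This yields $r_j\to r_\star$ on $\mathcal H_\infty$.

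\textbf{Main obstacle.} I expect the main difficulty to be justifying Step 1 rigorously. The margin $r_j$ is random and depends on earlier calibration data, so the quantile-error bound must be applied conditionally on the filtration, using that episode-$j$ samples are independent of the past given $r_j$. A related subtlety is that the hypothesis $r_j\in\mathcal R_\star$ is assumed for all relevant $j$. I take this as given, as the statement does, rather than proving that the iterates remain in $\mathcal R_\star$.
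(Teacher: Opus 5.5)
Your proposal is correct and follows essentially the same route as the paper: define $E_j=\{|\eta_j|\le\varepsilon_j\}$, get $\Prob_{n_j}(E_j)\ge 1-\delta_j$ from \Cref{cor:eta_dkw}, union-bound over $j=J_0,\ldots,J$ (or over all $j\ge J_0$), apply the deterministic recursion of \Cref{thm:convergence} on $\bigcap_j E_j$, and finish with the geometric-series bound and the split-sum argument for $\varepsilon_j\to 0$. Your Step 1 conditions explicitly on the history, which is needed because $r_j$ is random and determined by earlier calibration data; the paper's proof passes over this point silently, so your version is slightly more careful.
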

\fulonly{The proofs are given in Appendix~\ref{app:convergence}.}

\section{Case Studies}
\label{sec:case_studies}
\begin{extendedonly}
\revised{We validate our method on the active benchmarks included in this draft: stability of an inverted pendulum via rCLFs (\Cref{sec:inverted_pend}), collision avoidance in a multi-obstacle planar maze via rCBFs (\Cref{sec:maze}), and quadcopter obstacle-avoidance via higher-order rCBFs (Appendix~\ref{app:navigation_details}). The quadcopter obstacle-avoidance task is built on the QuadSwarm simulator~\cite{quadswarm}; see Appendix~\ref{app:navigation_details} for details on all case studies. In all experiments we compare four baselines: \emph{robust} (our iterative update rule~\eqref{eq:explicit_update}), \emph{naive} ($r_j=q_j$ each episode), \emph{calibrate-once} ($r_j=q_0$ for all $j$), and \emph{non-robust} ($r_j=0$).}
\end{extendedonly}
\begin{arxivonly}
We validate our method on three benchmarks of increasing complexity: stability of an inverted pendulum via rCLFs (\Cref{sec:inverted_pend}), collision avoidance in a multi-obstacle planar maze via rCBFs (\Cref{sec:maze}), and quadcopter obstacle-avoidance via higher-order rCBFs (Appendix~\ref{app:navigation_details}). The quadcopter obstacle-avoidance task is built on the QuadSwarm simulator~\cite{quadswarm}; see Appendix~\ref{app:navigation_details}  for details on all case studies. In all experiments we compare four baselines: \emph{robust} (our iterative update~\eqref{eq:explicit_update}), \emph{naive} ($r_j=q_j$ each episode), \emph{calibrate-once} ($r_j=q_0$ for all $j$), and \emph{non-robust} ($r_j=0$).
\end{arxivonly}
\begin{submitonly}
We validate our method on three benchmarks: stability of an inverted pendulum via rCLFs (\Cref{sec:inverted_pend}), collision avoidance in a multi-obstacle planar maze via rCBFs (\Cref{sec:maze}), and quadcopter obstacle-avoidance via higher-order rCBFs (presented in Appendix~\ref{app:navigation_details} of~\cite{our_extended}).  In all experiments we compare four baselines: \emph{robust} (our iterative update~\eqref{eq:explicit_update}), \emph{naive} ($r_j=q_j$ in each episode), \emph{calibrate-once} ($r_j=q_0$ for all $j$), and \emph{non-robust} ($r_j=0$).
\end{submitonly}
\subsection{Inverted Pendulum}
\label{sec:inverted_pend}
We consider the inverted pendulum setup of~\cite{hsu2025}.
Unlike~\cite{hsu2025}, which uses a degree-5 polynomial regressor,
we use a degree-3 nominal model
$\hat{f}(x,u) = M_1 \phi(x) + M_2 \phi(x)\, u$,
where $M_1, M_2 \in \mathbb{R}^{2 \times 10}$ are learned weight matrices and
$\phi(x) = [1 \ \theta \ \dot\theta \ \theta^2 \ \theta\dot\theta
\ \dot\theta^2 \ \theta^3 \cdots \dot\theta^3]$.
The quadratic CLF $V(x) = x^\top P x$ is obtained via the Lyapunov equation
as in~\cite{hsu2025} (using feedback gain $K = [6 \ 1]$ and $Q = 0.5\, I_{2\times 2}$).
We sample initial states from
$\mathcal{X}_0 \triangleq \{ x \in \mathbb{R}^2 \mid V(x) = 1.3 \}$
with $T = 5$, collect $n_j = 200$ trajectories per episode
for estimating $q_j$ ($\alpha = 0.1$, $\delta = 0.1$)
and $100$ for validation.
\Cref{fig:ip-results} compares our algorithm against the baselines.
The calibrate-once baseline performs well here and $r_{\text{calibrate-once}}$ effectively lower-bounds
$\{ s_j^{(i)} \}_{i=1}^n$ across episodes,
as expected
since this setting was chosen in~\cite{hsu2025} to demonstrate this particular baseline. This baseline will not perform well in the second case study, similar to  \Cref{prop:counterexample_main}.
Importantly, \Cref{fig:ip-results}(a) shows that
$r_{\text{robust}}$ converges to the $1{-}\alpha$ score quantile
while maintaining a positive margin in empirical score coverage
over all baselines in~(b).
As in~\cite{hsu2025}, the nonrobust case fails
to achieve stability (see \Cref{fig:ip-trajectories}),
underscoring the necessity of robustification under dynamics mismatch.

\begin{figure}[!tbp]
    \centering
    \captionsetup{font=small,skip=2pt}
    \begin{subfigure}[t]{0.49\linewidth}        \includegraphics[width=\linewidth]{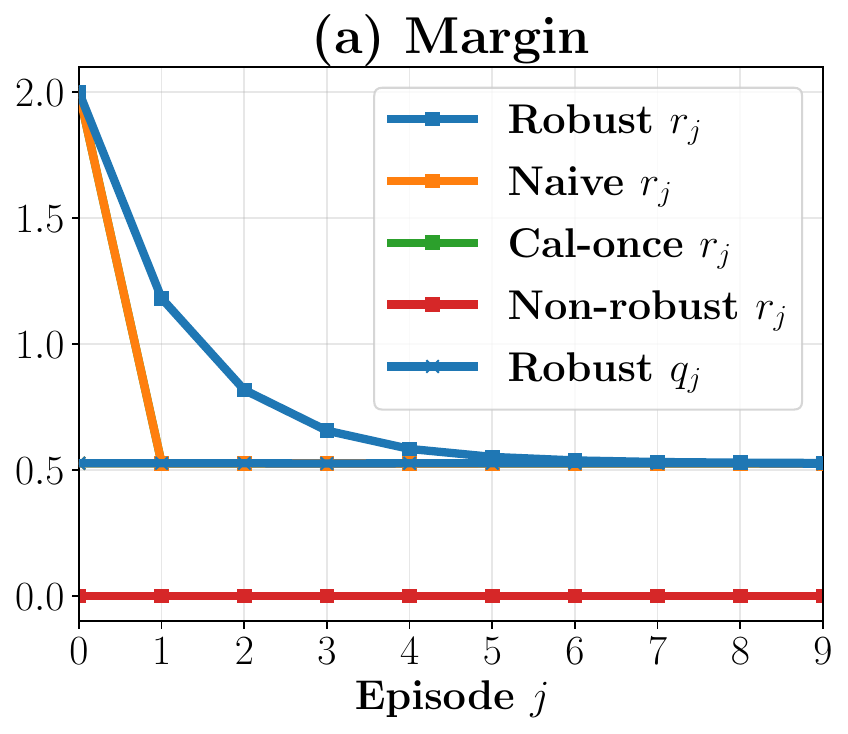}
    \end{subfigure}\hfill
    \begin{subfigure}[t]{0.49\linewidth}        \includegraphics[width=\linewidth]{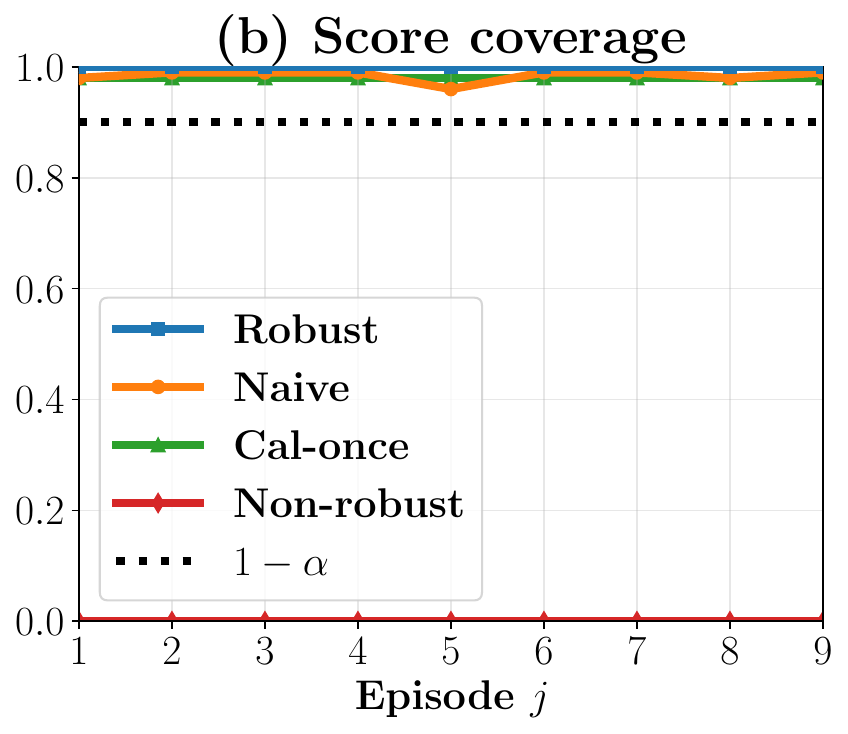}
    \end{subfigure}\hfill
    \caption{Inv. pendulum. (a)~$r_j$, $q_j$. (b)~Score coverage $s_j^{(i)}\!\le\! r_j$.}
    \label{fig:ip-results}
    \subonly{\vspace{-.4cm}}
\end{figure}
\begin{figure}[!tbp]
    \centering
    \captionsetup{font=small,skip=2pt}
    \begin{subfigure}[t]{0.52\linewidth}
        \includegraphics[width=\linewidth]{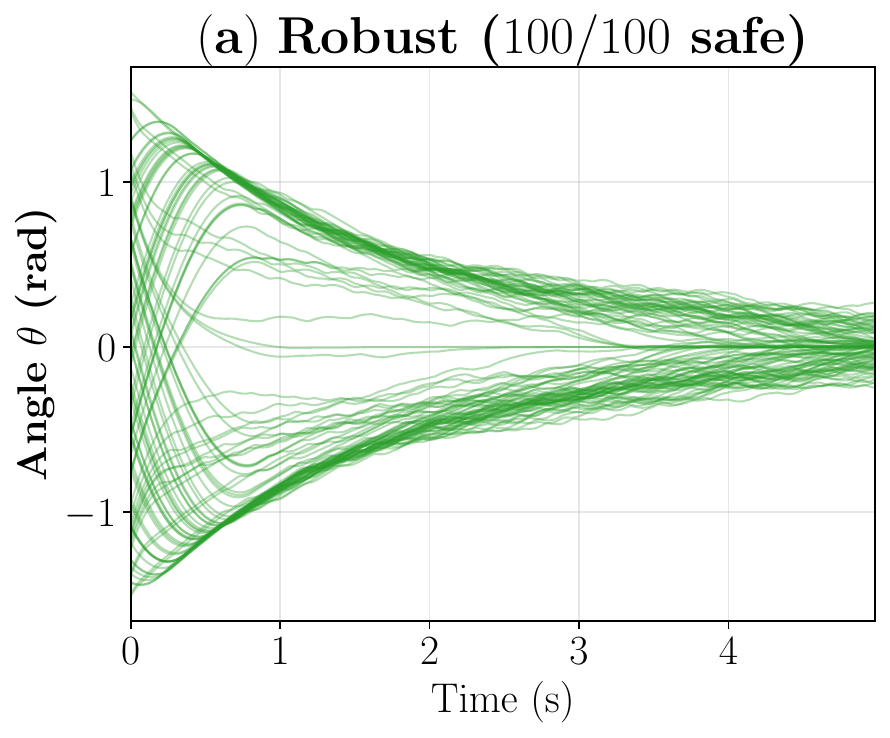}
    \end{subfigure}\hfill
    \begin{subfigure}[t]{0.48\linewidth}
        \includegraphics[width=\linewidth]{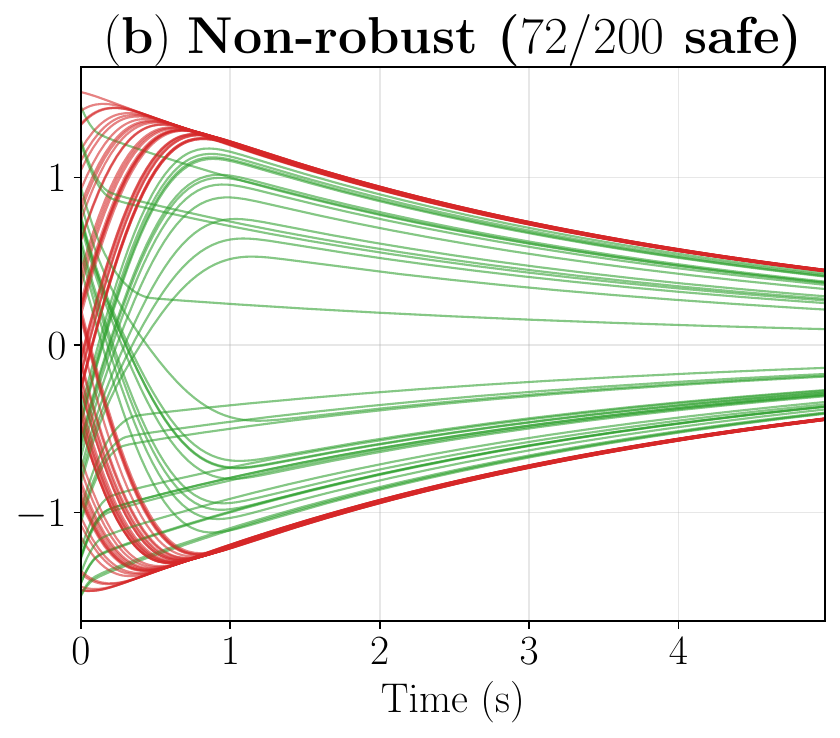}
    \end{subfigure} \\
    \begin{subfigure}[t]{0.51\linewidth}
        \includegraphics[width=\linewidth]{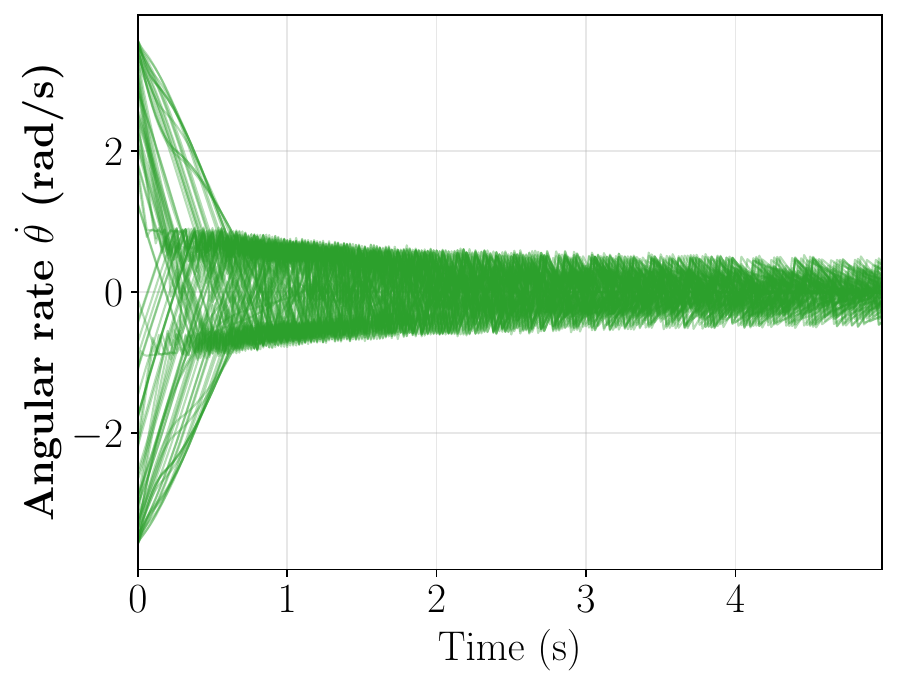}
    \end{subfigure} \hfill
    \begin{subfigure}[t]{0.47\linewidth}
        \includegraphics[width=\linewidth]{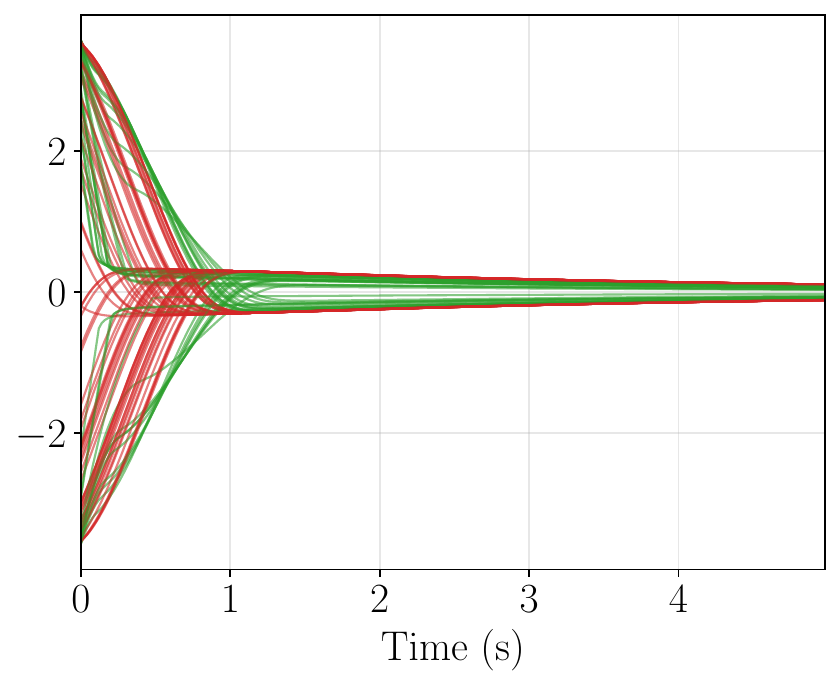}
    \end{subfigure}
    \caption{Inverted pendulum trajectories. (a)~$r\!=\!r_{\text{robust}}=0.5282$. (b)~$r\!=\!0$. Colors indicate stability violation per Theorem~\ref{thm:det_ct_clf}.}
    \label{fig:ip-trajectories}
    \subonly{\vspace{-.4cm}}
\end{figure}
\subsection{Multi-Obstacle Maze Navigation}
\label{sec:maze}
We consider a planar single-integrator system $\hat f(x,u)=u$, $f(x,u)=u+\varepsilon(x,u)$, $x\in\R^2$, $u\in\R^2$, navigating a maze of $17$ circular obstacles with centers $\{c_i\}_{i=1}^{17}$, physical radii $R_i\in[0.22,0.45]$\,m, and safety radii $R_{s,i}=1.25\,R_i$. The barrier function for obstacle~$i$ is $h_i(x)=\|x-c_i\|^2-R_{s,i}^2$, the safe set is $\Cset=\bigcap_{i}\{x:h_i(x)\ge 0\}$,
and the model error is an aggregate barrier-localized vortex field
$\varepsilon(x,u)=\sum_{i=1}^{17}\sigma_i(x)(R_{\theta_i}-I_2)u+d_{\mathrm{m}}$
, where $R_\theta:=\bigl[\begin{smallmatrix}\cos\theta & -\sin\theta\\\sin\theta & \cos\theta\end{smallmatrix}\bigr]\in\mathrm{SO}(2)$ is the 2D rotation matrix,
with $\sigma_i(x):=\exp(-\max\{h_i(x),0\}/\ell_i)$, per-obstacle rotation angles
$\theta_i\in[10^\circ,32^\circ]$, length-scales $\ell_i\in[0.15,0.50]$, and drift $d_{\mathrm{m}}=(0.001,-0.002)^\top$. The obstacles form a maze with boundary rows at $y=\pm 2$ and staggered interior obstacles creating narrow passages; the full layout is given in \subonly{Table~\ref{tab:maze_obstacles} of \cite{our_extended}.}%
\fulonly{Table~\ref{tab:maze_obstacles}.} We set goal $B_{\mathrm{m}}=(10,0)$, nominal controller $u_{\mathrm{nom}}(x):=0.6(B_{\mathrm{m}}-x)$, and ${\XX}_{\mathrm{m}}=[-5,-0.5]\times[-2.59,2.59]\cap\{x:\min_i h_i(x)\ge 0.05\}$, $T=12$, $\Delta t=0.01$, $\alpha=0.1$, $\delta=0.05$, and $\gamma=10$. The rCBF-QP~\eqref{eq:cbf_qp} enforces all $17$ constraints $\ip{\nabla h_i(x)}{u}+\gamma h_i(x)\ge\|\nabla h_i(x)\|r$ simultaneously; since $x\in\R^2$, at most two constraints can be active at the optimum, and we solve the QP exactly via a $2$-D active-set enumeration. For every episode~$j$, $n_j=200$ calibration and $N_{\mathrm{eval}}=500$ evaluation trajectories are collected, and we run $20$ episodes with $\kappa=0.3$.
\Cref{fig:maze_episodic}(a) shows that the margin converges from $r_0\approx 2.75$ to $r\approx 2.38$ within $\sim\!2$ episodes, consistent with \Cref{thm:convergence} since $\kappa=0.3<1/3$. The calibrate-once and naive baselines achieve score coverage of only $0.668$ and $0.872$, both below $1-\alpha=0.9$, while our iterative method reaches $0.998$ (\Cref{fig:maze_episodic}(b)) and always satisfies the coverage level. All three  methods maintain $100\%$ safety throughout%
\fulonly{ (see \Cref{fig:maze-episodic-large}(c) in \Cref{app:tables_figures})}%
\subonly{ (see Appendix ~\ref{app:maze_details} of \cite{our_extended})},
confirming that the iterative margin update balances coverage and robustness even in a geometrically complex multi-obstacle environment.
\Cref{fig:maze_traj} illustrates the converged SR-CR policy navigating the maze%
\fulonly{; see Tables~\ref{tab:maze_params}--\ref{tab:maze_final} for full parameters and metrics}%
\subonly{; see~\cite{our_extended} for full parameters and metrics}.
\begin{figure}[!tbp]
    \centering
    \captionsetup{font=small,skip=2pt}
    \begin{subfigure}[t]
{0.50\linewidth}      \includegraphics[width=\linewidth]{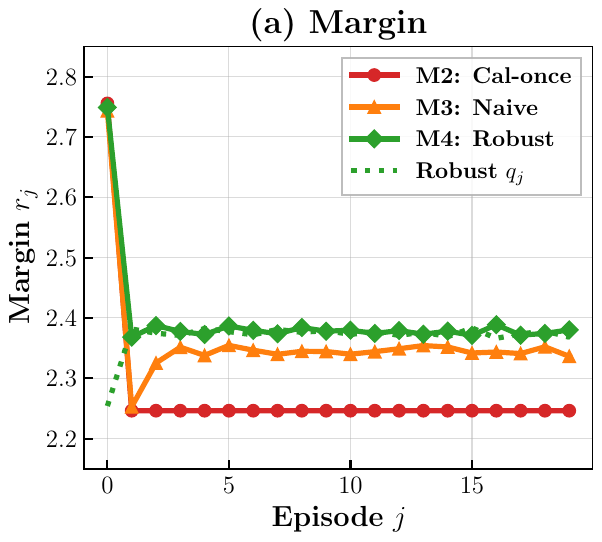}
    \end{subfigure}\hfill
        \begin{subfigure}[t]{0.50\linewidth} 
     \includegraphics[width=\linewidth]{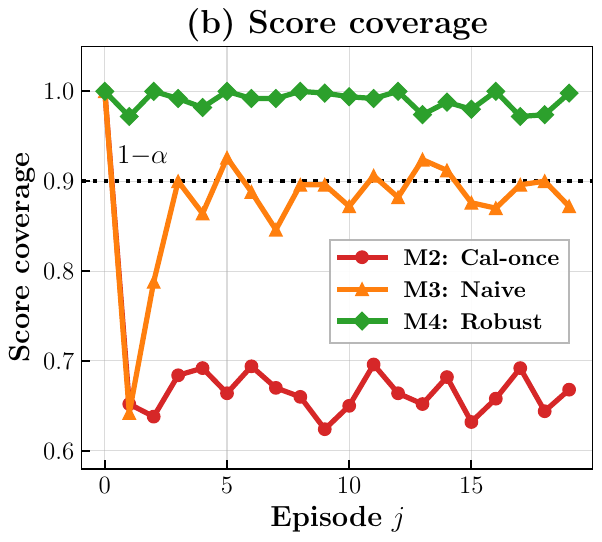}
    \end{subfigure}\hfill
    \caption{Maze: (a)~Margin $r_j$, threshold $q_j$. (b)~Score coverage. }
  \label{fig:maze_episodic}
 \subonly{\vspace{-.4cm}}
\end{figure}
  \subonly{\vspace{-.3cm}}
\begin{figure}[!tbp]
    \centering
    \captionsetup{font=small,skip=2pt}
    \begin{subfigure}[t]{\linewidth}  \includegraphics[width=\linewidth]{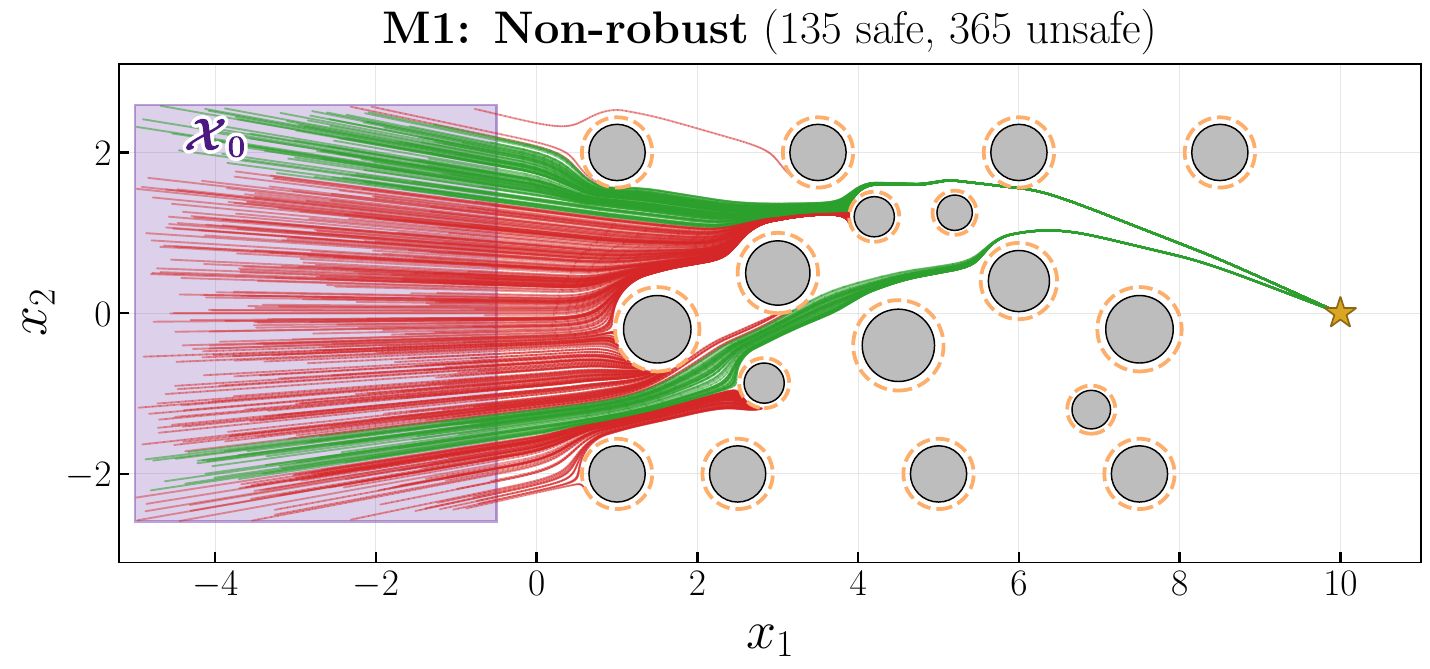}
    \end{subfigure} \\
    \begin{subfigure}[t]{\linewidth}
    \includegraphics[width=\linewidth]{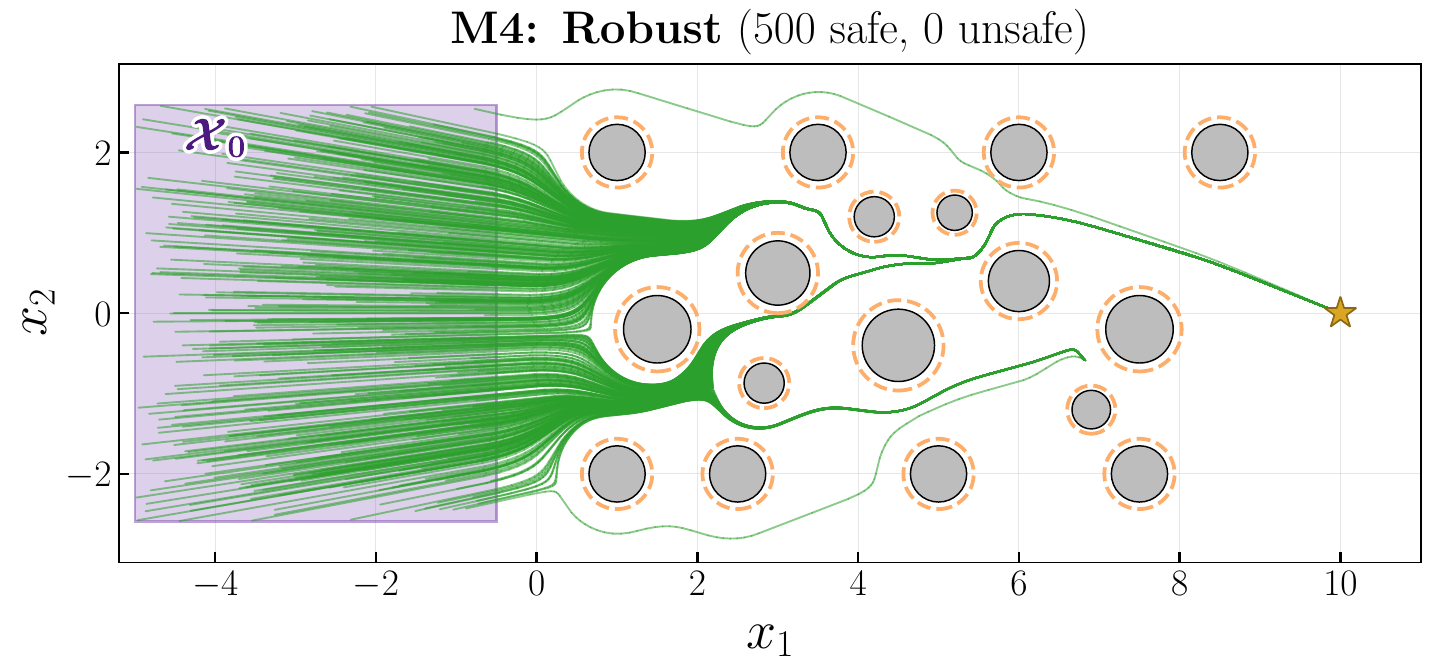}
    \end{subfigure}
    \caption{Maze: $500$ trajectories under $r_{19}$, (a)~Non-robust, (b)~Robust.}
  \label{fig:maze_traj}
  \subonly{\vspace{-.8cm}}
\end{figure}

\section{Conclusion}
\label{sec:conclusion}
We developed a framework for data-driven robust control that maintains provable safety and stability guarantees across iterative policy updates despite policy-induced distribution shift. By combining adversarially robust conformal prediction with robust CLF/CBF-QP synthesis, the framework transfers stability/safety guarantees from one episode to the next via a computable distribution shift budget derived from closed-loop sensitivity analysis. We proved per-episode finite-sample validity, probabilistic CBF safety and CLF stability certificates, and convergence of the algorithm.%
\extonly{ We demonstrated these guarantees on six benchmarks including an inverted pendulum, 2-D navigation tasks, a multi-obstacle maze, and quadrotor systems.}%
\arxonly{ We demonstrated these guarantees on an inverted pendulum, a multi-obstacle maze, and a quadrotor system.}%
\subonly{ Finally, we demonstrated these guarantees on three different systems.}%
\renewcommand*{\bibfont}{\footnotesize}
\printbibliography

@article{sontag1989universal,
  title   = {A ``Universal'' Construction of Artstein's Theorem on Nonlinear Stabilization},
  author  = {Sontag, Eduardo D.},
  journal = {Syst. Control Lett.},
  volume  = {13},
  number  = {2},
  pages   = {117--123},
  year    = {1989},
  doi     = {10.1016/0167-6911(89)90028-5}
}

@misc{our_extended,
  author = {Mirzaeedodangeh, Omid and Shekhtman, Eliot and Matni, Nikolai and Lindemann, Lars},
  title  = {Robust Conformal CBF and CLF Controllers via Iterative Policy Updates: Extended Version},
  year   = {2026},
  url    = {https://tinyurl.com/CDC1481}
}

@inproceedings{duchi2025sample,
  title     = {Sample-Conditional Coverage in Split-Conformal Prediction},
  author    = {Duchi, John C.},
  booktitle = {Adv. Neural Inf. Process. Syst.},
  year      = {2025}
}

@article{mestres2025regularity,
  title   = {Regularity properties of optimization-based controllers},
  author  = {Mestres, Pol and Allibhoy, Ahmed and Cort{\'e}s, Jorge},
  journal = {Eur. J. Control},
  volume  = {81},
  pages   = {101098},
  year    = {2025}
}

@book{khalil2002nonlinear,
  title     = {Nonlinear Systems},
  author    = {Khalil, Hassan K.},
  edition   = {3},
  year      = {2002},
  publisher = {Prentice Hall}
}

@article{xu2015robustness,
  title   = {Robustness of control barrier functions for safety critical control},
  author  = {Xu, Xiangru and Tabuada, Paulo and Grizzle, Jessy W. and Ames, Aaron D.},
  journal = {IFAC-PapersOnLine},
  volume  = {48},
  number  = {27},
  pages   = {54--61},
  year    = {2015},
  doi     = {10.1016/j.ifacol.2015.11.152}
}

@book{freeman2008robust,
  title     = {Robust Nonlinear Control Design: State-Space and Lyapunov Techniques},
  author    = {Freeman, Randy and Kokotovi{\'c}, Petar V.},
  year      = {2008},
  publisher = {Springer}
}

@article{ames2019cbf,
  title   = {Control Barrier Function Based Quadratic Programs for Safety Critical Systems},
  author  = {Ames, Aaron D. and Xu, Xiangru and Grizzle, Jessy W. and Tabuada, Paulo},
  journal = {IEEE Trans. Autom. Control},
  volume  = {62},
  number  = {8},
  pages   = {3861--3876},
  year    = {2017},
  doi     = {10.1109/TAC.2016.2638961}
}

@article{angelopoulos2023conformal,
  title   = {Conformal Prediction: A Gentle Introduction},
  author  = {Angelopoulos, Anastasios N. and Bates, Stephen},
  journal = {Found. Trends Mach. Learn.},
  volume  = {16},
  number  = {4},
  pages   = {494--591},
  year    = {2023},
  doi     = {10.1561/2200000101}
}

@article{lindemann2025formal,
  title   = {Formal Verification and Control with Conformal Prediction: Practical Safety Guarantees for Autonomous Systems},
  author  = {Lindemann, Lars and Zhao, Yiqi and Yu, Xinyi and Pappas, George J. and Deshmukh, Jyotirmoy V.},
  journal = {IEEE Control Syst. Mag.},
  volume  = {45},
  number  = {6},
  pages   = {72--122},
  year    = {2025},
  doi     = {10.1109/MCS.2025.3611545}
}

@article{hsu2025,
  title   = {Statistical Guarantees in Data-Driven Nonlinear Control: Conformal Robustness for Stability and Safety},
  author  = {Hsu, Ting-Wei and Tsukamoto, Hiroyasu},
  journal = {IEEE Control Syst. Lett.},
  volume  = {9},
  pages   = {997--1002},
  year    = {2025},
  doi     = {10.1109/LCSYS.2025.3578062}
}

@inproceedings{vovk12conditional,
  title     = {Conditional Validity of Inductive Conformal Predictors},
  author    = {Vovk, Vladimir},
  booktitle = {Proc. Asian Conf. Mach. Learn.},
  series    = {Proc. Mach. Learn. Res.},
  volume    = {25},
  pages     = {475--490},
  year      = {2012},
  publisher = {PMLR}
}

@book{vovk2005alrw,
  title     = {Algorithmic Learning in a Random World},
  author    = {Vovk, Vladimir and Gammerman, Alex and Shafer, Glenn},
  publisher = {Springer},
  year      = {2005},
  doi       = {10.1007/b106715}
}

@misc{hardt2023performativepastfuture,
  title         = {Performative Prediction: Past and Future},
  author        = {Hardt, Moritz and Mendler-D{\"u}nner, Celestine},
  year          = {2023},
  eprint        = {2310.16608},
  archivePrefix = {arXiv},
  primaryClass  = {cs.LG}
}

@article{mirzaeedodangeh2025safe,
  title={Safe Planning in Interactive Environments via Iterative Policy Updates and Adversarially Robust Conformal Prediction},
  author={Mirzaeedodangeh, Omid and Shekhtman, Eliot and Matni, Nikolai and Lindemann, Lars},
  journal={arXiv preprint arXiv:2511.10586},
  year={2025}
}

@article{massart1990dkw,
  title   = {The Tight Constant in the {D}voretzky-{K}iefer-{W}olfowitz Inequality},
  author  = {Massart, Pascal},
  journal = {Ann. Probab.},
  volume  = {18},
  number  = {3},
  pages   = {1269--1283},
  year    = {1990},
  doi     = {10.1214/aop/1176990746}
}

@book{bonnans2000perturbation,
  author    = {Bonnans, J. Fr{\'e}d{\'e}ric and Shapiro, Alexander},
  title     = {Perturbation Analysis of Optimization Problems},
  publisher = {Springer},
  year      = {2000},
  doi       = {10.1007/978-1-4612-1394-9}
}

@inproceedings{cairoli2023conformal,
  title     = {Conformal Quantitative Predictive Monitoring of {STL} Requirements for Stochastic Processes},
  author    = {Cairoli, Francesca and Paoletti, Nicola and Bortolussi, Luca},
  booktitle = {Proc. ACM Int. Conf. Hyb. Syst.: Comp. Cont.},
  pages     = {1--11},
  year      = {2023}
}

@inproceedings{lindemann2023conformal,
  title     = {Conformal Prediction for {STL} Runtime Verification},
  author    = {Lindemann, Lars and Qin, Xin and Deshmukh, Jyotirmoy V. and Pappas, George J.},
  booktitle = {Proc. ACM/IEEE Int. Conf. Cyber-Phys. Syst.},
  pages     = {142--153},
  year      = {2023},
  doi       = {10.1145/3576841.3585927}
}

@book{dontchev2014implicit,
  author    = {Dontchev, Asen L. and Rockafellar, R. Tyrrell},
  title     = {Implicit Functions and Solution Mappings: A View from Variational Analysis},
  publisher = {Springer},
  edition   = {2nd},
  year      = {2014},
  doi       = {10.1007/978-1-4939-1037-3}
}

@misc{ren2023robots,
  title         = {Robots That Ask for Help: Uncertainty Alignment for Large Language Model Planners},
  author        = {Ren, Allen Z. and Dixit, Anushri and Bodrova, Alexandra and Singh, Sumeet and Tu, Stephen and Brown, Noah and Xu, Peng and Takayama, Leila and Xia, Fei and Varley, Jake and others},
  year          = {2023},
  eprint        = {2307.01928},
  archivePrefix = {arXiv},
  primaryClass  = {cs.RO}
}

@article{wang2024probabilistically,
  title   = {Probabilistically Correct Language-Based Multi-Robot Planning Using Conformal Prediction},
  author  = {Wang, Jun and He, Guocheng and Kantaros, Yiannis},
  journal = {IEEE Robot. Autom. Lett.},
  volume  = {10},
  number  = {1},
  pages   = {160--167},
  year    = {2025}
}

@inproceedings{gibbs2021adaptive,
  title     = {Adaptive Conformal Inference Under Distribution Shift},
  author    = {Gibbs, Isaac and Cand{\`e}s, Emmanuel J.},
  booktitle = {Adv. Neural Inf. Process. Syst.},
  volume    = {34},
  pages     = {1660--1672},
  year      = {2021}
}

@inproceedings{gendler2021arcp,
  title     = {Adversarially Robust Conformal Prediction},
  author    = {Gendler, Asaf and Weng, Tsui-Wei and Daniel, Luca and Romano, Yaniv},
  booktitle = {Proc. Int. Conf. Learn. Represent. (ICLR)},
  year      = {2022}
}

@inproceedings{srinivasan2026oodmpc,
  title         = {Safety Beyond the Training Data: Robust Out-of-Distribution {MPC} via Conformalized System Level Synthesis},
  author        = {Srinivasan, Anutam and Leeman, Antoine and Chou, Glen},
  booktitle     = {Proc. Learn. Dyn. Control Conf. (L4DC)},
  year          = {2026},
  note          = {Accepted},
  eprint        = {2602.12047},
  archivePrefix = {arXiv},
  primaryClass  = {eess.SY}
}

@misc{zhu2025sparc,
  title         = {{SPARC}: Prediction-Based Safe Control for Coupled Controllable and Uncontrollable Agents with Conformal Predictions},
  author        = {Wang, Shuqi and Wang, Siqi and Li, Shaoyuan and Yin, Xiang},
  year          = {2025},
  eprint        = {2410.15660},
  archivePrefix = {arXiv},
  primaryClass  = {eess.SY}
}

@article{huang2023sample,
  title   = {On the Sample Complexity of {L}ipschitz Constant Estimation},
  author  = {Huang, Julien Walden and Roberts, Stephen and Calliess, Jan-Peter},
  journal = {Trans. Mach. Learn. Res.},
  year    = {2023}
}

@inproceedings{fazlyab2019efficient,
  title     = {Efficient and Accurate Estimation of {L}ipschitz Constants for Deep Neural Networks},
  author    = {Fazlyab, Mahyar and Robey, Alexander and Hassani, Hamed and Morari, Manfred and Pappas, George J.},
  booktitle = {Adv. Neural Inf. Process. Syst.},
  volume    = {32},
  year      = {2019}
}

@article{lindemann2023safeplanning,
  title   = {Safe Planning in Dynamic Environments Using Conformal Prediction},
  author  = {Lindemann, Lars and Cleaveland, Matthew and Shim, Gihyun and Pappas, George J.},
  journal = {IEEE Robot. Autom. Lett.},
  volume  = {8},
  number  = {8},
  pages   = {5116--5123},
  year    = {2023},
  doi     = {10.1109/LRA.2023.3292071}
}

@inproceedings{dixit2023adaptive,
  title     = {Adaptive Conformal Prediction for Motion Planning among Dynamic Agents},
  author    = {Dixit, Anushri and Lindemann, Lars and Wei, Skylar X. and Cleaveland, Matthew and Pappas, George J. and Burdick, Joel W.},
  booktitle = {Proc. Learn. Dyn. Control Conf. (L4DC)},
  series    = {Proc. Mach. Learn. Res.},
  volume    = {211},
  pages     = {300--314},
  year      = {2023},
  publisher = {PMLR}
}

@inproceedings{yang2023safeperception,
  title     = {Safe perception-based control under stochastic sensor uncertainty using conformal prediction},
  author    = {Yang, Shuo and Pappas, George J. and Mangharam, Rahul and Lindemann, Lars},
  booktitle = {Proc. IEEE Conf. Decis. Control (CDC)},
  pages     = {6072--6078},
  year      = {2023},
}

@inproceedings{zhou2024adaptivecp,
  title     = {Safety-critical control with uncertainty quantification using adaptive conformal prediction},
  author    = {Zhou, Hao and Zhang, Yanze and Luo, Wenhao},
  booktitle = {Proc. Amer. Control Conf. (ACC)},
  pages     = {574--580},
  year      = {2024},
}

@article{tayal2025cpncbf,
  title   = {Cp-ncbf: A conformal prediction-based approach to synthesize verified neural control barrier functions},
  author  = {Tayal, Manan and Singh, Aditya and Jagtap, Pushpak and Kolathaya, Shishir},
  journal = {arXiv:2503.17395},
  year    = {2025}
}

@inproceedings{cped2025ncbf,
  title     = {{CPED-NCBFs}: A Conformal Prediction for Expert Demonstration-Based Neural Control Barrier Functions},
  author    = {Sumeadh, M. S. and Dsouza, Kevin and Prakash, Ravi},
  booktitle = {Proc. Ind. Control Conf. (ICC)},
  year      = {2025},
  doi       = {10.1109/ICC69100.2025.11372429},
  eprint    = {2507.15022},
  archivePrefix = {arXiv}
}

@article{zhang2025conformal,
  title   = {Conformal Prediction in the Loop: Risk-Aware Control Barrier Functions for Stochastic Systems with Data-Driven State Estimators},
  author  = {Zhang, Junhui and Hoxha, Bardh and Fainekos, Georgios and Panagou, Dimitra},
  journal = {IEEE Control Syst. Lett.},
  volume  = {9},
  pages   = {282--287},
  year    = {2025},
  doi     = {10.1109/LCSYS.2025.3571828}
}

@inproceedings{rahaman2026environments,
  title         = {When Environments Shift: Safe Planning with Generative Priors and Robust Conformal Prediction},
  author        = {Rahaman, Kaizer and Deshmukh, Jyotirmoy V. and Hota, Ashish R. and Lindemann, Lars},
  booktitle     = {Proc. Learn. Dyn. Control Conf. (L4DC)},
  year          = {2026},
  note          = {Accepted},
  eprint        = {2602.12616},
  archivePrefix = {arXiv},
  primaryClass  = {cs.RO}
}

@inproceedings{tibshirani2019covshift,
  title     = {Conformal Prediction under Covariate Shift},
  author    = {Tibshirani, Ryan J. and Barber, Rina Foygel and Cand{\`e}s, Emmanuel J. and Ramdas, Aaditya},
  booktitle = {Adv. Neural Inf. Process. Syst.},
  volume    = {32},
  year      = {2019}
}

@misc{d2026statistical,
  title         = {Statistical Contraction for Chance-Constrained Trajectory Optimization of Non-{G}aussian Stochastic Systems},
  author        = {D'Silva, Rihan Aaron and Tsukamoto, Hiroyasu},
  year          = {2026},
  eprint        = {2603.07092},
  archivePrefix = {arXiv},
  primaryClass  = {eess.SY}
}

@article{sheng2024safe,
  title   = {Safe {POMDP} online planning among dynamic agents via adaptive conformal prediction},
  author  = {Sheng, Shili and Yu, Pian and Parker, David and Kwiatkowska, Marta and Feng, Lu},
  journal = {IEEE Robot. Autom. Lett.},
  volume  = {9},
  number  = {11},
  pages   = {9946--9953},
  year    = {2024},
}

@article{cauchois2024robust,
  title   = {Robust validation: Confident predictions even when distributions shift},
  author  = {Cauchois, Maxime and Gupta, Suyash and Ali, Alnur and Duchi, John C.},
  journal = {J. Amer. Statist. Assoc.},
  volume  = {119},
  number  = {548},
  pages   = {3033--3044},
  year    = {2024},
}

@inproceedings{zhao2024rprvshift,
  title     = {Robust Conformal Prediction for {STL} Runtime Verification under Distribution Shift},
  author    = {Zhao, Yiqi and Hoxha, Bardh and Fainekos, Georgios and Deshmukh, Jyotirmoy V. and Lindemann, Lars},
  booktitle = {Proc. ACM/IEEE Int. Conf. Cyber-Phys. Syst.},
  pages     = {169--179},
  year      = {2024},
  doi       = {10.1109/ICCPS61052.2024.00022},
  eprint    = {2311.09482},
  archivePrefix = {arXiv},
  primaryClass  = {cs.SY}
}

@inproceedings{contreras2024sodampc,
  title     = {Safe, Out-of-Distribution-Adaptive {MPC} with Conformalized Neural Network Ensembles},
  author    = {Contreras, Jose Leopoldo and Shorinwa, Ola and Schwager, Mac},
  booktitle = {Proc. Learn. Dyn. Control Conf. (L4DC)},
  series    = {Proc. Mach. Learn. Res.},
  volume    = {283},
  pages     = {194--207},
  year      = {2025},
  publisher = {PMLR},
  eprint    = {2406.02436},
  archivePrefix = {arXiv},
  primaryClass  = {cs.RO}
}

@misc{zhou2025safeRLacp,
  title         = {Computationally and Sample Efficient Safe Reinforcement Learning Using Adaptive Conformal Prediction},
  author        = {Zhou, Hao and Zhang, Yanze and Luo, Wenhao},
  year          = {2025},
  eprint        = {2503.17678},
  archivePrefix = {arXiv},
  primaryClass  = {eess.SY}
}

@misc{prinster2026cpc,
  title         = {Conformal Policy Control},
  author        = {Prinster, Drew and Fannjiang, Clara and Park, Ji Won and Cho, Kyunghyun and Liu, Anqi and Saria, Suchi and Stanton, Samuel},
  year          = {2026},
  eprint        = {2603.02196},
  archivePrefix = {arXiv},
  primaryClass  = {stat.ML}
}

@inproceedings{li2025performativerisk,
  title     = {Performative Risk Control: Calibrating Models for Reliable Deployment under Performativity},
  author    = {Li, Victor and Chen, Baiting and Mao, Yuzhen and Lei, Qi and Deng, Zhun},
  booktitle = {Adv. Neural Inf. Process. Syst.},
  year      = {2025},
  eprint    = {2505.24097},
  archivePrefix = {arXiv},
  primaryClass  = {stat.ML}
}

@inproceedings{huang2024confpolicylearning,
  title     = {Conformal policy learning for sensorimotor control under distribution shifts},
  author    = {Huang, Huang and Sharma, Satvik and Loquercio, Antonio and Angelopoulos, Anastasios and Goldberg, Ken and Malik, Jitendra},
  booktitle = {Proc. IEEE Int. Conf. Robot. Autom. (ICRA)},
  pages     = {16285--16291},
  year      = {2024},
}

@misc{quadswarm,
  title         = {{QuadSwarm}: A Modular Multi-Quadrotor Simulator for Deep Reinforcement Learning with Direct Thrust Control},
  author        = {Huang, Zhehui and Batra, Sumeet and Chen, Tao and Krupani, Rahul and Kumar, Tushar and Molchanov, Artem and Petrenko, Aleksei and Preiss, James A. and Yang, Zhaojing and Sukhatme, Gaurav S.},
  year          = {2023},
  eprint        = {2306.09537},
  archivePrefix = {arXiv},
  primaryClass  = {cs.RO},
}

@book{kallenberg2002foundations,
  author    = {Kallenberg, Olav},
  title     = {Foundations of Modern Probability},
  edition   = {2},
  publisher = {Springer},
  year      = {2002},
  series    = {Probability and Its Applications}
}
%
%
%
%
\ifsubmit
  \useRomanappendicesfalse
  \begin{refsection}
  \appendices
  \makeatletter
  %
  %
  \long\def\protected@write#1#2#3{%
    \begingroup
      \let\thepage\relax
      #2%
      \let\protect\@unexpandable@protect
      \edef\reserved@a{\immediate\write#1{#3}}%
      \reserved@a
    \endgroup
  }%
  %
  %
  \renewenvironment{figure}[1][]{\def\@captype{figure}}{}%
  \renewenvironment{figure*}[1][]{\def\@captype{figure}}{}%
  \renewenvironment{table}[1][]{\def\@captype{table}}{}%
  \renewenvironment{table*}[1][]{\def\@captype{table}}{}%
  \renewenvironment{algorithm}[1][]{\def\@captype{algorithm}}{}%
  %
  %
  \let\clearpage\relax
  \let\cleardoublepage\relax
  \let\newpage\relax
  %
  %
  \newsavebox{\@appendixdump}%
  \begin{lrbox}{\@appendixdump}\begin{minipage}{\textwidth}%
  \makeatother
\else
  \clearpage
  \useRomanappendicesfalse
  \appendices
\fi

\section{Forward Invariance and Stability via Deterministic rCBF and rCLF}
\label{app:det}

This appendix provides invariance and stability results that follow from the rCBF and rCLF formulations in \Cref{def:cbf} and \Cref{def:clf}, respectively. These results build on the analysis from \cite{xu2015robustness} for CBFs and \cite{freeman2008robust} for CLFs. We present results for both continuous-time and discrete-time systems. The results below follow from elementary comparison arguments; we provide complete proofs since the exact statements are specific to our formulation.

\subsection{Continuous-Time Systems}
\label{sec:cont_time_re}
The next results are stated for systems of the form \eqref{eq:ct_true}.
\begin{theorem}[Forward Invariance via Deterministic rCBFs]\label{thm:det_ct_cbf}

Let $h(x)$ be a robust CBF on $\Kset$ with decay rate $\gamma$ and margin $r$. Assume that $\mathcal{C}$ is compact and that $\norm{\res(x,u)}\le r$ for all $(x,u)\in \Kset\times\U$. Furthermore, let $u(x)=\pi(x)$ be a continuous function that enforces the CBF constraint \eqref{eq:ct_cbf} for all $x\in\Kset$.  Then, $x(t)\in\Cset$  for all $t\in[0,T]$ if $x(0)\in \mathcal{C}$.
\end{theorem}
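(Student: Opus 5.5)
The plan is a comparison-lemma argument along the closed-loop solution, made rigorous by a first-exit-time argument. This handles the fact that the rCBF inequality is only guaranteed on $\Kset$.

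First, I will show that the closed-loop vector field $F(x):=f(x,\pi(x))$ is continuous, since $f$ and $\pi$ are continuous. By Peano's theorem a solution $x(t)$ therefore exists locally. I take it as given, consistent with the standing assumption that trajectories are defined on $[0,T]$. Then, along any portion of the solution lying in $\Kset$, I will compute
\begin{equation*}
\dot h(x(t))=\ip{\nabla h(x)}{\hat f(x,\pi(x))}+\ip{\nabla h(x)}{\res(x,\pi(x))}.
\end{equation*}
The first term is at least $-\gamma h(x)+\|\nabla h(x)\|r$ by \eqref{eq:ct_cbf}. The second term is at least $-\|\nabla h(x)\|\,r$ by Cauchy--Schwarz and $\|\res\|\le r$. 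Together this gives $\dot h(x(t))\ge-\gamma h(x(t))$ whenever $x(t)\in\Kset$.

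Second, I define $t^*:=\sup\{t\in[0,T]: x(s)\in\Kset \text{ for all } s\in[0,t]\}$. Since $x(0)\in\Cset\subseteq\Kset$ and $\Kset$ is open, $t^*>0$. On $[0,t^*)$ the comparison lemma (equivalently, $\tfrac{d}{dt}(e^{\gamma t}h(x(t)))\ge 0$) gives
\begin{equation*}
h(x(t))\ge e^{-\gamma t}h(x(0))\ge 0,
\end{equation*}
so $x(t)\in\Cset$ on $[0,t^*)$.

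Third, and this is the main obstacle, I must show $t^*=T$ and that the time $T$ is also covered. I argue by contradiction. Suppose $t^*<T$, or that $x(t^*)\notin\Kset$. By continuity of $x$ and closedness of $\Cset$ (compact, hence closed), $x(t^*)\in\Cset\subseteq\Kset$. Since $\Kset$ is open, the solution stays in $\Kset$ on a neighborhood $[t^*,t^*+\epsilon)$, which contradicts the definition of $t^*$ as a supremum unless $t^*=T$. The same closedness argument then gives $x(T)\in\Cset$.

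Compactness of $\Cset$ also ensures that the solution cannot blow up while it remains in $\Cset$, so it extends over all of $[0,T]$. This removes any reliance on extending the solution outside $\Kset$. The remaining subtlety is that the solution may be non-unique, since $F$ is only continuous. The argument above applies to every solution, so it covers all of them.
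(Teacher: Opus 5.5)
Your proof is correct and follows the paper's argument: the same Cauchy--Schwarz bound gives $\dot h+\gamma h\ge 0$ while the trajectory is in $\Kset$, the same integrating factor $e^{\gamma t}$ yields $h(x(t))\ge e^{-\gamma t}h(x(0))\ge 0$, and compactness of $\Cset\subseteq\Kset$ both keeps the solution inside $\Kset$ and lets it be extended to $[0,T]$. Your first-exit-time argument and your remark on non-uniqueness make explicit the continuation step that the paper delegates to Khalil's Theorem~3.3, which is stated for locally Lipschitz vector fields, whereas the theorem here only assumes that $\pi$ is continuous.
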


\begin{proof}
Since $h$ and $x$ are continuously differentiable, $h(x(t))$ is differentiable and, using \eqref{eq:ct_true}, such that
\begin{align*}
\dot h(x(t))
&= \ip{\nabla h(x(t))}{\hat f(x(t),\pi(x(t)))}\\
&\quad + \ip{\nabla h(x(t))}{\res(x(t),\pi(x(t)))} .
\end{align*}
The rCBF constraint \eqref{eq:ct_cbf} implies
\begin{equation*}
\ip{\nabla h(x(t))}{\hat f(x(t),\pi(x(t)))}+\gamma h(x(t))
\ge \norm{\nabla h(x(t))}\,r .
\end{equation*}
Since $\norm{\res(x,\pi(x))}\le r$ for all $x\in\Kset$, the Cauchy--Schwarz inequality gives
\begin{equation*}
-\ip{\nabla h(x(t))}{\res(x(t),\pi(x(t)))}
\le \norm{\nabla h(x(t))}\,r .
\end{equation*}
Combining the previous two inequalities yields $\dot h(x(t))+\gamma h(x(t))\ge0$ in case that $x(t)\in\Kset$. Hence
\begin{equation*}
\frac{\dd}{\dd t}\bigl[e^{\gamma t}h(x(t))\bigr]
=e^{\gamma t}\bigl(\dot h(x(t))+\gamma h(x(t))\bigr)\ge0 .
\end{equation*}
Thus, $h(x(t))\ge e^{-\gamma t}h(x(0))$ as long as the trajectory $x(t)$ remains in $\Kset$. If $x(0)\in\Cset$, then $x(t)\in\Cset$. Since $\Cset$ is compact and $\Cset\subseteq\Kset$, standard continuation arguments, e.g., \cite[Theorem~3.3]{khalil2002nonlinear}, extend the solution $x(t)$ and the previous reasoning holds for all $t\in[0,T]$. 
\end{proof}

\begin{theorem}[Stability via Deterministic rCLFs]\label{thm:det_ct_clf}
Let $V(x)$ be a robust CLF on $\Kset$  with decay rate $c$ and margin
$r$. Assume that $\mathcal{V}$ is compact and that $\norm{\res(x,u)}\le r$ for all $(x,u)\in \Kset\times\U$. Furthermore, let $u(x)=\pi(x)$ be a continuous function that enforces the CLF constraint \eqref{eq:ct_clf} for all $x\in\Kset$. Then, $\|x(t)\|\le\underline\alpha_V^{-1}\bigl(e^{-ct}\overline\alpha_V(\|x(0)\|)\bigr)$ for all $t\in[0,T]$ if $x(0)\in~\mathcal{V}$.
\end{theorem}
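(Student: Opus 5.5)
The proof mirrors that of \Cref{thm:det_ct_cbf}, with $V$ in place of $h$ and the sign of the inequality reversed.

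\textbf{Dissipation inequality.} As long as the trajectory stays in $\Kset$, $V(x(t))$ is differentiable and
\begin{equation*}
\dot V(x(t))=\ip{\nabla V(x(t))}{\hat f(x(t),\pi(x(t)))}+\ip{\nabla V(x(t))}{\res(x(t),\pi(x(t)))}.
\end{equation*}
The rCLF constraint \eqref{eq:ct_clf} enforced by $\pi$ bounds the first term by $-cV(x(t))-\norm{\nabla V(x(t))}\,r$. The assumption $\norm{\res(x,u)}\le r$, together with Cauchy--Schwarz, bounds the second term by $\norm{\nabla V(x(t))}\,r$. Adding the two bounds gives $\dot V(x(t))\le -cV(x(t))$.

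\textbf{Comparison argument.} We have
\begin{equation*}
\frac{\dd}{\dd t}\bigl[e^{ct}V(x(t))\bigr]=e^{ct}\bigl(\dot V(x(t))+cV(x(t))\bigr)\le 0,
\end{equation*}
so $V(x(t))\le e^{-ct}V(x(0))$ while $x(t)\in\Kset$.

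\textbf{Main obstacle: the trajectory stays in $\Kset$ on $[0,T]$.} Let $x(0)\in\mathcal V$ and let $t^\ast$ be the supremum of times up to which the solution exists and remains in $\Kset$. On $[0,t^\ast)$ the comparison bound gives $V(x(t))\le V(x(0))\le V_{\max}$, so $x(t)\in\mathcal V$. Since $\mathcal V$ is compact and $\mathcal V\subseteq\Kset$ with $\Kset$ open, the solution remains in a compact subset of $\Kset$. Continuity of $f$ and $\pi$ then lets us apply the standard continuation argument \cite[Theorem~3.3]{khalil2002nonlinear}, as in \Cref{thm:det_ct_cbf}: the solution extends and cannot leave $\mathcal V$. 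Hence $t^\ast\ge T$, and $V(x(t))\le e^{-ct}V(x(0))$ holds on all of $[0,T]$.

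\textbf{Conversion to the norm bound.} Because $x(t)\in\mathcal V\subseteq\Kset$, the sandwich bounds give
\begin{equation*}
\underline\alpha_V(\|x(t)\|)\le V(x(t))\le e^{-ct}V(x(0))\le e^{-ct}\overline\alpha_V(\|x(0)\|).
\end{equation*}
Since $\underline\alpha_V$ is class-$\mathcal K_\infty$, it is strictly increasing with a well-defined inverse on $\R_{\ge0}$. Applying $\underline\alpha_V^{-1}$ yields $\|x(t)\|\le\underline\alpha_V^{-1}\bigl(e^{-ct}\overline\alpha_V(\|x(0)\|)\bigr)$ for all $t\in[0,T]$.
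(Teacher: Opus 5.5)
Your proposal is correct and follows essentially the same route as the paper's proof: the same decomposition of $\dot V$, the rCLF constraint plus Cauchy--Schwarz to obtain $\dot V+cV\le 0$, the integrating factor $e^{ct}$, invariance of the compact sublevel set $\mathcal V\subseteq\Kset$, and the class-$\mathcal K_\infty$ sandwich bounds. Your explicit $t^\ast$ continuation argument only spells out in more detail the step the paper states briefly.
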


\begin{proof}
Similar to the proof of \Cref{thm:det_ct_cbf}, using \eqref{eq:ct_true} gives
\begin{align*}
\dot V(x(t))
&= \ip{\nabla V(x(t))}{\hat f(x(t),\pi(x(t)))}\\
&\quad + \ip{\nabla V(x(t))}{\res(x(t),\pi(x(t)))} .
\end{align*}
The rCLF constraint \eqref{eq:ct_clf} and the Cauchy--Schwarz inequality imply $\dot V(x(t))+cV(x(t))\le0$ in case that $x(t)\in\Kset$. Therefore
\begin{equation*}
\frac{\dd}{\dd t}\bigl[e^{ct}V(x(t))\bigr]
=e^{ct}\bigl(\dot V(x(t))+cV(x(t))\bigr)\le0,
\end{equation*}
and $V(x(t))\le e^{-ct}V(x(0))$. If $x(0)\in\mathcal V$, then $V(x(t))\le V(x(0))\le V_{\max}$, so the trajectory $x(t)$ remains in the compact set $\mathcal V\subseteq\Kset$ and the previous reasoning holds for all $t\in[0,T]$. The assumptions on $\underline\alpha_V(\cdot)$ and $\overline\alpha_V(\cdot)$ in \Cref{def:clf} give
\begin{equation*}
\underline\alpha_V(\norm{x(t)})\le V(x(t))\le e^{-ct}\overline\alpha_V(\norm{x(0)}),
\end{equation*}
and monotonicity of $\underline\alpha_V(\cdot)$ yields $\|x(t)\|\le\underline\alpha_V^{-1}\bigl(e^{-ct}\overline\alpha_V(\|x(0)\|)\bigr)$ for all $t\in[0,T]$.
\end{proof}

\subsection{Discrete-Time Systems}

The next results are stated for discrete-time systems 
\begin{align}
    x_{t+1}=f(x_t,u_t)=\hat f(x_t,u_t)+\res(x_t,u_t)
    \label{eq:dt_true_det}
\end{align}
where $t=0,\ldots,T-1$ now denotes discrete time. For a set $\Kset\subseteq\X$ and robustness margin $r\ge0$, let $\Kset_r^+\subseteq\X$ be a compact set satisfying
\begin{extendedonly}
\begin{equation}
\Kset_r^+\supseteq
\bigl\{\hat f(x,u)+w:\ (x,u)\in\Kset\times\U,\ \norm{w}\le r\bigr\}.
\label{eq:dt_one_step_tube_det}
\end{equation}
\end{extendedonly}
\begin{compactonly}
\begin{multline}
\Kset_r^+\supseteq
\bigl\{\hat f(x,u)+w:\ (x,u)\in\Kset\times\U,
\norm{w}\le r\bigr\}.
\label{eq:dt_one_step_tube_det}
\end{multline}
\end{compactonly}
This set is introduced only to specify the domain on which the Lipschitz bounds from the definitions below are valid; in particular, it contains both the nominal model $\hat f(x,u)$ and every perturbed nominal model $\hat f(x,u)+w$ with $\norm{w}\le r$.

\begin{definition}[Discrete-time robust CBF]\label{def:dt_cbf}
Let $\Kset\subseteq\X$ be a set and $h:\X\to\R$ be a Lipschitz continuous function. Define the set $\Cset:=\{x\in\X:h(x)\ge0\}$ and assume that $\Cset\subseteq\Kset$. Let $r\ge0$, $\Kset_r^+$ be a compact set satisfying \eqref{eq:dt_one_step_tube_det}, and $h$ be Lipschitz continuous on $\Kset_r^+$ with Lipschitz constant $L_h$. We say that $h(x)$ is a discrete-time robust control barrier function (DT-rCBF) on $\Kset$ with decay rate $\gamma\in(0,1]$ and robustness margin $r$ if, for every $x\in\Kset$, there exists $u\in\U$ satisfying
\begin{equation}
    h\bigl(\hat f(x,u)\bigr)-L_h r\ge (1-\gamma)h(x).
    \label{eq:dt_rcbf_det}
\end{equation}
\end{definition}

\begin{definition}[Discrete-time robust CLF]\label{def:dt_clf}
Let $\Kset\subseteq\X$ be a set, $V:\X\to\R_{\ge0}$ be a Lipschitz continuous and positive-definite function, $V_{\max}>0$ be a constant, and $\underline\alpha_V(\cdot),\overline\alpha_V(\cdot):\R\to\R$ be locally Lipschitz continuous class-$\mathcal K_\infty$ functions such that $\underline\alpha_V(\norm{x})\le V(x)\le\overline\alpha_V(\norm{x})$ for all $x\in\Kset$. Define the set $\mathcal V:=\{x\in\X:V(x)\le V_{\max}\}$ and assume that $\mathcal V\subseteq\Kset$. Let $r\ge0$, $\Kset_r^+$ be a compact set satisfying \eqref{eq:dt_one_step_tube_det}, and $V$ be Lipschitz continuous on $\Kset_r^+$ with Lipschitz constant $L_V$. We say that $V(x)$ is a discrete-time robust control Lyapunov function (DT-rCLF) on $\Kset$ with decay rate $c\in(0,1)$ and robustness margin $r$ if, for every $x\in\Kset$, there exists $u\in\U$ satisfying
\begin{equation}
    V\bigl(\hat f(x,u)\bigr)+L_V r\le (1-c)V(x).
    \label{eq:dt_rclf_det}
\end{equation}
\end{definition}

Similar to the continuous-time results, we assume that the robustness margin $r$ bounds the model error as
\begin{equation}
    \norm{\res(x,u)}\le r,
    \qquad \forall (x,u)\in\Kset\times\U .
    \label{eq:dt_residual_uniform_det}
\end{equation}

\begin{theorem}[Forward invariance via deterministic discrete-time rCBFs]
\label{thm:det_dt_cbf}
Let $h(x)$ be a DT-rCBF on $\Kset$ with decay rate $\gamma$ and margin $r$. Assume that the error bound \eqref{eq:dt_residual_uniform_det} holds. Furthermore, let $u_t=\pi(x_t)$ be a policy that enforces the DT-rCBF constraint \eqref{eq:dt_rcbf_det} for all $x_t\in\Kset$. Then, for the discrete-time system \eqref{eq:dt_true_det}, it holds that $x_t\in\Cset$ for all $t=0,\dots,T$ if $x_0\in\Cset$.
\end{theorem}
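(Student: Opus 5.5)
The plan is a one-step induction on $t$, using the one-step decomposition $x_{t+1}=\hat f(x_t,\pi(x_t))+\res(x_t,\pi(x_t))$ together with the Lipschitz continuity of $h$ on $\Kset_r^+$. The inductive hypothesis is $x_t\in\Cset$, i.e.\ $h(x_t)\ge 0$. The base case $t=0$ is the assumption $x_0\in\Cset$.

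For the inductive step, suppose $x_t\in\Cset$. Since $\Cset\subseteq\Kset$, we have $x_t\in\Kset$, so the policy enforces \eqref{eq:dt_rcbf_det} at $x_t$:
\[
h\bigl(\hat f(x_t,u_t)\bigr)\ge (1-\gamma)h(x_t)+L_h r,
\qquad u_t=\pi(x_t).
\]
Set $w_t:=\res(x_t,u_t)$. The pair $(x_t,u_t)$ lies in $\Kset\times\U$, so \eqref{eq:dt_residual_uniform_det} gives $\norm{w_t}\le r$. By \eqref{eq:dt_one_step_tube_det}, both points $\hat f(x_t,u_t)$ (take $w=0$) and $x_{t+1}=\hat f(x_t,u_t)+w_t$ lie in $\Kset_r^+$.

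On $\Kset_r^+$ the function $h$ is $L_h$-Lipschitz, so
\[
h(x_{t+1})\ge h\bigl(\hat f(x_t,u_t)\bigr)-L_h\norm{w_t}\ge h\bigl(\hat f(x_t,u_t)\bigr)-L_h r.
\]
Combining this with the rCBF inequality gives $h(x_{t+1})\ge (1-\gamma)h(x_t)$. Because $\gamma\in(0,1]$ we have $1-\gamma\ge 0$, and $h(x_t)\ge 0$ by hypothesis, so $h(x_{t+1})\ge 0$, i.e.\ $x_{t+1}\in\Cset$. Induction over $t=0,\dots,T-1$ then yields $x_t\in\Cset$ for all $t=0,\dots,T$. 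Iterating the inequality also gives the quantitative bound $h(x_t)\ge(1-\gamma)^t h(x_0)$.

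There is no real analytic obstacle here, unlike the continuous-time case, which needed continuation arguments. The only points needing care are:
\begin{itemize}
\item checking that both points at which $h$ is compared lie in $\Kset_r^+$, so the Lipschitz bound applies;
\item using $\gamma\le 1$ to get the sign of $1-\gamma$ right;
\item using $\Cset\subseteq\Kset$, so the constraint and the residual bound are available at every visited state.
\end{itemize}
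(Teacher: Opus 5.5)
Your proposal is correct and follows essentially the same route as the paper's proof. Both argue by induction on $t$: use $\Cset\subseteq\Kset$ and the residual bound to place $\hat f(x_t,u_t)$ and $x_{t+1}$ in $\Kset_r^+$, apply the Lipschitz bound on $h$ together with the DT-rCBF inequality to get $h(x_{t+1})\ge(1-\gamma)h(x_t)$, and conclude $h(x_t)\ge 0$ from $\gamma\in(0,1]$.
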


\begin{proof}
We argue by induction. The claim is immediate for $t=0$. Suppose that $x_t\in\Cset$ for some $t\in\{0,\dots,T-1\}$. Since $\Cset\subseteq\Kset$, the policy enforces \eqref{eq:dt_rcbf_det} at $x_t$ with $u_t=\pi(x_t)$. Moreover, \eqref{eq:dt_residual_uniform_det} implies $\norm{\res(x_t,u_t)}\le r$, so both $\hat f(x_t,u_t)$ and $x_{t+1}=\hat f(x_t,u_t)+\res(x_t,u_t)$ belong to $\Kset_r^+$ by \eqref{eq:dt_one_step_tube_det}. The Lipschitz continuity of $h$ on $\Kset_r^+$ with constant $L_h$ gives
\begin{align*}
    h(x_{t+1})
    &\ge h\bigl(\hat f(x_t,u_t)\bigr)-L_h\norm{\res(x_t,u_t)}\\
    &\ge h\bigl(\hat f(x_t,u_t)\bigr)-L_h r.
\end{align*}
Applying \eqref{eq:dt_rcbf_det} yields $h(x_{t+1})\ge(1-\gamma)h(x_t)$. Iterating from $t=0$ gives $h(x_t)\ge(1-\gamma)^t h(x_0)$ for all $t=0,\dots,T$. Since $x_0\in\Cset$ implies $h(x_0)\ge0$ and $\gamma\in(0,1]$, it follows that $h(x_t)\ge0$ for all $t=0,\dots,T$, and hence $x_t\in\Cset$ for all $t=0,\dots,T$.
\end{proof}

\begin{theorem}[Stability via deterministic discrete-time rCLFs]
\label{thm:det_dt_clf}
Let $V(x)$ be a DT-rCLF on $\Kset$ with decay rate $c$ and margin $r$. Assume  that the error bound \eqref{eq:dt_residual_uniform_det} holds. Furthermore, let $u_t=\pi(x_t)$ be a policy that enforces the DT-rCLF constraint \eqref{eq:dt_rclf_det} for all $x_t\in\Kset$. Then, for the discrete-time system \eqref{eq:dt_true_det}, it holds that $\norm{x_t}\le \underline\alpha_V^{-1}\!\left((1-c)^t\overline\alpha_V(\norm{x_0})\right)$ for all $t=0,\dots,T$ if $x_0\in\mathcal V$. 
\end{theorem}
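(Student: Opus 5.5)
We follow the same pattern as the proof of \Cref{thm:det_dt_cbf}: first an induction that establishes the one-step decrease $V(x_{t+1})\le(1-c)V(x_t)$ together with invariance of $\mathcal V$, and then a conversion to a norm bound through the comparison functions.

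\textbf{Inductive claim.} For each $t\in\{0,\dots,T\}$ we show that $x_t\in\mathcal V$ and $V(x_t)\le(1-c)^tV(x_0)$. The base case $t=0$ holds because $x_0\in\mathcal V$.

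\textbf{Inductive step.} Suppose $x_t\in\mathcal V$ for some $t\le T-1$.
\begin{itemize}
\item Since $\mathcal V\subseteq\Kset$, the policy enforces \eqref{eq:dt_rclf_det} at $x_t$ with $u_t=\pi(x_t)$.
\item By \eqref{eq:dt_residual_uniform_det}, $\norm{\res(x_t,u_t)}\le r$. By \eqref{eq:dt_one_step_tube_det}, both $\hat f(x_t,u_t)$ and $x_{t+1}=\hat f(x_t,u_t)+\res(x_t,u_t)$ lie in $\Kset_r^+$.
\item Lipschitz continuity of $V$ on $\Kset_r^+$ then gives
\[
V(x_{t+1})\le V\bigl(\hat f(x_t,u_t)\bigr)+L_V\norm{\res(x_t,u_t)}\le V\bigl(\hat f(x_t,u_t)\bigr)+L_V r\le (1-c)V(x_t).
\]
\end{itemize}
Since $c\in(0,1)$ and $V\ge0$, this yields $V(x_{t+1})\le V(x_t)\le V_{\max}$, so $x_{t+1}\in\mathcal V$. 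Iterating the decrease gives $V(x_t)\le(1-c)^tV(x_0)$ for all $t=0,\dots,T$.

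\textbf{Conversion to a norm bound.} Every $x_t$ lies in $\mathcal V\subseteq\Kset$, so the sandwich bounds $\underline\alpha_V(\norm{x})\le V(x)\le\overline\alpha_V(\norm{x})$ apply. Hence
\[
\underline\alpha_V(\norm{x_t})\le V(x_t)\le(1-c)^t\,\overline\alpha_V(\norm{x_0}).
\]
Since $\underline\alpha_V$ is class-$\mathcal K_\infty$, it is strictly increasing and surjective onto $\R_{\ge0}$, so its inverse exists and is increasing. Applying it gives $\norm{x_t}\le\underline\alpha_V^{-1}\bigl((1-c)^t\overline\alpha_V(\norm{x_0})\bigr)$.

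\textbf{Main obstacle.} The only delicate point is domain bookkeeping. The sandwich bounds and the rCLF constraint are available only on $\Kset$, and the Lipschitz bound only on $\Kset_r^+$. The induction must therefore carry the invariance $x_t\in\mathcal V$ at every step, which is what keeps each $x_t$ inside the region where these bounds can be used.
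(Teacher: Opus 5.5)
Your proposal is correct and matches the paper's proof essentially step for step. Both arguments use induction via $\mathcal V\subseteq\Kset$, placing $\hat f(x_t,u_t)$ and $x_{t+1}$ in $\Kset_r^+$, and combining the Lipschitz bound with the DT-rCLF inequality to get $V(x_{t+1})\le(1-c)V(x_t)$; they then establish invariance of $\mathcal V$ and finish with the $\underline\alpha_V,\overline\alpha_V$ sandwich. Putting $x_t\in\mathcal V$ explicitly in your inductive claim is a slightly tidier ordering than the paper's, which concludes invariance only after iterating the decrease.
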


\begin{proof}
The proof again proceeds by induction. The claim is immediate for $t=0$. Suppose that $x_t\in\mathcal V$ for some $t\in\{0,\dots,T-1\}$. Since $\mathcal V\subseteq\Kset$, the policy enforces \eqref{eq:dt_rclf_det} at $x_t$ with $u_t=\pi(x_t)$. Moreover, \eqref{eq:dt_residual_uniform_det} implies $\norm{\res(x_t,u_t)}\le r$, so both $\hat f(x_t,u_t)$ and $x_{t+1}=\hat f(x_t,u_t)+\res(x_t,u_t)$ belong to $\Kset_r^+$ by \eqref{eq:dt_one_step_tube_det}. The Lipschitz continuity of $V$ on $\Kset_r^+$ with constant $L_V$ gives
\begin{align*}
    V(x_{t+1})
    &\le V\bigl(\hat f(x_t,u_t)\bigr)+L_V\norm{\res(x_t,u_t)}\\
    &\le V\bigl(\hat f(x_t,u_t)\bigr)+L_V r.
\end{align*}
Applying \eqref{eq:dt_rclf_det} gives $V(x_{t+1})\le(1-c)V(x_t)$. Iterating from $t=0$ gives $V(x_t)\le(1-c)^tV(x_0)$ for all $t=0,\dots,T$. Since $x_0\in\mathcal V$ implies $V(x_0)\le V_{\max}$ and $c\in(0,1)$, it follows that $V(x_t)\le V_{\max}$ for all $t=0,\dots,T$, and hence $x_t\in\mathcal V$ for all $t=0,\dots,T$. Finally, the class-$\mathcal K_\infty$ bounds in \Cref{def:dt_clf} give
\begin{equation*}
\underline\alpha_V(\norm{x_t})\le V(x_t)\le (1-c)^t\overline\alpha_V(\norm{x_0}),
\end{equation*}
and monotonicity of $\underline\alpha_V$ yields the stated norm bound.
\end{proof}

\section{Continuous-time scores and discretization}
\label{app:ct_meas}

\begin{lemma}[Measurability of $s^{\ct}$]
\label{lem:score_measurable}
If the functions $x(t)$, $u(t)$, and $\res(x,u)$ are continuous, then the nonconformity score $s^{\ct}(\tau)=\sup_{t\in[0,T]}\norm{\res(x(t),u(t))}$ is measurable.
\end{lemma}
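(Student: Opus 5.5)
The plan is to exhibit $s^{\ct}$ as a countable supremum of measurable functions. The randomness enters only through the initial condition $x(0)\sim\mu_{\XX}$, and $\tau$ is a deterministic function of it. So I will regard $s^{\ct}$ as a map $\omega\mapsto\sup_{t\in[0,T]}\norm{\res(x(t;\omega),u(t;\omega))}$ and show that it is Borel measurable, with values in $[0,\infty]$ (in fact finite by compactness).

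First, for each fixed $t$, the map $\omega\mapsto\norm{\res(x(t;\omega),u(t;\omega))}$ is measurable. It is the composition of the continuous functions $\res$ and $\norm{\cdot}$ with the random vector $(x(t),u(t))$. That vector is measurable because the solution map $x(0)\mapsto x(t)$ is continuous under the standing well-posedness assumptions (or at least measurable as a limit of Euler approximations), and $u(t)=\pi(x(t))$ with $\pi$ measurable.

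Second, I will use path continuity. For fixed $\omega$, the map $t\mapsto\norm{\res(x(t),u(t))}$ is continuous on $[0,T]$, being a composition of continuous functions. Hence the supremum over $[0,T]$ equals the supremum over the countable dense set $[0,T]\cap\mathbb{Q}$ (with $T$ added). Indeed, any $t$ is a limit of rationals $t_k$, and continuity gives $g(t)=\lim g(t_k)\le\sup_{\mathbb{Q}}g$.

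Third, a countable supremum of measurable extended-real functions is measurable. Continuity on the compact interval $[0,T]$ further gives that the supremum is a finite maximum.

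The only delicate point is the first step, namely joint measurability of $(x(t),u(t))$ in the random input. The hypothesis asserts continuity of $x(t)$ and $u(t)$ only in $t$. So I would rely on $\tau$ being a random element of the path space $C([0,T])$ with its Borel $\sigma$-algebra. On that space, evaluation maps are continuous, and therefore measurable. With this choice, measurability reduces to the measurability of $\tau$ itself, which is implicit in writing $\tau\sim\mathcal D_\pi$.
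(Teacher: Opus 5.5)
Your final argument is exactly the paper's proof: treat $\tau$ as an element of the path space $C([0,T];\X\times\U)$, use continuity of the evaluation maps, reduce the supremum to $[0,T]\cap\mathbb{Q}$ by path continuity, and conclude from measurability of countable suprema. The paper likewise shows $s^{\ct}$ is measurable on path space and then composes with the random $\tau$. Your earlier detour through continuity of the solution map $x(0)\mapsto x(t)$ is unnecessary and should be dropped, since without uniqueness of solutions neither that continuity nor the Euler-limit justification follows from mere continuity.
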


\begin{proof}
Let $\mathcal T:=C([0,T];\X\times\U)$ denote the set of continuous functions mapping from the domain $[0,T]$ to the domain $\X\times\U$. Furthermore, let $\mathcal T$ be equipped with the supremum norm. For each fixed $t\in[0,T]$, the evaluation function $e_t:\mathcal T\to\X\times\U$ given by $e_t(\tau):=(x(t),u(t))$ is continuous. Since the function $\res$ is continuous, it also follows that the function
\begin{equation*}
    g_t(\tau):=\norm{\res(e_t(\tau))}=\norm{\res(x(t),u(t))}
\end{equation*}
is continuous and hence also measurable for each fixed $t$. Moreover, for every trajectory $\tau\in\mathcal T$, the function $t\mapsto g_t(\tau)$ is continuous on the compact interval $[0,T]$. Therefore, using the density of $[0,T]\cap\mathbb Q$ in $[0,T]$, we have that
\begin{equation*}
    s^{\ct}(\tau)=\sup_{t\in[0,T]}g_t(\tau)
    =\sup_{t\in[0,T]\cap\mathbb Q}g_t(\tau).
\end{equation*}
The right-hand side is a countable supremum of measurable functions, and is therefore measurable; see, e.g., standard closure properties of measurable maps in \cite[Ch.~1]{kallenberg2002foundations}. Thus $s^{\ct}:\mathcal T\to\R_{\ge0}$ is measurable. 
\end{proof}
The previous result implies the following: if $\tau$ is a random trajectory, then $s^{\ct}(\tau)$ is a random variable by composition. Consequently, if $\tau,\tau^{(1)},\dots,\tau^{(n)}$ are i.i.d. random trajectories, then applying the same map $s^{\ct}$ to each trajectory preserves independence and identical distribution, so that $s^{\ct}(\tau),s^{\ct}(\tau^{(1)}),\dots,s^{\ct}(\tau^{(n)})$ are i.i.d. random variables. 

In practice, one rarely has access to datasets that contain $x(t)$ and $u(t)$ for all continuous times $t\in[0,T]$. Indeed, one often only has access to $x(t_k)$ and $u(t_k)$ at sampling times $\{t_k\}_{k=0}^N$ with sampling period $\Delta t$.  When the nonconformity score $s^{\ct}(\cdot)$ is evaluated over such a sampled discrete-time domain
and $\norm{\res(x(t),u(t))}$ is Lipschitz continuous with Lipschitz constant $L$ (later shown to be guaranteed under \Cref{ass:ct_lip}), then a one-step interpolation gives $s^{\ct}(\tau)\le \max_k\norm{\res(x(t_k),u(t_k))} + L\Delta t$, since for any time $t$ there exists a sampling time $t_k$ with $|t-t_k|\le\Delta t$. Under \Cref{ass:ct_lip}, a valid Lipschitz constant is $L = (L_{\res,x} + L_{\res,u}L_\pi)\sup_{(x,u)\in\Oset\times\U}\norm{f(x,u)}$ where $L_{\res,x},L_{\res,u},L_\pi$ and $\Oset$ are explained in Section \ref{sec:algorithm}.

\section{Split CP and Calibration-conditional Adversarially-Robust CP}
\label{app:arcp}

\begin{proof}[Proof of \Cref{thm:split_conformal_validity}]
Recall that $k:=\lceil(1-\alpha)(n+1)\rceil$. By assumption, the test and calibration trajectories are i.i.d. draws from $\mathcal{D}_\pi$ and are defined over $[0,T]$. Hence their nonconformity scores{, defined by \eqref{eq:score_ct},} are well-defined and i.i.d. The rank of the test score $s$ among $\{s,s^{(1)},\dots,s^{(n)}\}$ is uniform on $\{1,\dots,n+1\}$. If this rank is at most $k$, then $s\le s^{[k]}$, and therefore $\Prob_{n+1}(s\le s^{[k]})\ge k/(n+1)\ge 1-\alpha$.
\end{proof}

\begin{proof}[Proof of \Cref{lem:arcp}]
Let $q:=s^{[k]}$ with $k=\lceil(1-\bar\alpha)n\rceil$. By the conditional split conformal guarantee of \cite{duchi2025sample}, stated in the notation of this paper in \cite[Lemma~2]{lindemann2025formal}, we have
\begin{equation}
    \Prob_n\!\left\{\Prob(s\le q\mid D^{\mathrm{cal}})\ge 1-\alpha\right\}\ge 1-\delta .
    \label{eq:sample_conditional_cp_app}
\end{equation}
Let $\mathcal G:=\{D^{\mathrm{cal}}:\Prob(s\le q\mid D^{\mathrm{cal}})\ge1-\alpha\}$. By \eqref{eq:sample_conditional_cp_app}, $\Prob_n\{\mathcal G\}\ge1-\delta$. Fix any calibration dataset in $\mathcal G$. Conditional on this dataset, both $q$ and $M(D^{\mathrm{cal}})$ are deterministic. Since $\tilde s\le s+M(D^{\mathrm{cal}})$ almost surely, the event inclusion
\begin{equation*}
    \{s\le q\}\subseteq\{\tilde s\le q+M(D^{\mathrm{cal}})\}
\end{equation*}
holds conditionally on $D^{\mathrm{cal}}$. Therefore,
\begin{align*}
\Prob\!\left(\tilde s\le s^{[k]}+M(D^{\mathrm{cal}})\mid D^{\mathrm{cal}}\right)
&\ge \Prob(s\le s^{[k]}\mid D^{\mathrm{cal}})\nonumber\\
&\ge 1-\alpha .
\end{align*}
Thus $\mathcal G$ is contained in the event appearing in \Cref{lem:arcp}, and the desired calibration-conditional ARCP statement follows from $\Prob_n\{\mathcal G\}\ge1-\delta$.
\end{proof}

\section{Fixed-policy and continuous-time certificate proofs}
\label{app:event}
Unlike Appendix~\ref{app:det}, which assumes that $\norm{\res(x,u)}\le r$ for all $(x,u)\in\Kset\times\U$, the results in this appendix use only the trajectory-level event $\{s^{\ct}(\tau)\le r\}$ generated by the nonconformity score $s^{\ct}$ in~\eqref{eq:score_ct} and the threshold $r:=s^{[k]}$.

\begin{proof}[Proof of \Cref{thm:baseline_cr}]
Recall that $r:=s^{[k]}$ with $k=\lceil(1-\alpha)(n+1)\rceil$. Next, define the event
\begin{equation*}
    \mathcal E:=\{s^{\ct}(\tau)\le r\}.
\end{equation*}
By \Cref{thm:split_conformal_validity}, we have that $\Prob_{n+1}(\mathcal E)\ge1-\alpha$. On the event $\mathcal E$, the definition of the nonconformity score \eqref{eq:score_ct} gives
\begin{equation*}
    \norm{\res(x(t),u(t))}\le r,
    \qquad t\in[0,T].
\end{equation*}

\emph{Safety.} Since we assume $\mu_{\XX}(\Cset)=1$, it holds that $x(0)\in\Cset$ almost surely. Additionally, since $h(x)$ is a robust CBF and  the policy $\pi$ enforces the CBF constraint  \eqref{eq:ct_cbf}, the same proof of \Cref{thm:det_ct_cbf} --- here with the trajectory-level bound $\norm{\res(x(t),u(t))}\le r$ for all $t\in[0,T]$ --- gives $h(x(t))\ge e^{-\gamma t}h(x(0))\ge 0$ for all $t\in[0,T]$, but now only on the event $\mathcal E$. Hence, we have $x(t)\in\Cset$ for all $t\in[0,T]$ on the event $\mathcal E$ so that we can conclude
\begin{equation*}
\Prob_{n+1}\bigl(x(t)\in\Cset,\ \forall t\in[0,T]\bigr)
\ge\Prob_{n+1}(\mathcal E)\ge1-\alpha .
\end{equation*}

\emph{Stability.} Since we assume $\mu_{\XX}(\mathcal V)=1$, it holds that $x(0)\in\mathcal V$ almost surely. Additionally, since $V(x)$ is a robust CLF and  $\pi$ enforces the CLF constraint \eqref{eq:ct_clf}, the same proof of \Cref{thm:det_ct_clf} --- here with the trajectory-level bound $\norm{\res(x(t),u(t))}\le r$ for all $t\in[0,T]$ --- gives
$\norm{x(t)}\le
\underline\alpha_V^{-1}\bigl(e^{-ct}\overline\alpha_V(\norm{x(0)})\bigr)$
for all $t\in[0,T]$, but now only on the event $\mathcal E$. Hence, we can conclude 
\begin{align*}
    \Prob_{n+1}&(\|x(t)\|\le\underline\alpha_V^{-1}\bigl(e^{-ct}\overline\alpha_V(\|x(0)\|)\bigr),\  \forall t\in[0,T])\\
    &\ge \Prob_{n+1}(\mathcal E)\ge1-\alpha.
\end{align*}
\end{proof}

\begin{proof}[Proof of \Cref{thm:srcr_validity}]
We first define the ``coupled'' nonconformity score
\begin{equation*}
    \tilde s_{j+1}:=s^{\ct}(x_j(0),\pi_{j+1}),
\end{equation*}
at episode $j+1$ using the same initial condition $x_j(0)$ as the nonconformity score $s_j=s^{\ct}(x_j(0),\pi_j)$ at episode $j$.  According to \Cref{ass:domination} we have that
\begin{equation*}
    \tilde s_{j+1}\le s_j+M_{j+1}
\end{equation*}
almost surely with $M_{j+1}:=\rho(\pi_{j+1},\pi_j)$. \Cref{ass:within_episode_iid} guarantees that $s_j,s_j^{(1)},\ldots,s_j^{(n_j)}\sim \mathcal S_{\pi_j}$ are i.i.d. random variables. Since $M_{j+1}$ is a nonnegative, bounded function of $D_j^{\mathrm{cal}}$, \Cref{lem:arcp} implies that
\begin{equation*}
\Prob_{n_j}\Bigl\{
    \Prob\bigl(\tilde s_{j+1}\le q_j+M_{j+1}\mid D_j^{\mathrm{cal}}\bigr)
    \ge1-\alpha
\Bigr\}\ge1-\delta .
\end{equation*}
If $r_{j+1}\ge q_j+M_{j+1}$, it further holds that
\begin{equation*}
\Prob_{n_j}\Bigl\{
    \Prob\bigl(\tilde s_{j+1}\le r_{j+1}\mid D_j^{\mathrm{cal}}\bigr)
    \ge1-\alpha
\Bigr\}\ge1-\delta .
\end{equation*}
Finally, we note that $x_j(0)$ and $x_{j+1}(0)$ are independent of $D_j^{\mathrm{cal}}$ and follow the same distribution  $\mu_{\XX}$ by \Cref{ass:within_episode_iid}. Hence $\tilde s_{j+1}$ and $s_{j+1}=s^{\ct}(x_{j+1}(0),\pi_{j+1})$ follow the same distribution, implying that 
\begin{align*}
    \Prob\bigl(\tilde s_{j+1}\le r_{j+1}\mid D_j^{\mathrm{cal}}\bigr)=\Prob\bigl( s_{j+1}\le r_{j+1}\mid D_j^{\mathrm{cal}}\bigr)
\end{align*}
as well as 
\begin{align*}
    \Prob\bigl(\tilde s_{j+1}\le q_j+M_{j+1}\mid D_j^{\mathrm{cal}}\bigr)=\Prob\bigl( s_{j+1}\le q_j+M_{j+1}\mid D_j^{\mathrm{cal}}\bigr),
\end{align*}
which concludes the proof.
\end{proof}

\begin{proof}[Proof of \Cref{thm:ct_cbf_certificate}]
Define the event $\mathcal E_{j+1}:=\{s_{j+1}\le r_{j+1}\}$. Since the margin $r_{j+1}$ satisfies $r_{j+1} \ge q_j + M_{j+1}$, equation \eqref{eq:srcr_validity} gives
\begin{equation*}
\Prob_{n_j}\Bigl\{
    \Prob\bigl(\mathcal E_{j+1}\mid D_j^{\mathrm{cal}}\bigr)\ge1-\alpha
\Bigr\}\ge1-\delta.
\end{equation*}
On the event $\mathcal E_{j+1}$, the definition of the nonconformity score \eqref{eq:score_ct} gives
\begin{equation*}
\norm{\res(x_{j+1}(t),\pi_{j+1}(x_{j+1}(t)))}\le r_{j+1},
\qquad t\in[0,T].
\end{equation*}

\emph{Safety.} Since we assume $\mu_{\XX}(\Cset)=1$, it holds that $x_{j+1}(0)\in\Cset$ almost surely. Additionally, since $h(x)$ is a robust CBF with margin $r_{j+1}$ and the  policy $\pi_{j+1}$ enforces the CBF constraint \eqref{eq:ct_cbf}, the same proof of \Cref{thm:baseline_cr} gives $h(x_{j+1}(t))\ge e^{-\gamma t}h(x_{j+1}(0))$ for all $t\in[0,T]$ on the event $\mathcal E_{j+1}$. Hence, we have  $x_{j+1}(t)\in\Cset$ for all $t\in[0,T]$ on the event $\mathcal E_{j+1}$, which concludes the proof. 

\emph{Stability.}  Since we assume $\mu_{\XX}(\mathcal V)=1$, it holds that $x_{j+1}(0)\in\mathcal V$ almost surely. Additionally, since $V(x)$ is a robust CLF with margin $r_{j+1}$ and $\pi_{j+1}$ enforces the CLF constraint \eqref{eq:ct_clf}, the same proof of \Cref{thm:baseline_cr} gives
\begin{equation*}
\norm{x_{j+1}(t)}\le
\underline\alpha_V^{-1}\bigl(e^{-ct}\overline\alpha_V(\norm{x_{j+1}(0)})\bigr),
\qquad t\in[0,T],
\end{equation*}
on the event $\mathcal E_{j+1}$. 
\end{proof}
\section{Derivations for \Cref{prop:counterexample_main}}
\label{app:counterexample}

In \Cref{prop:counterexample_main}, the conformal threshold $r$ is computed from trajectories generated by a fixed policy $\pi$ and then used  to obtain the CBF-QP policy $\pi_r^{\mathrm{cbf}}$. Initial states are sampled from a uniform distribution $x(0)\sim\mu_{\XX}=\mathrm{Unif}[0,1]$.

\emph{Safety of the fixed policy $\pi$.} Under $\pi(x)\equiv -u_0$, the true dynamics $\dot x(t)=f(x(t),u(t))=\hat{f}(x(t),u(t))+\varepsilon(x(t),u(t))$ reduce to $\dot x(t)=(1+x(t))u_0$, resulting in trajectories $x(t)=(1+x(0))e^{u_0 t}-1$. Note that the trajectory $x(t)$ is increasing with time, meaning that $x(t)\ge x(0)\ge 0$ for $t\ge 0$ so that every trajectory $x(t)$ remains in $\Cset$ since $x(0)\sim\mathrm{Unif}[0,1]$. Consequently, under the policy $\pi(x)\equiv -u_0$, it holds that $\Prob_{\tau\sim\mathcal D_\pi}\bigl(x(t)\in\Cset,\ \forall t\in[0,T]\bigr)=1$.

\emph{Computation of $r$.} Since the trajectory $x(t)$ under $\pi(x)\equiv -u_0$ is increasing and $\varepsilon(x,u)=-(2+x)u$, the nonconformity score~\eqref{eq:score_ct} is attained at $t=T$ so that $s^{\ct}(\tau) = u_0(2+x(T)) = u_0(1+(1+x(0))e^{u_0T})$. Because $x(0)\sim\mathrm{Unif}[0,1]$, the nonconformity score $s^{\ct}(\tau)$ is uniformly distributed on $[u_0(1+e^{u_0T}),\, u_0(1+2e^{u_0T})]$. Consequently, the analytical $(1-\alpha)$-quantile is $r=u_0(1+(2-\alpha)e^{u_0T})$.

\emph{CBF-QP synthesis.} For fixed values of $x\in\R$, $r\ge 0$, and nominal input $\bar u\in\R$, the  CBF-QP $\min_{u\in\R}\tfrac12(u-\bar u)^2$ subject to $u+\gamma x\ge r$ has feasible set $[r-\gamma x,\infty)$. It therefore follows that the unique minimizer is
\begin{equation}
  u^\star(x)=\max\{\bar u,\ r-\gamma x\}.
  \label{eq:ce_scalar_qp_solution}
\end{equation}

Since $h(x)=x$, $\nabla h(x)=1$, and $\hat f(x,u)=u$, the rCBF constraint~\eqref{eq:ct_cbf} reduces to $u+\gamma x\ge r$. Applying~\eqref{eq:ce_scalar_qp_solution} with $\bar u=-u_0$ gives $\pi_r^{\mathrm{cbf}}(x)=\max\{-u_0,\, r-\gamma x\}$. When $0<\gamma<r/2$, we have $r-\gamma x\ge r-\gamma > r/2 > 0 > -u_0$ for all $x\in[0,1]$, so that $\pi_r^{\mathrm{cbf}}(x)=r-\gamma x$ for $x\in [0,1]$.

\emph{Nonconformity score violation.} The nonconformity score $s^{\ct}(\tau)$ under $\pi_r^{\mathrm{cbf}}(x)$ exceeds $r$ already at time $t=0$. Indeed, since $u(0):=\pi_r^{\mathrm{cbf}}(x(0))=r-\gamma x(0)$ for $x(0)\in[0,1]$,
\begin{align*}
    s^{\ct}(\tau)
    &\ge\bigl|\varepsilon\bigl(x(0),u(0)\bigr)\bigr|\\
    &=(2+x(0))(r-\gamma x(0))\\
    &\ge 2(r-\gamma)>r,
\end{align*}
where the last step uses $\gamma<r/2$. Since this holds for every $x(0)\in[0,1]$, it follows that $\Prob_{\tau\sim\mathcal D_{\pi_r^{\mathrm{cbf}}}}(s^{\ct}(\tau)\le r)=0$ under the policy $\pi_r^{\mathrm{cbf}}(x)$.

\emph{Safety violation.} Let $t_{\mathrm{hit}}:=\inf\{t\ge 0: x(t)\le 0\}$ denote the first time the trajectory reaches the boundary $h(x)=0$. If $x(0)=0$, then $\dot x(0)=-r<0$, so that the trajectory leaves $\Cset$ immediately. Now, consider $x(0)\in(0,1]$. For every $t<t_{\mathrm{hit}}$, we have $x(t)>0$ by the definition of $t_{\mathrm{hit}}$. While $x(t)\in(0,1]$, the closed-loop dynamics satisfy
\begin{equation*}
    \dot x(t)=-(1+x(t))(r-\gamma x(t))<-(r-\gamma),
\end{equation*}
because $(1+x)(r-\gamma x)>r-\gamma$ for every $x\in(0,1]$ and $0<\gamma<r/2$. Therefore, we see that $x(t)<x(0)-(r-\gamma)t$ for all $t<t_{\mathrm{hit}}$, which implies that
\begin{equation*}
    t_{\mathrm{hit}}<\frac{x(0)}{r-\gamma}\le \frac{1}{r-\gamma}.
\end{equation*}
If $T(r-\gamma)\ge1$, then every trajectory with $x(0)\in[0,1]$ leaves $\Cset$ during $[0,T]$. Consequently, it follows that
$\Prob_{\tau\sim\mathcal D_{\pi_r^{\mathrm{cbf}}}}(x(t)\in\Cset,\,\forall t\in[0,T])=0$ 
under the policy $\pi_r^{\mathrm{cbf}}(x)$

\textcolor{black}{We refer to \Cref{fig:prop1-large} for plots of the trajectories, nonconformity scores, and model errors.}

\section{Controller Sensitivity via Parametric QP Analysis}
\label{app:qp}

This appendix justifies the constants $L_\pi$ and $L_U$ used in \Cref{ass:ct_lip,ass4}. The robust CBF-QP in \eqref{eq:cbf_qp} and the robust CLF-QP in \eqref{eq:clf_qp} are instances of the parametric quadratic program
\begin{equation}
\begin{aligned}
    u^*(x,r)\in\arg\min_{u\in\R^{m}}\;&\frac12u^\top H(x)u+c(x)^\top u\\
    \mathrm{s.t.}\;&A(x)u\le b(x)+r d(x),
\end{aligned}
\label{eq:parametric_qp_app}
\end{equation}
where $x\in\Oset$, $r\in\Rset$, $H(x)\in\R^{m\times m}$, $A(x)\in\R^{p\times m}$, and $b(x),d(x)\in\R^p$. The policy is $\pi_r(x):=u^*(x,r)$. For an index set $\mathcal J\subseteq\{1,\ldots,p\}$, $A_{\mathcal J}(x)$, $b_{\mathcal J}(x)$, and $d_{\mathcal J}(x)$ denote the rows or components indexed by $\mathcal J$.

For a primal-dual solution $(u^*(x,r),\lambda^*(x,r))$, define
\begin{align*}
    c_\ell(x,r)&:=A_\ell(x)u^*(x,r)-b_\ell(x)-r d_\ell(x),\nonumber\\
    \mathcal A(x,r)&:=\{\ell\in\{1,\ldots,p\}:c_\ell(x,r)=0\}.
\end{align*}
On a region where the active set is a fixed set $\mathcal J$, the active KKT equations are
\begin{equation}
\begin{aligned}
&\begin{bmatrix}
H(x)&A_{\mathcal J}(x)^\top\\
A_{\mathcal J}(x)&0
\end{bmatrix}
\begin{bmatrix}
u^*(x,r)\\ \lambda_{\mathcal J}^*(x,r)
\end{bmatrix}\\
&\quad+
\begin{bmatrix}
c(x)\\ -b_{\mathcal J}(x)-r d_{\mathcal J}(x)
\end{bmatrix}=0 .
\end{aligned}
\label{eq:active_kkt_system_app}
\end{equation}
We write
\begin{equation*}
    K_{\mathcal J}(x):=
\begin{bmatrix}
H(x)&A_{\mathcal J}(x)^\top\\
A_{\mathcal J}(x)&0
\end{bmatrix}
\end{equation*}
for the active-set KKT matrix.

\begin{assumption}[Strong regularity of the parametric QP]
\label{ass:qp_strong_regularity_app}
On the compact domain $\Oset\times\Rset$, the following hold: (i) there exists $\mu_H>0$ such that $H(x)\succeq\mu_H I$ for all $x\in\Oset$; (ii) the data $H,c,A,b,d$ are Lipschitz continuous and uniformly bounded on $\Oset$; and (iii) LICQ and strict complementarity hold at every solution of \eqref{eq:parametric_qp_app}. Equivalently, the KKT system \eqref{eq:active_kkt_system_app} is strongly regular uniformly over $\Oset\times\Rset$.
\end{assumption}
These are standard sufficient conditions for local single-valuedness and Lipschitz continuity of parametric QP solution maps; see \cite[Ch.~4]{bonnans2000perturbation}, \cite[Thm.~2.1]{dontchev2014implicit}, and \cite{mestres2025regularity}.

\begin{proposition}[Lipschitz continuity of $\pi_r(x)$ in $x$]
\label{prop:Lux}
Under \Cref{ass:qp_strong_regularity_app}, there exists $L_\pi\ge0$ such that, for every $r\in\Rset$ and all $x,x'\in\Oset$, it holds that
\begin{equation}
    \norm{u^*(x,r)-u^*(x',r)}\le L_\pi\norm{x-x'} .
    \label{eq:qp_lipschitz_x_app}
\end{equation}
\end{proposition}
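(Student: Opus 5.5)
The plan is to prove the Lipschitz bound \eqref{eq:qp_lipschitz_x_app} locally around each point, using the active-set KKT system \eqref{eq:active_kkt_system_app} and the implicit function theorem for strongly regular generalized equations. We then glue the local estimates into a global one along line segments, with a single constant that is uniform over $\Oset\times\Rset$ by compactness.

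\emph{Step 1 (uniqueness and local solution representation).} By \Cref{ass:qp_strong_regularity_app}(i), $H(x)\succeq\mu_H I$, so the QP \eqref{eq:parametric_qp_app} is strongly convex. Whenever it is feasible, $u^*(x,r)$ is therefore unique. LICQ then makes the multiplier $\lambda^*(x,r)$ unique as well.

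Fix $(\bar x,r)$ and let $\mathcal J:=\mathcal A(\bar x,r)$. Strict complementarity gives $\lambda^*_{\mathcal J}>0$ and $c_\ell<0$ for $\ell\notin\mathcal J$. LICQ says $A_{\mathcal J}(\bar x)$ has full row rank, and together with $H\succ0$ this makes $K_{\mathcal J}(\bar x)$ nonsingular. Since $K_{\mathcal J}$ depends continuously on $x$, it stays nonsingular nearby. The linear system \eqref{eq:active_kkt_system_app} then defines a candidate pair $(u,\lambda_{\mathcal J})(x)=-K_{\mathcal J}(x)^{-1}[c(x);-b_{\mathcal J}(x)-rd_{\mathcal J}(x)]$. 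By continuity, the strict inequalities $\lambda_{\mathcal J}>0$ and $c_\ell<0$ for $\ell\notin\mathcal J$ persist on a neighbourhood of $\bar x$. There the candidate satisfies the full KKT conditions, and by uniqueness it equals $u^*(x,r)$.

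\emph{Step 2 (local Lipschitz constant).} On that neighbourhood we write
\[
u^*(x,r)-u^*(x',r)=-\bigl[K_{\mathcal J}(x)^{-1}-K_{\mathcal J}(x')^{-1}\bigr]w(x)-K_{\mathcal J}(x')^{-1}\bigl[w(x)-w(x')\bigr],
\]
where $w(x):=[c(x);-b_{\mathcal J}(x)-rd_{\mathcal J}(x)]$. We use the identity $K^{-1}-K'^{-1}=K^{-1}(K'-K)K'^{-1}$, the Lipschitz continuity and boundedness of the data from \Cref{ass:qp_strong_regularity_app}(ii), and the boundedness of $r\in\Rset$. This gives a local constant
\[
L_{\mathcal J}\le \sigma_{\mathcal J}^{-2}L_K\sup\|w\|+\sigma_{\mathcal J}^{-1}L_w,
\]
where $\sigma_{\mathcal J}$ is a lower bound on $\sigma_{\min}(K_{\mathcal J}(x))$.

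\emph{Step 3 (uniformity and globalization) — the main obstacle.} We need $\sigma_{\mathcal J}$ bounded away from zero uniformly over the compact set $\Oset\times\Rset$; this is exactly the uniform strong regularity in \Cref{ass:qp_strong_regularity_app}. If it failed, there would be a sequence with $\sigma_{\min}\to0$. Passing to a convergent subsequence with a constant active set (there are finitely many) gives a limit point where LICQ fails, a contradiction. With finitely many index sets $\mathcal J$, this yields a single $\bar L:=\max_{\mathcal J}L_{\mathcal J}$.

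The remaining issue is extending the local bound to all of $\Oset$. For $x,x'$, we take the segment between them, cover it by finitely many of the neighbourhoods from Step 1 (compactness), and sum the local estimates. The total telescopes to $\bar L\|x-x'\|$. This requires the segment to lie in the domain where the assumption holds. If $\Oset$ is not convex, we would either work on its convex hull, assuming the data and regularity extend there, or use a path-length argument with the compactness-based constant $\max\{\bar L,\,2\sup\|u^*\|/\eta\}$. Here $\eta$ is a Lebesgue number for the cover; this handles pairs with $\|x-x'\|\ge\eta$.

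We set $L_\pi$ to this constant. It is independent of $r$ because every estimate above was uniform over $\Rset$.
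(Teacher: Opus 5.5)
Your proof is correct, and its skeleton matches the paper's: regularity of the KKT system gives a local Lipschitz bound, and compactness of $\Oset\times\Rset$ makes the constant uniform. The difference is in how you get the local step.

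The paper invokes the implicit-function theory for strongly regular generalized equations. That theory gives local single-valuedness and Lipschitz continuity of the primal--dual map without fixing an active set. You instead use strict complementarity to freeze the active set near each point, so that $(u^*,\lambda^*_{\mathcal J})=-K_{\mathcal J}(x)^{-1}w(x)$ is explicit. This is more elementary and yields a concrete constant $\sigma^{-2}L_K\sup\|w\|+\sigma^{-1}L_w$. The price is that it relies on strict complementarity, which the general theory does not need.

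You also spell out the local-to-global step that the paper compresses into ``finite subcover, hence a uniform Lipschitz constant.'' Your bound $\max\{\bar L,2\sup\|u^*\|/\eta\}$ with a Lebesgue number $\eta$ is the right way to handle a nonconvex $\Oset$.

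To make that constant genuinely independent of $r$, run Step~1 jointly in $(x,r)$. The candidate $-K_{\mathcal J}(x)^{-1}w(x,r)$ is continuous in both variables, so the strict inequalities persist on a product neighbourhood. Then take $\eta$ as a Lebesgue number of the resulting cover of $\Oset\times\Rset$. The same joint continuity is what lets your Step~3 limit argument conclude that the constraints in $\mathcal J$ stay active at the limit point. More simply, you can shrink each neighbourhood so that $\sigma_{\min}(K_{\mathcal J}(x))\ge\tfrac12\sigma_{\min}(K_{\mathcal J}(\bar x))$ there and take a finite subcover.
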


\begin{proof}
By \Cref{ass:qp_strong_regularity_app}, the KKT generalized equation associated with \eqref{eq:parametric_qp_app} is strongly regular at every $(x,r)\in\Oset\times\Rset$. The cited sensitivity results for strongly regular generalized equations imply that the primal-dual solution map is locally single-valued and locally Lipschitz in the parameter $x$. Compactness of $\Oset\times\Rset$ yields a finite subcover of these local neighborhoods, and hence a uniform Lipschitz constant $L_\pi$ for the primal component. This proves \eqref{eq:qp_lipschitz_x_app}.
\end{proof}

\begin{proposition}[Lipschitz continuity of $\pi_r(x)$ in $r$]
\label{prop:LU_def}
Under \Cref{ass:qp_strong_regularity_app}, there exists $L_U\ge0$ such that, for all $r,r'\in\Rset$, it holds that
\begin{equation}
    \CzNorm{\pi_r-\pi_{r'}}{\Oset}\le L_U |r-r'| .
    \label{eq:qp_lipschitz_policy}
\end{equation}
\end{proposition}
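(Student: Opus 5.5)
The approach mirrors the proof of \Cref{prop:Lux}, with the roles of $x$ and $r$ exchanged: first a local Lipschitz estimate in $r$ from strong regularity, then a uniform constant by compactness. For fixed $x\in\Oset$, the parameter $r$ enters \eqref{eq:parametric_qp_app} only through the right-hand side $b(x)+r\,d(x)$, and it does so affinely. By \Cref{ass:qp_strong_regularity_app}, the KKT generalized equation is strongly regular at every $(x,r)\in\Oset\times\Rset$. The implicit-function results cited there (\cite[Thm.~2.1]{dontchev2014implicit}, \cite[Ch.~4]{bonnans2000perturbation}) then give the following: there is a neighborhood $N_{(x,r)}$ on which the primal--dual solution map is single-valued and Lipschitz in the joint parameter $(x,r)$, with a constant $\ell_{(x,r)}$.

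Next we make the $r$-dependence explicit on a fixed active set. Under strict complementarity, the active set $\mathcal J$ is locally constant. Hence \eqref{eq:active_kkt_system_app} determines $(u^*,\lambda^*_{\mathcal J})$ as $K_{\mathcal J}(x)^{-1}\bigl(-c(x),\,b_{\mathcal J}(x)+r\,d_{\mathcal J}(x)\bigr)$, where $K_{\mathcal J}(x)$ is invertible by $H\succeq\mu_H I$ and LICQ. Consequently
\[
\partial_r u^*(x,r)=\bigl[K_{\mathcal J}(x)^{-1}\bigr]_{u\text{-block}}\,(0,\,d_{\mathcal J}(x)),
\]
whose norm is bounded by $\sup\norm{K_{\mathcal J}(x)^{-1}}\,\sup\norm{d(x)}$. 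Both suprema are finite: the data are uniformly bounded and $K_{\mathcal J}$ is continuous and invertible on the compact set. Taking the maximum over the finitely many index sets $\mathcal J$ yields a uniform local bound $\bar\ell$.

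To globalize, fix $x$ and take $r<r'$ in the interval $\Rset$. Strong regularity at every point makes $r\mapsto u^*(x,r)$ continuous on $\Rset$. We cover the segment $[r,r']$ by finitely many neighborhoods from the compactness cover of $\Oset\times\Rset$, which handles possible active-set switches at isolated breakpoints. We then chain the local estimates along the segment:
\[
\norm{u^*(x,r)-u^*(x,r')}\le \bar\ell\,|r-r'|.
\]
Because $\Rset$ is convex, this telescoping argument is legitimate and gives a constant independent of $x$. Setting $L_U:=\bar\ell$ and taking the supremum over $x\in\Oset$ yields \eqref{eq:qp_lipschitz_policy}.

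The main obstacle is uniformity across active-set changes. At switching values of $r$ strict complementarity is what forbids degenerate points, so the local pieces glue continuously. I would handle this step either by the finite-subcover argument above, or, more directly, by invoking the Lipschitz continuity of the solution map of a strongly regular generalized equation with respect to right-hand-side perturbations (Robinson's theorem). This route gives a modulus depending only on the uniform strong-regularity constant, and that constant is bounded on the compact set $\Oset\times\Rset$.
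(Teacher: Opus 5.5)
Your proposal is correct and follows essentially the paper's own proof: you bound $\partial_r u^*(x,r)$ on a fixed active set through $K_{\mathcal J}(x)^{-1}$ and $d_{\mathcal J}(x)$, make that bound uniform by compactness, and then chain the estimate along $[r,r']$ using continuity of $r\mapsto u^*(x,r)$. One point needs a more careful statement: take the supremum of $\norm{K_{\mathcal J}(x)^{-1}}$ only over the closure of the region where $\mathcal J$ is actually active, where LICQ guarantees invertibility, as the paper's $\mathfrak A(x,r)$ does, and not over all of $\Oset$ and all index sets.
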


\begin{proof}
Fix $x\in\Oset$ and consider an interval in $r$ on which the active set is constant and equal to $\mathcal J$. Differentiating the KKT system \eqref{eq:active_kkt_system_app} with respect to $r$ gives
\begin{equation*}
    K_{\mathcal J}(x)
    \begin{bmatrix}
        \partial_r u^*(x,r)\\
        \partial_r\lambda_{\mathcal J}^*(x,r)
    \end{bmatrix}
    =
    \begin{bmatrix}
        0\\ d_{\mathcal J}(x)
    \end{bmatrix} .
\end{equation*}
LICQ and $H(x)\succeq\mu_HI$ imply that $K_{\mathcal J}(x)$ is nonsingular for every active set that occurs under \Cref{ass:qp_strong_regularity_app}. Hence
\begin{equation}
    \norm{\partial_r u^*(x,r)}
    \le
    \norm{K_{\mathcal J}(x)^{-1}}
    \norm{\begin{bmatrix}0\\ d_{\mathcal J}(x)\end{bmatrix}} .
    \label{eq:dr_u_bound_app}
\end{equation}
Let $\mathfrak A(x,r)$ denote the finite collection of active sets of the strongly regular regions whose closures contain $(x,r)$. Define
\begin{equation}
    L_U:=\sup_{(x,r)\in\Oset\times\Rset}\;\sup_{\mathcal J\in\mathfrak A(x,r)}
    \norm{K_{\mathcal J}(x)^{-1}}
    \norm{\begin{bmatrix}0\\ d_{\mathcal J}(x)\end{bmatrix}} .
    \label{eq:LU_constant_app}
\end{equation}
The quantity in \eqref{eq:LU_constant_app} is finite because the domain is compact, $d$ is bounded, and strong regularity prevents singular active-set KKT matrices on active regions. Along the segment between any $r,r'\in\Rset$, the map $r\mapsto u^*(x,r)$ is continuous and piecewise continuously differentiable with finitely many active-set changes. Integrating \eqref{eq:dr_u_bound_app} over those regions yields
\begin{equation*}
    \norm{u^*(x,r)-u^*(x,r')}\le L_U|r-r'|,
    \qquad x\in\Oset .
\end{equation*}
Taking the supremum over $x\in\Oset$ gives \eqref{eq:qp_lipschitz_policy}.
\end{proof}

\Cref{prop:Lux} provides $L_\pi$ for \Cref{ass:ct_lip}, while \Cref{prop:LU_def} provides $L_U$ for \Cref{ass4}. Together with the trajectory-sensitivity constant $\beta_T$ in \eqref{eq:betaT_main}, these constants define $\kappa=\beta_TL_U$ in Appendix~\ref{app:sensitivity}.

\section{Nonconformity Score Sensitivity Analysis and Derivation of \texorpdfstring{$\beta_T$}{βT} }
\label{app:sensitivity}

This appendix derives the nonconformity score sensitivity constant $\beta_T$ based on which we later define $\kappa=\beta_TL_U$. Throughout, $\Oset$ denotes the compact set from \Cref{ass:ct_lip} and controllers belong to $\{\pi_r\}_{r\in\Rset}$. For two trajectories sharing the same initial condition $x_0\in\XX$ under policies $\pi_r$ and $\pi_{r'}$, we write $x(t):=x(t;x_0,\pi_r)$ and $x'(t):=x(t;x_0,\pi_{r'})$ with corresponding inputs $u(t):=\pi_r(x(t))$ and $u'(t):=\pi_{r'}(x'(t))$. Both trajectories exist on $[0,T]$ and remain in $\Oset$ by \Cref{ass:ct_lip}. We also use the notation $s(x_0,\pi_r):=s^{\ct}(\tau(x_0,\pi_r))$ for the nonconformity score~\eqref{eq:score_ct}.

\subsection{Closed-loop state deviation}

\begin{lemma}\label{lem:state_dev}
Under \Cref{ass:ct_lip}, let $x_0\in \XX$, $r,r'\in \Rset$, and $\Lambda_x:=L_x+L_uL_\pi$. Then for all $t\in[0,T]$, we have
\begin{equation}
  \norm{x'(t)-x(t)}
  \le \frac{L_u}{\Lambda_x}\bigl(e^{\Lambda_x t}-1\bigr)\,\CzNorm{\pi_{r'}-\pi_r}{\Oset},
  \label{eq:ct_state_sensitivity}
\end{equation}
with the convention $(e^{\Lambda_x t}-1)/\Lambda_x=t$ when $\Lambda_x=0$.
\end{lemma}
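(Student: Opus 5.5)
We will prove \Cref{lem:state_dev} by a standard Gr\"onwall argument applied to the integral form of the closed-loop dynamics. Let $e(t):=x'(t)-x(t)$ with $e(0)=0$, since both trajectories start at $x_0$. By \Cref{ass:ct_lip}, both trajectories exist on $[0,T]$ and remain in $\Oset$, so all Lipschitz bounds are available along them. We write
\begin{equation*}
  e(t)=\int_0^t \bigl(f(x'(\sigma),\pi_{r'}(x'(\sigma)))-f(x(\sigma),\pi_r(x(\sigma)))\bigr)\,\dd\sigma .
\end{equation*}

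The key step is to bound the integrand. Using condition (i) of \Cref{ass:ct_lip}, it is at most $L_x\norm{e(\sigma)}+L_u\norm{\pi_{r'}(x'(\sigma))-\pi_r(x(\sigma))}$. We then split the input difference by inserting $\pi_r(x'(\sigma))$:
\begin{equation*}
  \norm{\pi_{r'}(x')-\pi_r(x)}\le\norm{\pi_{r'}(x')-\pi_r(x')}+\norm{\pi_r(x')-\pi_r(x)}\le \CzNorm{\pi_{r'}-\pi_r}{\Oset}+L_\pi\norm{e}.
\end{equation*}
The first bound uses $x'(\sigma)\in\Oset$, and the second uses condition (iii) of \Cref{ass:ct_lip}. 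Writing $\Delta:=\CzNorm{\pi_{r'}-\pi_r}{\Oset}$, this gives $\phi(t):=\norm{e(t)}\le \int_0^t(\Lambda_x\phi(\sigma)+L_u\Delta)\,\dd\sigma$.

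To conclude, we apply the integral Gr\"onwall inequality with constant forcing, or equivalently compare with the solution of $\dot\psi=\Lambda_x\psi+L_u\Delta$, $\psi(0)=0$. This solution is $\psi(t)=\frac{L_u\Delta}{\Lambda_x}(e^{\Lambda_x t}-1)$, which yields \eqref{eq:ct_state_sensitivity}. To handle $\Lambda_x=0$, note that the bound reduces directly to $\phi(t)\le L_u\Delta t$, matching the stated convention.

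The only delicate point is measurability and continuity of $\phi$, which is needed for Gr\"onwall to apply. It follows because $x,x'$ are continuous (indeed absolutely continuous) solutions, so $\phi$ is continuous. No other obstacle is expected.
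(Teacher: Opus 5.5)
Your proposal is correct and follows essentially the same route as the paper's proof. Both use the Lipschitz bound from \Cref{ass:ct_lip}(i), split the input difference through $\pi_r(x'(t))$ using $x'(t)\in\Oset$ and condition (iii), and then apply Gr\"onwall's inequality with zero initial deviation. The only difference is that you apply Gr\"onwall to the integral inequality $\norm{e(t)}\le\int_0^t(\Lambda_x\norm{e(\sigma)}+L_u\Delta)\,\dd\sigma$, whereas the paper bounds the derivative of the deviation directly; your integral form is marginally cleaner, since it never requires differentiating the norm of the deviation.
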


\begin{proof}
Define the state deviation $\Delta(t):=x'(t)-x(t)$. Since both trajectories satisfy the system dynamics~\eqref{eq:ct_true}, we have $\dot\Delta(t)= f(x'(t),u'(t))- f(x(t),u(t))$. The Lipschitz property of $f$ on $\Oset\times\U$ (see \Cref{ass:ct_lip}(i)) gives
\[
\norm{\dot\Delta(t)}\le L_x\norm{x'(t)-x(t)}+L_u\norm{u'(t)-u(t)}.
\]
For the deviation in control inputs, we obtain
\begin{extendedonly}
\[
\norm{u'(t)-u(t)}=\norm{\pi_{r'}(x'(t))-\pi_r(x(t))}\le \norm{\pi_{r'}(x'(t))-\pi_r(x'(t))}+\norm{\pi_r(x'(t))-\pi_r(x(t))}.
\]
\end{extendedonly}
\begin{compactonly}
\begin{multline*}
\norm{u'(t)\!-\!u(t)}=\norm{\pi_{r'}(x'(t))\!-\!\pi_r(x(t))}\\
\le \norm{\pi_{r'}(x'(t))\!-\!\pi_r(x'(t))}\!+\!\norm{\pi_r(x'(t))\!-\!\pi_r(x(t))}.
\end{multline*}
\end{compactonly}
Since $x'(t)\in\Oset$, the first term satisfies $\norm{\pi_{r'}(x'(t))-\pi_r(x'(t))}\le \sup_{x\in\Oset}\norm{\pi_{r'}(x)-\pi_r(x)}=\CzNorm{\pi_{r'}-\pi_r}{\Oset}$. The Lipschitz property of $\pi_r$ on $\Oset$ (see \Cref{ass:ct_lip}(iii)) bounds the second term as $\norm{\pi_r(x'(t))-\pi_r(x(t))}\le L_\pi\norm{x'(t)-x(t)}$. Substituting both bounds into the dynamics inequality yields
\begin{extendedonly}
\[
\norm{\dot\Delta(t)}\le (L_x+L_uL_\pi)\norm{\Delta(t)}+L_u\CzNorm{\pi_{r'}-\pi_r}{\Oset}=\Lambda_x\norm{\Delta(t)}+L_u\CzNorm{\pi_{r'}-\pi_r}{\Oset}.
\]
\end{extendedonly}
\begin{compactonly}
\begin{align*}
\norm{\dot\Delta(t)}&\le (L_x+L_uL_\pi)\norm{\Delta(t)}+L_u\CzNorm{\pi_{r'}-\pi_r}{\Oset}\\
&= \Lambda_x\norm{\Delta(t)}+L_u\CzNorm{\pi_{r'}-\pi_r}{\Oset}.
\end{align*}
\end{compactonly}
Since $\Delta(0)=0$, Gr\"onwall's inequality gives
\begin{align*}
\norm{\Delta(t)}
&\le L_u\CzNorm{\pi_{r'}-\pi_r}{\Oset}
   \int_0^t e^{\Lambda_x(t-\tau)}\,\dd\tau \\
&= \frac{L_u}{\Lambda_x}(e^{\Lambda_x t}-1)\CzNorm{\pi_{r'}-\pi_r}{\Oset},
\end{align*}
which is exactly~\eqref{eq:ct_state_sensitivity} and completes the proof.
\end{proof}

\begin{remark}[Discrete-time counterpart]\label{rem:dt_state_dev}
In discrete time, we define $\Delta_{t+1}:=f(x'_t,u'_t)-f(x_t,u_t)$. The same Lipschitz continuity and triangle-inequality argument gives $\norm{\Delta_{t+1}}\le \Lambda_x\norm{\Delta_t}+L_u\CzNorm{\pi_{r'}-\pi_r}{\Oset}$ with $\Delta_0=0$. Unrolling the recursion gives $\norm{\Delta_t}\le L_u\CzNorm{\pi_{r'}-\pi_r}{\Oset}\sum_{m=0}^{t-1}\Lambda_x^{m}$.
\end{remark}

\subsection{Nonconformity score sensitivity}

\begin{proposition}\label{prop:betaT_summary}
Under \Cref{ass:ct_lip}, for every $x_0\in\XX$ and $r,r'\in\Rset$, we have
\begin{equation*}
|s(x_0,\pi_{r'})-s(x_0,\pi_r)|\le \beta_T\,\CzNorm{\pi_{r'}-\pi_r}{\Oset},
\end{equation*}
where $\beta_T := L_{\res,u} +(L_{\res,x}+L_{\res,u}L_\pi)\frac{L_u}{\Lambda_x}(e^{\Lambda_x T}-1)$.

\end{proposition}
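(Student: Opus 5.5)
The plan is to reduce the score difference to a pointwise comparison of the residual along the two trajectories, and then use \Cref{lem:state_dev} to control the state deviation. Fix $x_0\in\XX$ and $r,r'\in\Rset$, and let $x(t),x'(t),u(t),u'(t)$ be as in \Cref{lem:state_dev}. Both trajectories remain in $\Oset$ by \Cref{ass:ct_lip}, so all Lipschitz bounds apply along them. Write $g(t):=\norm{\res(x(t),u(t))}$ and $g'(t):=\norm{\res(x'(t),u'(t))}$. By \eqref{eq:score_ct}, the two scores are $\sup_{t\in[0,T]} g$ and $\sup_{t\in[0,T]} g'$. We then use the elementary fact that $|\sup g'-\sup g|\le\sup_t|g'(t)-g(t)|$: for any $t$, $g'(t)\le g(t)+|g'(t)-g(t)|\le \sup g+\sup|g'-g|$, and symmetrically. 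This reduces the claim to a uniform-in-$t$ bound on $|g'(t)-g(t)|$.

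For each $t$, the reverse triangle inequality and the Lipschitz continuity of $\res$ on $\Oset\times\U$ (constants $L_{\res,x},L_{\res,u}$, inherited from \Cref{ass:ct_lip}(i)--(ii)) give
\[
|g'(t)-g(t)|\le L_{\res,x}\norm{x'(t)-x(t)}+L_{\res,u}\norm{u'(t)-u(t)}.
\]
We bound the input deviation exactly as in the proof of \Cref{lem:state_dev}: $\norm{u'(t)-u(t)}\le \CzNorm{\pi_{r'}-\pi_r}{\Oset}+L_\pi\norm{x'(t)-x(t)}$. Substituting yields
\[
|g'(t)-g(t)|\le L_{\res,u}\CzNorm{\pi_{r'}-\pi_r}{\Oset}+(L_{\res,x}+L_{\res,u}L_\pi)\norm{x'(t)-x(t)}.
\]

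Finally, we insert \eqref{eq:ct_state_sensitivity} and use that $t\mapsto (e^{\Lambda_x t}-1)/\Lambda_x$ is nondecreasing, so it is maximized at $t=T$ (with the convention $T$ when $\Lambda_x=0$). Taking the supremum over $t\in[0,T]$ produces exactly $\beta_T\,\CzNorm{\pi_{r'}-\pi_r}{\Oset}$ with the stated $\beta_T$. The only step requiring care is the sup-difference inequality: we must not bound the difference of suprema by evaluating both scores at a common maximizing time, which need not exist for both. The inequality above handles this directly, and no further difficulty is expected since \Cref{lem:state_dev} supplies the Gr\"onwall estimate.
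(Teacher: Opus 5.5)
Your proposal is correct and follows the paper's proof essentially step for step. Both arguments use the sup-difference inequality, then the reverse triangle inequality with the Lipschitz bound on $\res$, then the input-deviation bound from \Cref{lem:state_dev}, and finally insert \eqref{eq:ct_state_sensitivity} and evaluate it at $t=T$ by monotonicity; your explicit justification of $|\sup g'-\sup g|\le\sup_t|g'(t)-g(t)|$, which the paper only asserts, is a welcome addition.
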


\begin{proof}
Since $|\sup_t a(t) - \sup_t b(t)|\le \sup_t |a(t)-b(t)|$ for any two bounded functions $a(\cdot)$ and $b(\cdot)$, we have  $|s(x_0,\pi_{r'})-s(x_0,\pi_r)| \le \sup_{t\in[0,T]}|\norm{\res(x'(t),u'(t))}-\norm{\res(x(t),u(t))}|$. The reverse triangle inequality then gives us $|\norm{\res(x'(t),u'(t))}-\norm{\res(x(t),u(t))}|\le \norm{\res(x'(t),u'(t))-\res(x(t),u(t))}$. Since $\res(x,u)=f(x,u)-\hat f(x,u)$, \Cref{ass:ct_lip}(i)--(ii) imply that $\res(x,u)$ is Lipschitz continuous on $\Oset\times\U$ with Lipschitz constants $L_{\res,x}\le L_x+L_{\hat f,x}$ and $L_{\res,u}\le L_u+L_{\hat f,u}$, so that
\begin{extendedonly}
\[
\norm{\res(x'(t),u'(t))-\res(x(t),u(t))}\le L_{\res,x}\norm{x'(t)-x(t)}+L_{\res,u}\norm{u'(t)-u(t)}.
\]
\end{extendedonly}
\begin{compactonly}
\begin{multline*}
\norm{\res(x'(t),u'(t))-\res(x(t),u(t))}\\
\le L_{\res,x}\norm{x'(t)\!-\!x(t)}+L_{\res,u}\norm{u'(t)\!-\!u(t)}.
\end{multline*}
\end{compactonly}
For the deviation in control inputs, the same  argument as in the proof of \Cref{lem:state_dev} gives us $\norm{u'(t)-u(t)}\le \CzNorm{\pi_{r'}-\pi_r}{\Oset}+L_\pi\norm{x'(t)-x(t)}$. Substituting this results in
\begin{extendedonly}
\[
\norm{\res(x'(t),u'(t))-\res(x(t),u(t))}\le (L_{\res,x}+L_{\res,u}L_\pi)\norm{x'(t)-x(t)}+L_{\res,u}\CzNorm{\pi_{r'}-\pi_r}{\Oset}.
\]
\end{extendedonly}
\begin{compactonly}
\begin{align*}
&\norm{\res(x'(t),u'(t))-\res(x(t),u(t))}\\
&\quad\le (L_{\res,x}+L_{\res,u}L_\pi)\norm{x'(t)-x(t)}\\
&\qquad+L_{\res,u}\CzNorm{\pi_{r'}-\pi_r}{\Oset}.
\end{align*}
\end{compactonly}
Inserting the state deviation bound~\eqref{eq:ct_state_sensitivity} and noting that $t\mapsto (e^{\Lambda_x t}-1)/\Lambda_x$ is nondecreasing, the supremum of the previous expression over $t\in[0,T]$ is attained at $t=T$, i.e.,
\begin{extendedonly}
\[
\sup_{t\in[0,T]}\norm{\res(x'(t),u'(t))-\res(x(t),u(t))}\le \Bigl[L_{\res,u}+(L_{\res,x}+L_{\res,u}L_\pi)\frac{L_u}{\Lambda_x}(e^{\Lambda_x T}-1)\Bigr]\CzNorm{\pi_{r'}-\pi_r}{\Oset},
\]
\end{extendedonly}
\begin{compactonly}
\begin{align*}
&\sup_{t\in[0,T]}\norm{\res(x'(t),u'(t))-\res(x(t),u(t))}\\
&\quad\le \Bigl[L_{\res,u}+(L_{\res,x}+L_{\res,u}L_\pi)\tfrac{L_u}{\Lambda_x}(e^{\Lambda_x T}\!-\!1)\Bigr]\\
&\qquad\cdot\CzNorm{\pi_{r'}-\pi_r}{\Oset},
\end{align*}
\end{compactonly}
which equals $\beta_T\CzNorm{\pi_{r'}-\pi_r}{\Oset}$. 
\end{proof}

\begin{remark}[Discrete-time counterpart]
In discrete time, we proceed similarly but instead use the discrete-time state deviation bound from \Cref{rem:dt_state_dev} while replacing the supremum by the maximum over $t=0,\dots,T-1$. The discrete-time counterpart of \Cref{prop:betaT_summary} then simply replaces $(e^{\Lambda_x T}-1)/\Lambda_x$ by $\sum_{j=0}^{T-1}\Lambda_x^j$. 
\end{remark}

\subsection{From \texorpdfstring{$\beta_T$}{betaT} and \texorpdfstring{$L_U$}{LU} to \texorpdfstring{$\kappa$}{kappa}}

The margin-to-policy bound $\CzNorm{\pi_r-\pi_{r'}}{\Oset}\le L_U|r-r'|$ from Appendix~\ref{app:qp} (\Cref{prop:LU_def}) gives $\CzNorm{\pi_r-\pi_{r'}}{\Oset}\le L_U|r-r'|$. Combining this with \Cref{prop:betaT_summary} gives
\[
|s(x_0,\pi_{r'})-s(x_0,\pi_r)|\le \beta_TL_U|r'-r|=\kappa\,|r'-r|,
\]
which is the bound in~\eqref{eq:kappa_score_shift} with  constant $\kappa := \beta_TL_U$.
\ifextended\input{appendices_short/J_joint_clf_cbf_qp}\fi                 

\section{Discrete-time certificates}
\label{app:dt_certificates}

This appendix states the discrete-time safety and stability counterparts of \Cref{thm:ct_cbf_certificate}. For a discrete-time trajectory $\tau=(x_0,\ldots,x_T,u_0,\ldots,u_{T-1})$ generated by the discrete-time system  in \eqref{eq:dt_true_det} under a policy $u_t=\pi(x_t)$ for $t=0,\hdots,T-1$, define the discrete-time nonconformity score
\begin{equation}
    s^{\dt}(\tau):=\max_{t=0,\ldots,T-1}\norm{\res(x_t,u_t)}.
    \label{eq:score_dt}
\end{equation}
At episode $j+1$, we write $s_{j+1}^{\dt}:=s^{\dt}(\tau_{j+1})$ for the test nonconformity score under the deployed policy $\pi_{j+1}$. We make the same assumptions as in \Cref{thm:srcr_validity}, but now instead for the discrete-time nonconformity score \eqref{eq:score_dt}, and select the radius $r_{j+1}$ such that $r_{j+1}\ge q_j+M_{j+1}$ so that again
\begin{equation}
\Prob_{n_j}\Bigl\{
\Prob\bigl(s_{j+1}^{\dt}\le r_{j+1}\mid D_j^{\mathrm{cal}}\bigr)
\ge 1-\alpha
\Bigr\}\ge 1-\delta .
\label{eq:dt_srcr_validity}
\end{equation}
On the event $\{s_{j+1}^{\dt}\le r_{j+1}\}$, the error bound $\norm{\res(x_t,u_t)}\le r_{j+1}$ holds for every $t=0,\ldots,T-1$, so that \Cref{thm:det_dt_cbf,thm:det_dt_clf} apply trajectory-wise, similar to continuous time.

\subsection{DT iterative conformal rCBF safety certificate}

\begin{theorem}[Discrete-time conformal safety certificate]
\label{thm:dt_cbf_certificate}
Let $h(x)$ be a DT-rCBF on $\Kset$ with decay rate $\gamma\in(0,1]$ and margin $r_{j+1}$.  Furthermore, let $u_t=\pi_{j+1}(x_t)$ be a function that enforces the DT-rCBF constraint \eqref{eq:dt_rcbf_det} for all $x_t\in\Kset$. If $\mu_{\XX}(\Cset)=1$ and $r_{j+1}$ satisfies \eqref{eq:dt_srcr_validity}, then
\begin{extendedonly}
\begin{equation}
  \Prob_{n_j}\Bigl\{
    \Prob\bigl(x_{j+1,t}\in\Cset,\;\forall t=0,\ldots,T \mid D_j^{\mathrm{cal}}\bigr)
    \ge 1-\alpha
  \Bigr\}\ge 1-\delta .
  \label{eq:dt_cbf_certificate_set}
\end{equation}
\end{extendedonly}
\begin{compactonly}
\begin{multline}
  \Prob_{n_j}\Bigl\{
    \Prob\bigl(x_{j+1,t}\in\Cset,\;\forall t=0,\ldots,T \mid D_j^{\mathrm{cal}}\bigr)\\
    \ge 1-\alpha
  \Bigr\}\ge 1-\delta .
  \label{eq:dt_cbf_certificate_set}
\end{multline}
\end{compactonly}
\end{theorem}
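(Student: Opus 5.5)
The argument transfers the validity bound \eqref{eq:dt_srcr_validity} to safety through an event inclusion, exactly as in the proof of \Cref{thm:ct_cbf_certificate}.

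Define the trajectory-level event $\mathcal E_{j+1}:=\{s^{\dt}_{j+1}\le r_{j+1}\}$. By hypothesis, $r_{j+1}$ satisfies \eqref{eq:dt_srcr_validity}, so with $\Prob_{n_j}$-probability at least $1-\delta$ the calibration data land in the good set
\[
\mathcal G:=\{D_j^{\mathrm{cal}}:\Prob(\mathcal E_{j+1}\mid D_j^{\mathrm{cal}})\ge 1-\alpha\}.
\]
It therefore suffices to show, for each fixed $D_j^{\mathrm{cal}}$, the inclusion
\[
\mathcal E_{j+1}\cap\{x_{j+1,0}\in\Cset\}\subseteq\{x_{j+1,t}\in\Cset,\ \forall t=0,\ldots,T\}.
\]
Since $\mu_{\XX}(\Cset)=1$, the event $\{x_{j+1,0}\in\Cset\}$ has conditional probability one. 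Monotonicity of the conditional probability then gives $\Prob(\text{safe}\mid D_j^{\mathrm{cal}})\ge 1-\alpha$ on $\mathcal G$, and hence \eqref{eq:dt_cbf_certificate_set}.

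For the inclusion, fix a realization in $\mathcal E_{j+1}$. By \eqref{eq:score_dt}, $\norm{\res(x_{j+1,t},u_{j+1,t})}\le r_{j+1}$ for all $t=0,\ldots,T-1$. I then rerun the induction of \Cref{thm:det_dt_cbf} along this single trajectory, replacing the uniform bound \eqref{eq:dt_residual_uniform_det} by this trajectory-wise bound. Suppose $x_t\in\Cset\subseteq\Kset$. Then $\pi_{j+1}$ enforces \eqref{eq:dt_rcbf_det} at $x_t$, and both $\hat f(x_t,u_t)$ and $x_{t+1}=\hat f(x_t,u_t)+\res(x_t,u_t)$ lie in $\Kset^+_{r_{j+1}}$ by \eqref{eq:dt_one_step_tube_det}. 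Lipschitz continuity of $h$ on $\Kset^+_{r_{j+1}}$ gives
\[
h(x_{t+1})\ge h(\hat f(x_t,u_t))-L_h r_{j+1}\ge(1-\gamma)h(x_t)\ge 0,
\]
using $\gamma\in(0,1]$.

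The main point of care is that the induction only needs the error bound at the visited pairs $(x_t,u_t)$ for $t\le T-1$, which is exactly what $\mathcal E_{j+1}$ supplies. A second point is measurability: the safety event must be measurable and must be evaluated under the same conditional law as $\mathcal E_{j+1}$. This holds because $x_{j+1,0}$ is independent of $D_j^{\mathrm{cal}}$ (\Cref{ass:within_episode_iid}) and the trajectory is a measurable function of $x_{j+1,0}$ once $\pi_{j+1}$ is fixed. Apart from these, the argument is routine.
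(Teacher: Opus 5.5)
Your proposal is correct and follows the paper's proof essentially verbatim. Both arguments define $\mathcal E_{j+1}^{\dt}=\{s^{\dt}_{j+1}\le r_{j+1}\}$, rerun the induction of \Cref{thm:det_dt_cbf} along the single trajectory using the trajectory-wise residual bound to obtain the inclusion into the safety event, and conclude via \eqref{eq:dt_srcr_validity}. Your explicit intersection with the almost-sure event $\{x_{j+1,0}\in\Cset\}$ and your measurability remark are slightly more careful than the paper's phrasing, but they do not change the argument.
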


\begin{proof}
Define $\mathcal E_{j+1}^{\dt}:=\{s_{j+1}^{\dt}\le r_{j+1}\}$. On the event $\mathcal E_{j+1}^{\dt}$, the definition \eqref{eq:score_dt} gives $\norm{\res(x_t,u_t)}\le r_{j+1}$ for all times $t=0,\ldots,T-1$. Since $\mu_{\XX}(\Cset)=1$, the initial condition satisfies $x_{j+1,0}\in\Cset$ almost surely. Therefore the proof of \Cref{thm:det_dt_cbf}, with the uniform error bound replaced by the trajectory-level error bound on $\mathcal E_{j+1}^{\dt}$ and applied with margin $r_{j+1}$ and policy $\pi_{j+1}$, gives $x_{j+1,t}\in\Cset$ for all $t=0,\ldots,T$ on $\mathcal E_{j+1}^{\dt}$. Hence, it holds that
\begin{equation*}
\mathcal E_{j+1}^{\dt}\subseteq
\{x_{j+1,t}\in\Cset,\;\forall t=0,\ldots,T\}.
\end{equation*}
Applying \eqref{eq:dt_srcr_validity} yields the statement in \eqref{eq:dt_cbf_certificate_set}.
\end{proof}

\subsection{DT iterative conformal rCLF stability certificate}

\begin{theorem}[Discrete-time conformal stability certificate]
\label{thm:dt_clf_certificate}
Let $V(x)$ be a DT-rCLF on $\Kset$ with decay rate $c\in(0,1)$ and margin $r_{j+1}$.  Furthermore, let $u_t=\pi_{j+1}(x_t)$ be a function that enforces the DT-rCLF constraint \eqref{eq:dt_rclf_det} for all $x_t\in\Kset$. If $\mu_{\XX}(\mathcal V)=1$ and $r_{j+1}$ satisfies \eqref{eq:dt_srcr_validity}, then
\begin{extendedonly}
\begin{equation}
\Prob_{n_j}\Bigl\{
\Prob\bigl(\norm{x_{j+1,t}}\le \underline\alpha_V^{-1}((1-c)^t\overline\alpha_V(\norm{x_{j+1,0}})),\;\forall t=0,\ldots,T \mid D_j^{\mathrm{cal}}\bigr)
\ge 1-\alpha
\Bigr\}\ge 1-\delta .
\label{eq:dt_clf_certificate_state}
\end{equation}
\end{extendedonly}
\begin{compactonly}
\begin{multline}
\Prob_{n_j}\Bigl\{
\Prob\bigl(\norm{x_{j+1,t}}\le \underline\alpha_V^{-1}\!\bigl((1-c)^t\overline\alpha_V(\norm{x_{j+1,0}})\bigr),\\
\forall t=0,\ldots,T \mid D_j^{\mathrm{cal}}\bigr)
\ge 1-\alpha
\Bigr\}\ge 1-\delta .
\label{eq:dt_clf_certificate_state}
\end{multline}
\end{compactonly}
\end{theorem}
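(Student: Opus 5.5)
The proof mirrors that of \Cref{thm:dt_cbf_certificate}, with \Cref{thm:det_dt_clf} in place of \Cref{thm:det_dt_cbf}. Define the good event $\mathcal E_{j+1}^{\dt}:=\{s_{j+1}^{\dt}\le r_{j+1}\}$. By \eqref{eq:dt_srcr_validity}, which holds by hypothesis, the calibration-conditional probability of $\mathcal E_{j+1}^{\dt}$ is at least $1-\alpha$ with outer probability at least $1-\delta$ over $D_j^{\mathrm{cal}}$. It therefore suffices to establish a deterministic event inclusion. On $\mathcal E_{j+1}^{\dt}$, the stability bound $\norm{x_{j+1,t}}\le \underline\alpha_V^{-1}\bigl((1-c)^t\overline\alpha_V(\norm{x_{j+1,0}})\bigr)$ should hold for all $t=0,\ldots,T$. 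Monotonicity of conditional probability then transfers the bound inside the outer $\Prob_{n_j}\{\cdot\}$, since the set of good calibration datasets only grows.

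For the inclusion, I will first use $\mu_{\XX}(\mathcal V)=1$: it gives $x_{j+1,0}\in\mathcal V$ almost surely, so the null set can be discarded without affecting the conditional probabilities. Next, on $\mathcal E_{j+1}^{\dt}$, definition \eqref{eq:score_dt} gives $\norm{\res(x_t,u_t)}\le r_{j+1}$ for each $t=0,\ldots,T-1$ along the realized trajectory with $u_t=\pi_{j+1}(x_t)$. I will then rerun the induction in the proof of \Cref{thm:det_dt_clf}, replacing the uniform bound \eqref{eq:dt_residual_uniform_det} by this trajectory-level bound. The only thing the induction actually uses is the bound at the visited pairs $(x_t,u_t)$.

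The induction step goes as follows. Assume $x_t\in\mathcal V\subseteq\Kset$. The policy then enforces \eqref{eq:dt_rclf_det} at $x_t$. Both $\hat f(x_t,u_t)$ and $x_{t+1}=\hat f(x_t,u_t)+\res(x_t,u_t)$ lie in $\Kset_{r_{j+1}}^+$ by \eqref{eq:dt_one_step_tube_det}, so Lipschitz continuity of $V$ there gives $V(x_{t+1})\le V(\hat f(x_t,u_t))+L_V r_{j+1}\le (1-c)V(x_t)$. Since $c\in(0,1)$, this keeps $x_{t+1}\in\mathcal V$ and yields $V(x_t)\le(1-c)^tV(x_0)$. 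The class-$\mathcal K_\infty$ sandwich bounds from \Cref{def:dt_clf} together with monotonicity of $\underline\alpha_V^{-1}$ then give the norm bound.

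The main point needing care is the domain bookkeeping. The one-step tube set $\Kset_r^+$ and the Lipschitz constant $L_V$ are tied to the margin $r=r_{j+1}$ used in the DT-rCLF definition. Because the margin used there coincides with the threshold defining $\mathcal E_{j+1}^{\dt}$, the perturbation $\res(x_t,u_t)$ has norm at most $r_{j+1}$ and the tube containment applies. I also have to check that invariance of $\mathcal V$ is re-established at every step, so that $x_t\in\Kset$ always holds where the constraint is invoked. Beyond this, the argument is a direct event inclusion and involves no new probabilistic reasoning.
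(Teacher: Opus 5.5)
Your proposal is correct and follows the paper's proof essentially step for step. Both arguments define the event $\mathcal E_{j+1}^{\dt}$ and rerun the induction of \Cref{thm:det_dt_clf} with the trajectory-level residual bound on that event. Both then conclude via the class-$\mathcal K_\infty$ sandwich bounds, monotonicity of $\underline\alpha_V$, and the event inclusion combined with \eqref{eq:dt_srcr_validity}.
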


\begin{proof}
Define $\mathcal E_{j+1}^{\dt}:=\{s_{j+1}^{\dt}\le r_{j+1}\}$. On the event $\mathcal E_{j+1}^{\dt}$, the definition \eqref{eq:score_dt} gives $\norm{\res(x_t,u_t)}\le r_{j+1}$ for all $t=0,\ldots,T-1$. Since $\mu_{\XX}(\mathcal V)=1$, the initial condition satisfies $x_{j+1,0}\in\mathcal V$ almost surely. Therefore the proof of \Cref{thm:det_dt_clf}, with the uniform residual bound replaced by the trajectory-level residual bound on $\mathcal E_{j+1}^{\dt}$ and applied with margin $r_{j+1}$ and policy $\pi_{j+1}$, gives
\begin{equation*}
V(x_{j+1,t})\le(1-c)^tV(x_{j+1,0}),
\qquad t=0,\ldots,T,
\end{equation*}
on $\mathcal E_{j+1}^{\dt}$. The class-$\mathcal K_\infty$ bounds from \Cref{def:dt_clf} imply
\begin{equation*}
\underline\alpha_V(\norm{x_{j+1,t}})
\le V(x_{j+1,t})
\le (1-c)^t\overline\alpha_V(\norm{x_{j+1,0}})
\end{equation*}
on $\mathcal E_{j+1}^{\dt}$. Monotonicity of $\underline\alpha_V$ and applying \eqref{eq:dt_srcr_validity} yields the statement in \eqref{eq:dt_clf_certificate_state}.
\end{proof}


\section{Convergence analysis}
\label{app:convergence}

This appendix proves \Cref{thm:convergence} and \Cref{cor:hp_tracking_convergence}.

\subsection{Quantile perturbations}

This subsection uses the nonconformity score variables $S_r$, their population quantiles $Q_r(p)$, and their cumulative distribution functions $F_r$ as defined in \eqref{eq:score_random_variable_r}, \eqref{eq:population_quantile_r}, and \eqref{eq:cdf_score_r_main}, respectively. For the quantile perturbation lemma presented below, we use the same subscript convention: for a random variable $Z$, we let
\begin{equation*}
\begin{aligned}
    F_Z(z)&:=\Prob(Z\le z),\\
    Q_Z(p)&:=\inf\{z\in\R:F_Z(z)\ge p\},\qquad p\in(0,1).
\end{aligned}
\end{equation*}
In particular, \eqref{eq:population_quantile_r} is the specialization $Q_r(p)=Q_{S_r}(p)$.

\begin{lemma}[Quantile perturbation bound]
\label{lem:quantile_lip}
Let $X$ and $Y$ be real-valued random variables. If $|X-Y|\le c$ almost surely for some $c\ge0$, then
\begin{equation}
    |Q_X(p)-Q_Y(p)|\le c,
    \qquad p\in(0,1).
    \label{eq:quantile_lip_generic_app}
\end{equation}
\end{lemma}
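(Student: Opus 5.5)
The plan is to compare the distribution functions of $X$ and $Y$ directly and then pass the comparison through the generalized-inverse definition $Q_Z(p)=\inf\{z:F_Z(z)\ge p\}$. Since the hypothesis $|X-Y|\le c$ is symmetric in $X$ and $Y$, it suffices to prove the one-sided bound $Q_X(p)\le Q_Y(p)+c$ for every $p\in(0,1)$. Exchanging the roles of $X$ and $Y$ then gives $Q_Y(p)\le Q_X(p)+c$, and the two together yield \eqref{eq:quantile_lip_generic_app}.

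\emph{Step 1 (CDF domination).} From $X\le Y+c$ almost surely we have the event inclusion $\{Y\le z\}\subseteq\{X\le z+c\}$ up to a null set, for every $z\in\R$. Hence
\[
F_Y(z)\le F_X(z+c)\qquad\text{for all } z\in\R.
\]

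\emph{Step 2 (transfer to quantiles).} Fix $p\in(0,1)$. The set $\{z:F_Y(z)\ge p\}$ is nonempty because $F_Y(z)\to1$, and bounded below because $F_Y(z)\to0$, so $Q_Y(p)$ is finite. Take any $z$ with $F_Y(z)\ge p$. By Step 1, $F_X(z+c)\ge p$, so $z+c$ belongs to $\{w:F_X(w)\ge p\}$ and therefore $Q_X(p)\le z+c$. Taking the infimum over all such $z$ gives $Q_X(p)\le Q_Y(p)+c$. This argument does not require the infimum defining $Q_Y(p)$ to be attained, although right-continuity of $F_Y$ ensures that it is.

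\emph{Step 3 (symmetrize).} Repeating Steps 1 and 2 with $Y\le X+c$ gives $Q_Y(p)\le Q_X(p)+c$, which completes the argument.

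There is no substantial obstacle. The only points needing care are the direction of the inequalities in the generalized inverse and the finiteness of the quantiles, both handled in Step 2. This lemma will then be applied with $X=S_{r'}$ and $Y=S_r$, coupled through the same initial condition $x(0)\sim\mu_{\XX}$. Because quantiles depend only on the marginal laws, any coupling may be used, and \eqref{eq:kappa_score_shift} supplies $c=\kappa|r'-r|$, which gives \eqref{eq:Q_Lipschitz}.
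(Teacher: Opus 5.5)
Your proposal is correct and follows the same route as the paper: derive a CDF domination from the almost-sure bound, push it through the generalized inverse $Q_Z(p)=\inf\{z:F_Z(z)\ge p\}$, and symmetrize in $X$ and $Y$. The only difference is that you spell out the infimum step and the finiteness of the quantiles, which the paper compresses into a single ``Hence''.
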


\begin{proof}
The inequality $Y\le X+c$ almost surely implies $F_Y(z)\ge F_X(z-c)$ for every $z\in\R$. Hence $Q_Y(p)\le Q_X(p)+c$. Interchanging $X$ and $Y$ gives $Q_X(p)\le Q_Y(p)+c$. Combining the two inequalities proves \eqref{eq:quantile_lip_generic_app}.
\end{proof}

\begin{corollary}
\label{cor:Q_lip}
Let \Cref{ass:ct_lip,ass4} hold. Then the  population quantiles in \eqref{eq:population_quantile_r} satisfy the Lipschitz bound \eqref{eq:Q_Lipschitz}; that is, we have that
\begin{equation}
    |Q_{r'}(1-\alpha)-Q_r(1-\alpha)|\le\kappa |r'-r|,
    \qquad r,r'\in\Rset .
    \label{eq:Q_lip_app}
\end{equation}
\end{corollary}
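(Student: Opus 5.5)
The plan is to reduce the corollary to \Cref{lem:quantile_lip} through a coupling of the score variables $S_r$ and $S_{r'}$ on a common probability space. The law of $S_r$ is the push-forward of $\mu_{\XX}$ under the map $x_0\mapsto s^{\ct}(x_0,\pi_r)$. We may therefore draw a single initial condition $x_0\sim\mu_{\XX}$ and define both
\[
X:=s^{\ct}(x_0,\pi_{r'}),\qquad Y:=s^{\ct}(x_0,\pi_r).
\]
These have the laws of $S_{r'}$ and $S_r$, respectively. The population quantile $Q_r(p)$ depends only on the law of $S_r$, so $Q_X(p)=Q_{r'}(p)$ and $Q_Y(p)=Q_r(p)$ for every $p\in(0,1)$.

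Next, fix $r,r'\in\Rset$ and an arbitrary $x_0\in\XX$. \Cref{ass:ct_lip} ensures that both closed-loop trajectories exist on $[0,T]$ and remain in $\Oset$. We then invoke \Cref{prop:betaT_summary} to obtain
\[
|X-Y|\le\beta_T\CzNorm{\pi_{r'}-\pi_r}{\Oset}.
\]
\Cref{ass4} gives $\CzNorm{\pi_{r'}-\pi_r}{\Oset}\le L_U|r'-r|$. Combining the two yields $|X-Y|\le\kappa|r'-r|$ for every $x_0\in\XX$, which is \eqref{eq:kappa_score_shift}. Because $\mu_{\XX}$ is supported on $\XX$, this bound holds almost surely with the deterministic constant $c:=\kappa|r'-r|\ge 0$.

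Finally, we apply \Cref{lem:quantile_lip} with $p=1-\alpha\in(0,1)$, which gives $|Q_{r'}(1-\alpha)-Q_r(1-\alpha)|\le\kappa|r'-r|$, i.e., \eqref{eq:Q_lip_app}.

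The main obstacle is making the coupling rigorous. Specifically, we must check that $X$ and $Y$ are genuine random variables, so that their CDFs and quantiles are defined, and that their laws coincide with those of $S_{r'}$ and $S_r$. For measurability, the plan is to use \Cref{lem:score_measurable}, together with measurable (indeed continuous, by Lipschitz dependence on initial conditions under \Cref{ass:ct_lip}) dependence of the trajectory on $x_0$. The law identity holds because all randomness originates from $x_0$, as noted in \Cref{sec:baseline_cp}. A second point to verify is that the sensitivity bound is uniform over $x_0\in\XX$ with a constant independent of $x_0$. This is exactly what \Cref{prop:betaT_summary} provides, so the remaining steps are routine.
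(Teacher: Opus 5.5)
Your proposal is correct and follows the paper's proof. Both couple $S_r$ and $S_{r'}$ through a common initial condition $x(0)\sim\mu_{\XX}$, use the sensitivity bound \eqref{eq:kappa_score_shift} (i.e., \Cref{prop:betaT_summary} combined with \Cref{ass4}) to get $|S_{r'}-S_r|\le\kappa|r'-r|$ almost surely, and apply \Cref{lem:quantile_lip} with $p=1-\alpha$; your remarks on measurability and on the coupled variables having the correct marginal laws make explicit details the paper leaves implicit.
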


\begin{proof}
By \eqref{eq:kappa_score_shift}, the coupled nonconformity score variables $S_r$ and $S_{r'}$ from \eqref{eq:score_random_variable_r}, generated from the same initial condition $x(0)\sim\mu_{\XX}$, satisfy $|S_{r'}-S_r|\le\kappa |r'-r|$ almost surely. Applying \Cref{lem:quantile_lip} with $X=S_{r'}$, $Y=S_r$, $c=\kappa |r'-r|$, and $p=1-\alpha$ gives \eqref{eq:Q_lip_app}.
\end{proof}

\subsection{Quantile error control}

For episode $j$, let us define the empirical cumulative distribution function as
\begin{equation*}
    F_{n_j,j}(z):=\frac{1}{n_j}\sum_{i=1}^{n_j}\ind\{s_j^{(i)}\le z\},
\end{equation*}
where $z\in\R$ and where the calibration nonconformity scores $s_j^{(1)},\ldots,s_j^{(n_j)}$ are i.i.d. copies of $S_{r_j}$ and hence have cumulative distribution function $F_{r_j}$. Let us define
\begin{equation}
    p_j:=1-\bar\alpha_j,
    \qquad
    k_j:=\lceil p_j n_j\rceil,
    \label{eq:pj_kj_definition}
\end{equation}
so that $q_j=s_j^{[k_j]}$ is the empirical quantile at the tightened quantile level $p_j$. We recall that the level $p_j$ is selected according to the ARCP tightening from \Cref{thm:srcr_validity}; the tolerance $\Delta$ used below is independent of $p_j$ except for the requirement that $p_j-\Delta$ and $p_j+\Delta$ are valid quantile levels. By the order-statistic definition of $q_j$ in \eqref{eq:pj_kj_definition}, we know that
\begin{equation}
\begin{aligned}
    F_{n_j,j}(q_j)&\ge \frac{k_j}{n_j}\ge p_j,\\
    F_{n_j,j}(z)&\le \frac{k_j-1}{n_j}<p_j,
    \qquad \forall z<q_j .
\end{aligned}
\label{eq:empirical_quantile_rank_facts}
\end{equation}
The strict inequality in equation \eqref{eq:empirical_quantile_rank_facts} follows from $k_j=\lceil p_j n_j\rceil$, which implies that $k_j-1<p_j n_j$.

\begin{lemma}[Empirical quantile bracketing]
\label{lem:quantile_bracketing}
Let \Cref{ass:within_episode_iid} hold. Given an episode $j$, let $\Delta>0$ satisfy $p_j-\Delta,p_j+\Delta\in(0,1)$. Then
\begin{equation}
\begin{gathered}
\Prob_{n_j}\Bigl\{Q_{r_j}(p_j-\Delta)\le q_j
\le Q_{r_j}(p_j+\Delta)\Bigr\}\\
\ge 1-2\exp(-2n_j\Delta^2).
\end{gathered}
\label{eq:quantile_bracketing_prob}
\end{equation}
\end{lemma}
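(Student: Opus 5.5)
The plan is to apply the Dvoretzky--Kiefer--Wolfowitz inequality with Massart's tight constant \cite{massart1990dkw} to the empirical distribution function $F_{n_j,j}$, and then to translate uniform closeness of $F_{n_j,j}$ and $F_{r_j}$ into a bracketing of the empirical quantile $q_j$ by the population quantiles. By \Cref{ass:within_episode_iid}, the calibration scores $s_j^{(1)},\dots,s_j^{(n_j)}$ are i.i.d.\ copies of $S_{r_j}$ with cumulative distribution function $F_{r_j}$. The DKW inequality then gives
\begin{equation*}
\Prob_{n_j}\Bigl\{\sup_{z\in\R}|F_{n_j,j}(z)-F_{r_j}(z)|\le\Delta\Bigr\}\ge 1-2\exp(-2n_j\Delta^2).
\end{equation*}
Let $\mathcal G_\Delta$ denote this good event. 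It then suffices to show that on $\mathcal G_\Delta$ both inequalities $Q_{r_j}(p_j-\Delta)\le q_j$ and $q_j\le Q_{r_j}(p_j+\Delta)$ hold deterministically.

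\emph{Upper bracket.} Set $z^+:=Q_{r_j}(p_j+\Delta)$, which is finite because $p_j+\Delta\in(0,1)$. Right-continuity of $F_{r_j}$ gives $F_{r_j}(z^+)\ge p_j+\Delta$. On $\mathcal G_\Delta$ this yields $F_{n_j,j}(z^+)\ge p_j$. By \eqref{eq:empirical_quantile_rank_facts}, every $z<q_j$ satisfies $F_{n_j,j}(z)<p_j$. Hence $z^+\ge q_j$.

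\emph{Lower bracket.} Set $z^-:=Q_{r_j}(p_j-\Delta)$. For every $z<z^-$, the definition of the quantile gives $F_{r_j}(z)<p_j-\Delta$. On $\mathcal G_\Delta$ this yields $F_{n_j,j}(z)<p_j$. Since $F_{n_j,j}(q_j)\ge p_j$ by \eqref{eq:empirical_quantile_rank_facts}, we cannot have $q_j<z^-$, so $q_j\ge z^-$.

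Combining the two brackets gives the event inclusion $\mathcal G_\Delta\subseteq\{Q_{r_j}(p_j-\Delta)\le q_j\le Q_{r_j}(p_j+\Delta)\}$. The claimed probability bound \eqref{eq:quantile_bracketing_prob} then follows from the DKW bound above.

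The only delicate step is the handling of strict versus non-strict inequalities and of right-continuity at the generalized inverse. In particular, the upper bracket needs the strict inequality $F_{n_j,j}(z)<p_j$ for $z<q_j$ from \eqref{eq:empirical_quantile_rank_facts}. Apart from that, DKW is invoked as a black box; it requires no continuity of $F_{r_j}$, so no density assumption is needed at this stage.
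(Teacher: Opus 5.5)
Your proof is correct and follows essentially the same route as the paper: Massart's DKW bound on $\sup_z|F_{n_j,j}(z)-F_{r_j}(z)|$, followed by the deterministic inclusion of the DKW event in the bracketing event via \eqref{eq:empirical_quantile_rank_facts}. The only difference is cosmetic. Your two bracket arguments are contrapositive rephrasings of the paper's. The paper shows $F_{r_j}(q_j)\ge p_j-\Delta$ directly, and for the upper bracket it shows that no $z<q_j$ lies in $\{z:F_{r_j}(z)\ge p_j+\Delta\}$. You instead argue by contradiction at $z^-$, and use right-continuity at $z^+$.
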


\begin{proof}
Define the DKW event
\begin{equation*}
    \mathcal A_j(\Delta):=\Bigl\{\sup_{z\in\R}\bigl|F_{n_j,j}(z)-F_{r_j}(z)\bigr|\le \Delta\Bigr\}.
\end{equation*}
Massart's sharp DKW inequality from~\cite{massart1990dkw} gives
\begin{equation}
    \Prob_{n_j}\{\mathcal A_j(\Delta)\}\ge 1-2\exp(-2n_j\Delta^2).
    \label{eq:dkw_probability_bound_quantile_bracketing}
\end{equation}
It remains to prove that $\mathcal A_j(\Delta)$ implies the bracketing event in \eqref{eq:quantile_bracketing_prob}. On $\mathcal A_j(\Delta)$, the first inequality in \eqref{eq:empirical_quantile_rank_facts} gives
\begin{equation}
    F_{r_j}(q_j)\ge F_{n_j,j}(q_j)-\Delta\ge p_j-\Delta.
    \label{eq:lower_bracket_cdf_step}
\end{equation}
By the quantile definition in \eqref{eq:population_quantile_r}, equation \eqref{eq:lower_bracket_cdf_step} implies $Q_{r_j}(p_j-\Delta)\le q_j$.

For the upper bracket, fix any $z<q_j$. On $\mathcal A_j(\Delta)$, the second inequality in \eqref{eq:empirical_quantile_rank_facts} gives
\begin{equation*}
    F_{r_j}(z)\le F_{n_j,j}(z)+\Delta < p_j+\Delta.
\end{equation*}
Therefore, no point $z<q_j$ belongs to the set $\{z\in\R:F_{r_j}(z)\ge p_j+\Delta\}$ that defines $Q_{r_j}(p_j+\Delta)$ in \eqref{eq:population_quantile_r}. Hence the infimum of that set cannot be smaller than $q_j$, and so $q_j\le Q_{r_j}(p_j+\Delta)$. Combining this containment argument with \eqref{eq:dkw_probability_bound_quantile_bracketing} proves \eqref{eq:quantile_bracketing_prob}.
\end{proof}

\begin{lemma}[Local inverse-CDF bound]
\label{lem:local_inverse_cdf_bound}
Let \Cref{ass:quantile_regularity} hold. Fix $r\in\mathcal R_\star$ and $p_1,p_2\in(0,1)$ such that $Q_r(p_1),Q_r(p_2)\in I$. Then
\begin{equation}
    |Q_r(p_2)-Q_r(p_1)|\le \frac{|p_2-p_1|}{m}.
    \label{eq:local_inverse_cdf_bound}
\end{equation}
\end{lemma}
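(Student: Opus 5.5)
Without loss of generality we take $p_1<p_2$ (the case $p_1=p_2$ is trivial) and write $z_i:=Q_r(p_i)\in I$. Monotonicity of the quantile function gives $z_1\le z_2$. If $z_1=z_2$ there is nothing to prove, so we assume $z_1<z_2$. The plan is to show $F_r(z_2^-)-F_r(z_1)\le p_2-p_1$ and then to bound the left-hand side from below by $m(z_2-z_1)$ using the density bound of \Cref{ass:quantile_regularity}.

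\emph{Step 1 (CDF increment bounded by the level increment).} By the definition \eqref{eq:population_quantile_r} of the quantile and right-continuity of $F_r$, we have $F_r(z_1)\ge p_1$. For every $z<z_2$, $z$ is not in the set defining $Q_r(p_2)$, so $F_r(z)<p_2$. Hence, for every $z\in(z_1,z_2)$, it holds that $F_r(z)-F_r(z_1)<p_2-p_1$.

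\emph{Step 2 (density lower bound on $[z_1,z_2]$).} Since $I$ is an interval containing $z_1$ and $z_2$, it contains $[z_1,z_2]$. \Cref{ass:quantile_regularity} states that $F_r$ admits a density $f_r$ with $f_r\ge m$ on $I$, which gives $F_r(z)-F_r(z_1)=\int_{z_1}^{z}f_r(y)\,\dd y\ge m(z-z_1)$ for $z\in(z_1,z_2)$.

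\emph{Step 3 (combine and take the limit).} Steps~1 and~2 together yield $m(z-z_1)<p_2-p_1$ for all $z\in(z_1,z_2)$. Letting $z\uparrow z_2$ gives $m(z_2-z_1)\le p_2-p_1$, which is \eqref{eq:local_inverse_cdf_bound}.

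The only delicate point is the possible jump or flat behavior of $F_r$ at the endpoints. Working with $z<z_2$ and passing to the limit avoids needing $F_r(z_2)=p_2$. In addition, the integral representation requires absolute continuity of $F_r$ on $[z_1,z_2]$, which is exactly what ``admits a density'' provides.
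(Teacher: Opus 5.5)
Your proof is correct and follows essentially the same route as the paper's: bound the CDF increment between the two quantiles above by $p_2-p_1$ and below by $m(z_2-z_1)$ using the density. The only difference is technical. The paper first shows $F_r(Q_r(p_\ell))=p_\ell$ from continuity of $F_r$ on $I$, whereas you use the one-sided facts $F_r(z_1)\ge p_1$ and $F_r(z)<p_2$ for $z<z_2$ together with the limit $z\uparrow z_2$, which avoids that continuity step.
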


\begin{proof}
It suffices to consider $p_2\ge p_1$, since the other case follows by interchanging the indices. The monotonicity of the function $Q_r(p)$ gives $Q_r(p_1)\le Q_r(p_2)$. Because $F_r(z)$ admits the density function $f_r$ on the open interval $I$ and both endpoint quantiles lie in $I$, the function $F_r$ is continuous at these quantiles. Moreover, if $F_r(Q_r(p_\ell))>p_\ell$ for some $\ell\in\{1,2\}$, then continuity on $I$ would give a point $z<Q_r(p_\ell)$ such that $F_r(z)\ge p_\ell$, contradicting the definition of $Q_r(p_\ell)$. Hence $F_r(Q_r(p_\ell))=p_\ell$ for $\ell\in\{1,2\}$. Therefore, we have
\begin{equation}
\begin{aligned}
    p_2-p_1
    &=F_r(Q_r(p_2))-F_r(Q_r(p_1)) \\
    &=\int_{Q_r(p_1)}^{Q_r(p_2)} f_r(z)\,\dd z
      \ge m\bigl(Q_r(p_2)-Q_r(p_1)\bigr),
\end{aligned}
\label{eq:inverse_cdf_integral_argument}
\end{equation}
where the last inequality uses the lower density bound in \Cref{ass:quantile_regularity}. Rearranging \eqref{eq:inverse_cdf_integral_argument} proves \eqref{eq:local_inverse_cdf_bound}.
\end{proof}

\begin{corollary}[Per-episode quantile-estimation error]
\label{cor:eta_dkw}
Let \Cref{ass:within_episode_iid} and \Cref{ass:quantile_regularity} hold. Fix an episode $j\ge J_0$ with $r_j\in\mathcal R_\star$.
Assume that $p_j-\Delta_j,p_j+\Delta_j\in(0,1)$ where $\Delta_j:=\sqrt{\frac{\ln(2/\delta_j)}{2n_j}}$. Then, it holds that
\begin{equation}
    \Prob_{n_j}\bigl\{|\eta_j|\le \varepsilon_j\bigr\}\ge 1-\delta_j.
    \label{eq:eta_dkw_prob}
\end{equation}
where $\varepsilon_j:=m^{-1}\!\left(\Delta_j+\alpha-\bar\alpha_j\right)$.
\end{corollary}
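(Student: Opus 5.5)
We prove \Cref{cor:eta_dkw} by splitting $\eta_j=\xi_j+b_j$ as in \eqref{eq:eta_decomposition_main}. The empirical term $\xi_j$ is controlled by the DKW bracketing of \Cref{lem:quantile_bracketing}, and the bias term $b_j$ deterministically. Both are converted from probability-level differences into score-level differences through the local inverse-CDF bound of \Cref{lem:local_inverse_cdf_bound}. The triangle inequality then gives $|\eta_j|\le|\xi_j|+|b_j|$.

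\emph{Step 1 (empirical term).} Apply \Cref{lem:quantile_bracketing} with $\Delta=\Delta_j$, which is admissible because $p_j\pm\Delta_j\in(0,1)$ is assumed. Since $2\exp(-2n_j\Delta_j^2)=\delta_j$ by the definition of $\Delta_j$, the event
\[
\mathcal B_j:=\{Q_{r_j}(p_j-\Delta_j)\le q_j\le Q_{r_j}(p_j+\Delta_j)\}
\]
has $\Prob_{n_j}$-probability at least $1-\delta_j$. On $\mathcal B_j$ we have
\[
|\xi_j|=|q_j-Q_{r_j}(p_j)|\le\max_{\pm}|Q_{r_j}(p_j\pm\Delta_j)-Q_{r_j}(p_j)|.
\]
This holds because $Q_{r_j}$ is monotone, so $Q_{r_j}(p_j)$ lies in the bracket as well. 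Now $j\ge J_0$ and $r_j\in\mathcal R_\star$, so \Cref{ass:quantile_regularity} places $Q_{r_j}(p_j)$ and $Q_{r_j}(p_j\pm\Delta_j)$ in $I$, since these levels lie in $[p_j-\Delta_j,p_j+\Delta_j]\cap(0,1)$. \Cref{lem:local_inverse_cdf_bound} then gives $|\xi_j|\le\Delta_j/m$ on $\mathcal B_j$.

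\emph{Step 2 (bias term).} The bias is $b_j=Q_{r_j}(1-\bar\alpha_j)-Q_{r_j}(1-\alpha)$. By \Cref{ass:quantile_regularity}, $Q_{r_j}(1-\alpha)\in I$, and $Q_{r_j}(p_j)\in I$ as noted in Step 1. Applying \Cref{lem:local_inverse_cdf_bound} with $p_1=1-\alpha$ and $p_2=1-\bar\alpha_j$ yields $|b_j|\le(\alpha-\bar\alpha_j)/m$, using $\alpha\ge\bar\alpha_j$. This bound holds deterministically.

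\emph{Step 3 (combination).} Combining the two steps, on $\mathcal B_j$ we get
\[
|\eta_j|\le m^{-1}(\Delta_j+\alpha-\bar\alpha_j)=\varepsilon_j,
\]
and therefore $\Prob_{n_j}\{|\eta_j|\le\varepsilon_j\}\ge\Prob_{n_j}\{\mathcal B_j\}\ge1-\delta_j$.

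The only delicate point is checking that every quantile fed into \Cref{lem:local_inverse_cdf_bound} actually lies in $I$, including $Q_{r_j}(p_j)$ itself. This is exactly what the interval clause of \Cref{ass:quantile_regularity} provides, together with the condition $j\ge J_0$. Everything else is a direct substitution of the earlier lemmas.
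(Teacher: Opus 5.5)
Your proposal is correct and follows essentially the same route as the paper. Both arguments apply \Cref{lem:quantile_bracketing} with $\Delta=\Delta_j$ (using $2\exp(-2n_j\Delta_j^2)=\delta_j$), take membership of $Q_{r_j}(1-\alpha)$ and $Q_{r_j}(p_j)$, $Q_{r_j}(p_j\pm\Delta_j)$ in $I$ from \Cref{ass:quantile_regularity}, bound $\xi_j$ and $b_j$ via \Cref{lem:local_inverse_cdf_bound}, and finish with the triangle inequality; your remark that the bias bound holds deterministically (given $r_j$) only refines the paper's ``on the event'' phrasing.
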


\begin{proof}
Applying \Cref{lem:quantile_bracketing} with $\Delta=\Delta_j$ gives
\begin{equation}
\Prob_{n_j}\Bigl\{Q_{r_j}(p_j-\Delta_j)\le q_j\le Q_{r_j}(p_j+\Delta_j)\Bigr\}
\ge 1-\delta_j.
\label{eq:bracketing_with_deltaj}
\end{equation}
On the event in \eqref{eq:bracketing_with_deltaj}, the conditions $j\ge J_0$, $r_j\in\mathcal R_\star$, and $p_j-\Delta_j,p_j+\Delta_j\in(0,1)$ imply, by \Cref{ass:quantile_regularity}, that $Q_{r_j}(1-\alpha),Q_{r_j}(p_j-\Delta_j),  Q_{r_j}(p_j), Q_{r_j}(p_j+\Delta_j)
    \in I$. Therefore, \Cref{lem:local_inverse_cdf_bound} and the two-sided bracket in \eqref{eq:bracketing_with_deltaj} give
\begin{align*}
    |\xi_j|
    &=|q_j-Q_{r_j}(p_j)| \notag\\
    &\le \max\Bigl\{
    Q_{r_j}(p_j+\Delta_j)-Q_{r_j}(p_j),\notag\\
    &\hspace{1.8cm}
    Q_{r_j}(p_j)-Q_{r_j}(p_j-\Delta_j)
    \Bigr\}
    \le \frac{\Delta_j}{m},\\
    |b_j|
    &=|Q_{r_j}(p_j)-Q_{r_j}(1-\alpha)|
    \le \frac{\alpha-\bar\alpha_j}{m}.
\end{align*}
where the second inequality uses $p_j-(1-\alpha)=\alpha-\bar\alpha_j$. The decomposition \eqref{eq:eta_decomposition_main}, the triangle inequality, and $\varepsilon_j:=m^{-1}\!\left(\Delta_j+\alpha-\bar\alpha_j\right)$ yield $|\eta_j|\le\varepsilon_j$ on the same event. Combining this implication with \eqref{eq:bracketing_with_deltaj} proves \eqref{eq:eta_dkw_prob}.
\end{proof}

\subsection{Proof of \texorpdfstring{\Cref{thm:convergence}}{Theorem~\ref{thm:convergence}}}

\begin{proof}
Recall from \eqref{eq:eta_decomposition_main} that $q_j=Q_{r_j}(1-\alpha)+\eta_j$. Recall also that the fixed point satisfies $r_\star=Q_{r_\star}(1-\alpha)$.

\emph{One-step bound.} If $q_j\ge r_j$, the update rule in~\eqref{eq:explicit_update} gives $r_{j+1}=(q_j-\kappa r_j)/(1-\kappa)$. Substituting \eqref{eq:eta_decomposition_main} and subtracting $r_\star=Q_{r_\star}(1-\alpha)$ gives
\begin{align*}
  r_{j+1}-r_\star
  &= \frac{Q_{r_j}(1-\alpha)-Q_{r_\star}(1-\alpha)}{1-\kappa}
     -\frac{\kappa(r_j-r_\star)}{1-\kappa}\\
  &\quad + \frac{\eta_j}{1-\kappa}.
\end{align*}
By equation \eqref{eq:Q_Lipschitz}, we have $|Q_{r_j}(1-\alpha)-Q_{r_\star}(1-\alpha)|\le\kappa e_j$, where $e_j:=|r_j-r_\star|$. Hence, we have
\begin{equation*}
\begin{aligned}
&\left|\bigl[Q_{r_j}(1-\alpha)-Q_{r_\star}(1-\alpha)\bigr]
-\kappa(r_j-r_\star)\right|\\
&\hspace{1.8cm}\le 2\kappa e_j.
\end{aligned}
\end{equation*}
From here, it follows that $e_{j+1}\le 2\kappa e_j/(1-\kappa)+|\eta_j|/(1-\kappa)=\lambda_\kappa e_j+B_\kappa|\eta_j|$.

If $q_j<r_j$, the update rule in~\eqref{eq:explicit_update} gives $r_{j+1}=(q_j+\kappa r_j)/(1+\kappa)$. The same decomposition gives $e_{j+1}\le 2\kappa e_j/(1+\kappa)+|\eta_j|/(1+\kappa)\le\lambda_\kappa e_j+B_\kappa|\eta_j|$. Both cases together yield \eqref{eq:contraction}.

\emph{Unrolling.} Iterating \eqref{eq:contraction} across episodes gives
\begin{equation*}
    e_{j+1}\le\lambda_\kappa^{j+1}e_0+B_\kappa\sum_{m=0}^{j}\lambda_\kappa^{j-m}|\eta_m|.
\end{equation*}
If $|\eta_m|\le C$ for all $m$ and $\kappa<1/3$, then $\lambda_\kappa<1$ and the geometric-series bound gives
\begin{equation*}
    \limsup_{j\to\infty}e_j
    \le \frac{B_\kappa C}{1-\lambda_\kappa}
    =\frac{C}{1-3\kappa}.
\end{equation*}
\end{proof}

\subsection{Proof of \texorpdfstring{\Cref{cor:hp_tracking_convergence}}{Corollary~\ref{cor:hp_tracking_convergence}}}

\begin{proof}
Define the event $E_j:=\{|\eta_j|\le\varepsilon_j\}$. By \Cref{cor:eta_dkw}, we have that $\Prob_{n_j}(E_j)\ge1-\delta_j$ for each episode $j=J_0,\ldots,J$ under the  conditions stated in \Cref{cor:hp_tracking_convergence}. Applying a union bounding argument under the probability measure $\Prob_{0:J}\{\cdot\}$ directly yields
\begin{equation*}
\Prob_{0:J}\Bigl\{\bigcap_{j=J_0}^{J}E_j\Bigr\}
\ge 1-\sum_{j=J_0}^{J}\delta_j.
\end{equation*}
On the event $\bigcap_{j=J_0}^{J}E_j$, \Cref{thm:convergence} gives \eqref{eq:contraction} with $|\eta_j|$ replaced by $\varepsilon_j$ for all $j=J_0,\ldots,J$. Together, this proves  $\Prob_{0:J}\{\mathcal H_J\}\ge 1-\sum_{j=J_0}^{J}\delta_j$ where $\mathcal H_J:=\bigcap_{j=J_0}^{J}
\left\{e_{j+1}\le
\lambda_\kappa e_j+B_\kappa\varepsilon_j\right\}$. 

In the asymptotic case, we instead obtain
\begin{equation*}
\Prob_{0:\infty}\Bigl\{\bigcap_{j=J_0}^{\infty}E_j\Bigr\}
\ge 1-\sum_{j=J_0}^{\infty}\delta_j .
\end{equation*}
On the event $\bigcap_{j=J_0}^{\infty}E_j$, the recursion in \Cref{thm:convergence}, started from episode $J_0$, gives
\begin{equation*}
\begin{aligned}
    e_{j+1}
    &\le \lambda_\kappa^{j+1-J_0}e_{J_0}
    +B_\kappa\sum_{m=J_0}^{j}\lambda_\kappa^{j-m}\varepsilon_m .
\end{aligned}
\end{equation*}
If $\sup_{j\ge J_0}\varepsilon_j\le C$, the geometric-series argument in \Cref{thm:convergence} gives $\limsup_{j\to\infty}e_j\le C/(1-3\kappa)$ so that
\begin{equation*}
\begin{gathered}
\Prob_{0:\infty}\!\left\{
\limsup_{j\to\infty}|r_j-r_\star|
\le \frac{C}{1-3\kappa}
\right\}
\ge 1-\sum_{j=J_0}^{\infty}\delta_j .
\end{gathered}
\end{equation*}
If, in addition, $\varepsilon_j\to0$, then the convolution term $B_\kappa\sum_{m=J_0}^{j}\lambda_\kappa^{j-m}\varepsilon_m $ converges to zero. Indeed, for any $\epsilon>0$, choose $M\ge J_0$ such that $\varepsilon_m\le\epsilon$ for all $m\ge M$, split the sum at $M$, let $j\to\infty$ to eliminate the finite initial sum, and then let $\epsilon\downarrow0$. Since $\lambda_\kappa^{j+1-J_0}e_{J_0}\to0$, it follows that $e_j\to0$, equivalently $r_j\to r_\star$, on the same event.
\end{proof}                
\section{Extra Plots and Details for Case Studies}
\label{app:navigation_details}

This appendix provides detailed information about all three case studies. 

\begin{arxivonly}
\subsection{Inverted Pendulum CLF}
\begin{table}[h]
    \centering
    \caption{Final-episode ($j=9$) results for the inverted-pendulum.}
    \begin{tabular}{@{}lcccc@{}}
        \toprule
         & Robust & Naive & Cal-once & Non-rob. \\
        \midrule
        Final $r_9$ & 0.5282 & 0.5272 & 0.5271 & 0.0 \\
        Score cov. & 1.0 & 0.99 & 0.98 & 0.0 \\
        Stability & 1.0 & 1.0 & 1.0 & 0.36 \\
        \bottomrule
    \end{tabular}
    \label{tab:ip-final}
\end{table}
\begin{table}[h]
\centering
\caption{Experiment parameters for the inverted pendulum.}
\label{tab:pendulum_params}
\begin{tabular}{@{}llcl@{}}
\toprule
\textbf{Symbol} & \textbf{Description} & \textbf{Value} & \textbf{Unit} \\
\midrule
\multicolumn{4}{@{}l}{\emph{Pendulum model}} \\
$g$ & Gravitational acceleration & $9.81$ & m/s$^2$ \\
$m$ & True pendulum mass & $1.0$ & kg \\
$\ell$ & True pendulum length & $1.0$ & m \\
$b$ & True damping coefficient & $0.01$ & -- \\
$u_{\min},\,u_{\max}$ & Input bounds & $-7.0,\ 7.0$ & -- \\
\midrule
\multicolumn{4}{@{}l}{\emph{CBF-QP \eqref{eq:cbf_qp}}} \\
$K_{\mathrm{fb}}$ & Initial feedback gain specification & $(6.0,\,1.0)$ & -- \\
$c_3$ & CLF decay rate & $0.5$ & s$^{-1}$ \\
\midrule
\multicolumn{4}{@{}l}{\emph{Simulation}} \\
$T$ / $\Delta t$ & Horizon / time step & $5.0$ / $0.02$ & s \\
\midrule
\multicolumn{4}{@{}l}{\emph{Conformal prediction}} \\
$\alpha$ / $\delta$ & Miscov.\ / outer confidence & $0.10$ / $0.10$ & -- \\
$\kappa$ & Robust contraction constant & $0.8$ & -- \\
$r_0$ & Initial robustness margin & $2.0$ & -- \\
$n_j$ / $N_{\mathrm{eval}}$ & Cal.\ / eval.\ trajectories per episode & $200$ / $100$ & -- \\
$J$ & Number of episodes & $10$ & -- \\
\bottomrule
\end{tabular}
\end{table}
We use the same general setup and implementation as discussed in \cite{hsu2025}.
We define the state as $x = [\theta \ \dot \theta]^\top \in \mathbb{R}^2$ where $\theta$ denotes the angular position of the pendulum, and apply torque control $u \in \mathbb{R}$ to the true dynamics $f(x,u) = \begin{bmatrix}
    \dot \theta \\ -b \dot \theta / I + mgL \sin \theta / (2 I)
\end{bmatrix} + \begin{bmatrix}
    0 \\ - 1/I
\end{bmatrix} u$, where $m=1$, $l=1$, $b=0.01$, and $I = mL^2 / 3$.
The nominal dynamics are given as $\hat{f}(x,u)= M_1 \phi(x) + M_2 \phi(x) u$, where $M_1, M_2 \in \mathbb{R}^{2 \times 10}$ are learned weight matrices and $\phi(x) = [1 \ \theta \ \dot \theta \ \theta^2 \ \theta \dot \theta \ \dot \theta^2 \ \theta^3 \cdots \dot \theta^3]$ is a feature map.
\Cref{tab:ip-final} provides exact values for the final results at episode $j=9$, while \Cref{tab:pendulum_params} summarizes all simulation parameters.
For further plots, see Figures~\ref{fig:ip-results-large}--\ref{fig:ip-trajectories-cd-large}.

\end{arxivonly}

\begin{extendedonly}
\subsection{Single-Obstacle Gaussian-Vortex Parameters}
\label{app:gaussian_vortex_details}

\begin{table}[h]
\centering
\caption{Parameters for the single-obstacle Gaussian-vortex navigation.}
\label{tab:gaussian_vortex_params}
\begin{tabular}{@{}llcl@{}}
\toprule
\textbf{Symbol} & \textbf{Description} & \textbf{Value} & \textbf{Unit} \\
\midrule
\multicolumn{4}{@{}l}{\emph{Geometry}} \\
$R_{\mathrm{obs}}$/$R_{\mathrm{safe}}$ & Obstacle / safety radius & $0.5$ / $0.6$ & m \\
$B$ & Goal position & $(1.5,\,0.52)$ & m \\
$\XX$ & IC box $\cap\{h\ge 0.05\}$ & $[-3,-0.7]\!\times\![-1.5,0.5]$ & m \\
\midrule
\multicolumn{4}{@{}l}{\emph{Gaussian-vortex residual $\varepsilon(x,u)=\alpha_g e^{-\|x\|^2/(2\sigma_g^2)}(R_{\theta_g}-I)u+d_g$}} \\
$\theta_g$ & Rotation angle & $70^\circ$ & -- \\
$\sigma_g$ & Gaussian field scale & $1.1$ & m \\
$\alpha_g$ & Field amplitude & $1.0$ & -- \\
$d_g$ & Ambient drift & $(0,\,-0.005)$ & m/s \\
\midrule
\multicolumn{4}{@{}l}{\emph{CBF-QP}} \\
$\gamma$ & CBF decay rate & $11.0$ & s$^{-1}$ \\
$k_{\mathrm{trk}}$ & Tracking gain & $1.0$ & s$^{-1}$ \\
\midrule
\multicolumn{4}{@{}l}{\emph{Simulation}} \\
$T$ / $\Delta t$ & Horizon / time step & $5.0$ / $0.01$ & s \\
\midrule
\multicolumn{4}{@{}l}{\emph{Conformal prediction}} \\
$\alpha$ / $\delta$ & Miscov.\ / outer confidence & $0.1$ / $0.05$ & -- \\
$\kappa$ & Robust contraction constant & $0.2$ & -- \\
$n_j$ / $N_{\mathrm{eval}}$ & Cal.\ / eval.\ trajectories per episode & $500$ / $1{,}000$ & -- \\
$J$ & Number of episodes & $20$ & -- \\
\bottomrule
\end{tabular}
\end{table}

\paragraph{Residual model.}
The Gaussian-vortex residual $\varepsilon(x,u)=\alpha_g e^{-\|x\|^2/(2\sigma_g^2)}(R_{\theta_g}-I)u+d_g$ differs from the barrier-localized vortex $\varepsilon(x,u)=\sigma(x)(R_\theta-I)u+d$ with $\sigma(x):=e^{-\max\{h(x),0\}/\ell}$ in two respects.
First, the localization function is a Gaussian $\alpha_g\exp(-\|x\|^2/(2\sigma_g^2))$ centered at the obstacle rather than $\exp(-\max\{h(x),0\}/\ell)$, making it smooth everywhere and independent of the barrier function~$h$.
Second, the rotation angle is $\theta_g=70^\circ$ rather than $160^\circ$, which results in a milder reversal of the radial component: $\cos(70^\circ)\approx 0.34$ versus $\cos(160^\circ)\approx -0.94$. This makes the Gaussian-vortex setting more benign---the distribution shift is present but moderate---providing a convergent regime for the robust method recursion.

\paragraph{One-step empirical results.}
\begin{table}[h]
\centering
\caption{One-step empirical results for the Gaussian-vortex case ($N=1{,}000$, $\alpha=0.1$).}
\label{tab:gaussian_onestep}
\begin{tabular}{@{}lcc@{}}
\toprule
\textbf{Metric} & $\pi_0$ ($r=0$) & $\pi_1 = \pi_r^{\mathrm{cbf}}$ ($r=\hat q$) \\
\midrule
Score coverage $\hat p_{\mathrm{score}}$ & $0.904$ & $0.885$ \\
Safety probability $\hat p_{\mathrm{safe}}$ & $0.977$ & $0.885$ \\
$\hat q_{n,\alpha}$ & \multicolumn{2}{c}{$2.056$} \\
\bottomrule
\end{tabular}
\end{table}
\end{extendedonly}

\subsection{Multi-Obstacle Maze Parameters}
\label{app:maze_details}

\begin{extendedonly}
\begin{table}[h]
\centering
\caption{Experiment parameters for the multi-obstacle maze (\Cref{sec:maze}).}
\label{tab:maze_params}
\begin{tabular}{@{}llcl@{}}
\toprule
\textbf{Symbol} & \textbf{Description} & \textbf{Value} & \textbf{Unit} \\
\midrule
\multicolumn{4}{@{}l}{\emph{Geometry}} \\
$K$ & Number of obstacles & $17$ & -- \\
$\eta$ & Safety margin factor & $0.25$ ($R_{s,i}=1.25\,R_i$) & -- \\
$B_{\mathrm{m}}$ & Goal position & $(10,\,0)$ & m \\
${\XX}_{\mathrm{m}}$ & IC box $\cap\{\min_i h_i \ge 0.05\}$ & $[-5,-0.5]\!\times\![-2.59,2.59]$ & m \\
\midrule
\multicolumn{4}{@{}l}{\emph{Disturbance $\varepsilon(x,u)=\sum_i\sigma_i(x)(R_{\theta_i}-I_2)u+d_{\mathrm{m}}$, $R_\theta\in\mathrm{SO}(2)$}} \\
$\theta_i$ & Per-obstacle rotation angle & $10^\circ$--$32^\circ$ & -- \\
$\ell_i$ & Per-obstacle length-scale & $0.15$--$0.50$ & m \\
$d_{\mathrm{m}}$ & Ambient drift & $(0.001,\,-0.002)$ & m/s \\
\midrule
\multicolumn{4}{@{}l}{\emph{CBF-QP~\eqref{eq:cbf_qp}}} \\
$\gamma$ & CBF decay rate & $10.0$ & s$^{-1}$ \\
$k_{\mathrm{trk}}$ & Tracking gain & $0.6$ & s$^{-1}$ \\
\midrule
\multicolumn{4}{@{}l}{\emph{Simulation}} \\
$T$ / $\Delta t$ & Horizon / time step & $12.0$ / $0.01$ & s \\
\midrule
\multicolumn{4}{@{}l}{\emph{Conformal prediction}} \\
$\alpha$ / $\delta$ & Miscov.\ / outer confidence & $0.1$ / $0.05$ & -- \\
$\kappa$ & Robust contraction constant & $0.3$ & -- \\
$n_j$ / $N_{\mathrm{eval}}$ & Cal.\ / eval.\ trajectories per episode & $200$ / $500$ & -- \\
$J$ & Number of episodes & $20$ & -- \\
\bottomrule
\end{tabular}
\end{table}
\end{extendedonly}
\begin{compactonly}
\begin{table}[h]
\centering
\caption{Experiment parameters for the multi-obstacle maze (\Cref{sec:maze}).}
\label{tab:maze_params}
\footnotesize
\begin{tabular}{@{}llc@{}}
\toprule
\textbf{Symbol} & \textbf{Description} & \textbf{Value} \\
\midrule
\multicolumn{3}{@{}l}{\emph{Geometry}} \\
$K$ & Number of obstacles & $17$ \\
$\eta$ & Safety margin factor & $0.25$ ($R_{s,i}\!=\!1.25\,R_i$) \\
$B_{\mathrm{m}}$ & Goal position & $(10,\,0)$ \\
${\XX}_{\mathrm{m}}$ & IC box $\cap\{\min_i h_i \!\ge\! 0.05\}$ & $[-5,\!-\!0.5]\!\times\![-2.59,2.59]$ \\
\midrule
\multicolumn{3}{@{}l}{\emph{Disturbance $\varepsilon(x,u)=\sum_i\sigma_i(x)(R_{\theta_i}-I_2)u+d_{\mathrm{m}}$, $R_\theta\in\mathrm{SO}(2)$}} \\
$\theta_i$ & Per-obstacle rotation angle & $10^\circ$--$32^\circ$ \\
$\ell_i$ & Per-obstacle length-scale & $0.15$--$0.50$ \\
$d_{\mathrm{m}}$ & Ambient drift & $(0.001,\,-0.002)$ \\
\midrule
\multicolumn{3}{@{}l}{\emph{CBF-QP~\eqref{eq:cbf_qp}}} \\
$\gamma$ & CBF decay rate & $10.0$ \\
$k_{\mathrm{trk}}$ & Tracking gain & $0.6$ \\
\midrule
\multicolumn{3}{@{}l}{\emph{Simulation}} \\
$T$ / $\Delta t$ & Horizon / time step & $12.0$ / $0.01$\,s \\
\midrule
\multicolumn{3}{@{}l}{\emph{Conformal prediction}} \\
$\alpha$ / $\delta$ & Miscov.\ / outer confidence & $0.1$ / $0.05$ \\
$\kappa$ & Robust contraction constant & $0.3$ \\
$n_j$ / $N_{\mathrm{eval}}$ & Cal.\ / eval.\ traj.\ per episode & $200$ / $500$ \\
$J$ & Number of episodes & $20$ \\
\bottomrule
\end{tabular}
\end{table}
\end{compactonly}

\paragraph{Obstacle layout.}
The $17$ obstacles are arranged to form a maze with three rows and staggered interior gaps. Table~\ref{tab:maze_obstacles} lists the center coordinates, physical radii, rotation angles, and length-scales for each obstacle. The boundary rows at $y=\pm 2$ use small rotation angles ($\theta_i=10^\circ$) and short length-scales ($\ell_i=0.15$), modeling a weak but spatially concentrated vortex near the corridor walls. The interior obstacles have larger rotation angles ($18^\circ$--$32^\circ$) and broader length-scales ($0.30$--$0.50$), creating stronger and more extended disturbance fields that dominate the navigation challenge.

\begin{table}[h]
\centering
\caption{Obstacle specifications for the multi-obstacle maze (\Cref{sec:maze}).}
\label{tab:maze_obstacles}
\footnotesize
\begin{tabular}{@{}cccccl@{}}
\toprule
\textbf{Index} & $c_i$ & $R_i$ (m) & $\theta_i$ (deg) & $\ell_i$ & \textbf{Row} \\
\midrule
 1 & $(1.0,\,2.0)$    & 0.35 & 10 & 0.15 & upper \\
 2 & $(3.5,\,2.0)$    & 0.35 & 10 & 0.15 & upper \\
 3 & $(6.0,\,2.0)$    & 0.35 & 10 & 0.15 & upper \\
 4 & $(8.5,\,2.0)$    & 0.35 & 10 & 0.15 & upper \\
 5 & $(1.0,\,-2.0)$   & 0.35 & 10 & 0.15 & lower \\
 6 & $(2.5,\,-2.0)$   & 0.35 & 10 & 0.15 & lower \\
 7 & $(5.0,\,-2.0)$   & 0.35 & 10 & 0.15 & lower \\
 8 & $(7.5,\,-2.0)$   & 0.35 & 10 & 0.15 & lower \\
 9 & $(1.5,\,-0.2)$   & 0.42 & 30 & 0.45 & interior \\
10 & $(3.0,\,0.5)$    & 0.40 & 28 & 0.50 & interior \\
11 & $(4.5,\,-0.4)$   & 0.45 & 32 & 0.45 & interior \\
12 & $(6.0,\,0.4)$    & 0.38 & 28 & 0.50 & interior \\
13 & $(7.5,\,-0.2)$   & 0.42 & 30 & 0.45 & interior \\
14 & $(2.83,\,-0.87)$ & 0.25 & 20 & 0.35 & interior \\
15 & $(4.2,\,1.2)$    & 0.25 & 18 & 0.35 & interior \\
16 & $(5.2,\,1.25)$   & 0.22 & 15 & 0.30 & interior \\
17 & $(6.9,\,-1.2)$   & 0.24 & 22 & 0.35 & interior \\
\bottomrule
\end{tabular}
\end{table}

\paragraph{QP solver.}
With $17$ rCBF constraints in two dimensions, the CBF-QP~\eqref{eq:cbf_qp} has $\binom{17}{2}+17+1 = 154$ candidate solutions (one unconstrained, $17$ single-constraint projections, and $136$ pairwise intersections). For each candidate, feasibility is checked by evaluating $17$ inner products. In practice, the unconstrained candidate $u_{\mathrm{nom}}$ is feasible in the vast majority of time steps (the agent is far from all obstacles), and the solver exits in $O(K)$ time. Near obstacles, at most a handful of constraints become active. The solver returns the feasible candidate closest to $u_{\mathrm{nom}}$ in Euclidean norm, guaranteeing global optimality.

\paragraph{Episodic convergence.}
The robust margin converges from its initial value $r_0 \approx 2.75$ (obtained from calibration at $r=0$) to a stable value $r\approx 2.38$ within two episodes. This convergence is consistent with the tracking guarantee of Theorem~\ref{thm:convergence}: since $\kappa = 0.3 < 1/3$, the contraction rate $\lambda_\kappa = 2\kappa/(1-\kappa) = 6/7 < 1$ ensures geometric convergence up to the quantile-estimation error $\eta_j$.

\paragraph{Final-episode results.}
\begin{extendedonly}
\begin{table}[h]
\centering
\caption{Final-episode ($j=19$) empirical results for the multi-obstacle maze.}
\label{tab:maze_final}
\begin{tabular}{@{}lcccc@{}}
\toprule
\textbf{Metric} & M1 (original) & M2 (cal-once) & M3 (Naive) & M4 (Robust) \\
\midrule
Final margin $r_{19}$ & $0.000$ & $2.246$ & $2.336$ & $2.380$ \\
Score coverage & -- & $0.668$ & $0.872$ & $0.998$ \\
Safety rate & $0.262$ & $1.000$ & $1.000$ & $1.000$ \\
\bottomrule
\end{tabular}
\end{table}
\end{extendedonly}
\begin{compactonly}
\begin{table}[h]
\centering
\caption{Final-episode ($j\!=\!19$) results for the multi-obstacle maze.}
\label{tab:maze_final}
\footnotesize
\begin{tabular}{@{}lcccc@{}}
\toprule
 & Non-rob. & Cal-once & Naive & Robust \\
\midrule
Final $r_{19}$ & $0.000$ & $2.246$ & $2.336$ & $2.380$ \\
Score cov. & -- & $0.668$ & $0.872$ & $0.998$ \\
Safety & $0.262$ & $1.000$ & $1.000$ & $1.000$ \\
\bottomrule
\end{tabular}
\end{table}
\end{compactonly}
Table~\ref{tab:maze_final} reports the final-episode metrics. The non-robust baseline ($r=0$) suffers a $73.8\%$ collision rate. The calibrate-once baseline achieves full safety but its fixed margin $r_{\mathrm{cal}}\approx 2.25$ leads to score coverage of only $66.8\%$, well below the required $1-\alpha=0.9$. The naive baseline reaches $87.2\%$ score coverage---closer but still below the target. Robust achieves the highest score coverage ($99.8\%$) while maintaining full safety, confirming that the iterative margin update successfully balances robustness and coverage in geometrically complex environments. For additional plots, see Figures~\ref{fig:maze-episodic-large}--\ref{fig:maze-traj-srcr-large} in \Cref{app:tables_figures}.

\begin{fullonly}
\subsection{Quadcopter Obstacle-Avoidance}
\label{sec:quadc}
We examine a quadcopter navigation task built on the open-source QuadSwarm simulator \cite{quadswarm}.
The agent is provided with a spawning region (a ball of radius 1 meter) and a goal point.  
Each agent is provided with a pre-trained RL policy which directs it to the goal point but this policy only observes the agent's current state and cannot react to the environment at large: any interactive behavior is determined entirely by a CBF.
We model the nominal dynamics as defined in the QuadSwarm paper; however, we ignore any nose or damping effects found in the true dynamics.  Additionally, the nominal dynamics don't account for environment interactions such as collisions or downwash effects between nearby quads.
These discrepancies result in context-dependent mismatches between the nominal and true dynamics.
As the simulator is implemented in discrete time, in practice we estimate the trajectory-level error as
$\varepsilon(z(t), u(t)) \approx \| (\hat x, \hat v)(t+1) - (x, v)(t+1) \| / \delta t$
where $\delta t$ denotes the length of each timestep and $z$, $x$, and $v$ denotes the quadcopter state, position, and velocity vectors.
On these nominal dynamics, we implement a 4th order ECBF-QP enforcing a radius norm distance condition.
This is done to give the CBF control of each motor's individual thrust, which only appears on the fourth derivative of the $h$-function.
For every episode $j$, $n_j =300$ trajectories are collected in estimating $q_j$ for $\alpha = 0.1$, $\delta = 0.05$ and 200 trajectories are collected with each $r_j$ for evaluation purposes. The robust baseline uses $\kappa = 0.6$ in the explicit update~\eqref{eq:explicit_update}.

\begin{table}[h]
\centering
\caption{Experiment parameters for the single-quad obstacle-avoidance setting.}
\label{tab:quad_obs_params}
\begin{tabular}{@{}llcl@{}}
\toprule
\textbf{Symbol} & \textbf{Description} & \textbf{Value} & \textbf{Unit} \\
\midrule
\multicolumn{4}{@{}l}{\emph{Quadrotor model}} \\
$g$ & Gravitational acceleration & $9.81$ & m/s$^2$ \\
$a_{\mathrm{arm}}$ & Quadrotor arm length & $0.04596$ & m \\
$\mathrm{T2W}$ & Thrust-to-weight ratio & $1.9$ & -- \\
$c_{\tau}$ & Torque-to-thrust ratio & $0.006$ & -- \\
$\tau_{\uparrow},\,\tau_{\downarrow}$ & Motor spin-up/down time & $0.15,\ 0.15$ & s \\
$\sigma_{\mathrm{thr}}$ & Thrust-noise ratio & $0.05$ & -- \\
$d_v$ & Linear velocity damping & $10^{-4}$ & -- \\
$d_{\omega}$ & Quadr. angular-rate damping & $10^{-4}$ & -- \\
\midrule
\multicolumn{4}{@{}l}{\emph{Obstacle environment}} \\
$\rho_{\mathrm{obs}}$ & Obstacle radius & $2.0$ & m \\
$R_{\mathrm{obs}}$ & Extra CBF radius & $0.1$ & m \\
$R_{\mathrm{spawn}}$ & Spawn-ball radius & $1.0$ & m \\
\midrule
\multicolumn{4}{@{}l}{\emph{CBF-QP}} \\
$\lambda_0$ & ECBF coefficient for $h$ & 256 & -- \\
$\lambda_1$ & ECBF coefficient for $\dot h$ & 256 & -- \\
$\lambda_2$ & ECBF coefficient for $\ddot h$ & 96 & -- \\
$\lambda_3$ & ECBF coefficient for $\dddot h$ & 16 & -- \\
$\lambda_4$ & ECBF coefficient for $\ddddot h$ & 1 & -- \\
\midrule
\multicolumn{4}{@{}l}{\emph{Simulation}} \\
$f_{\mathrm{sim}}$ / $\Delta t_{\mathrm{sim}}$ & Inner sim. freq. / time step & $200$ / $0.005$ & Hz / s \\
$f_{\mathrm{ctrl}}$ / $\Delta t$ & Control freq. / time step & $100$ / $0.01$ & Hz / s \\
$H$ / $T$ & Horizon / rollout duration & $700$ / $7.0$ & steps / s \\
\midrule
\multicolumn{4}{@{}l}{\emph{Conformal prediction}} \\
$\alpha$ / $\delta$ & Miscov.\ / outer confidence & $0.10$ / $0.05$ & -- \\
$\kappa$ & Robust contraction const. & $0.6$ & -- \\
$r_0$ & Initial robustness margin & $2.0$ & -- \\
$n_j$ / $N_{\mathrm{eval}}$ & Cal.\ / eval.\ trajectories & $300$ / $200$ & -- \\
$J$ & Number of episodes & $10$ & -- \\
\bottomrule
\end{tabular}
\end{table}

\begin{table}[h]
    \centering
    \caption{Environment for the quadcopter obstacle-avoidance setting.}
    \begin{tabular}{@{}lcccc@{}}
        \toprule
        Name & x & y & z & Radius \\
        \midrule
        Spawn center & -3.5 & -4.1 & 0.0 & 1.0 \\
        Obstacle 1 & -1.8 & -1.0 & -- & 2.0 \\
        Obstacle 2 & 3.9 & 0.0 & -- & 2.0 \\
        Obstacle 3 & 2.5 & -4.5 & -- & 2.0 \\
        Goal & 4.0 & 3.5 & 0.0 & 0.0 \\
        \bottomrule
    \end{tabular}
    \label{tab:obs-points}
\end{table}

We consider an environment with one ego quadcopter and a hand-designed set of fixed obstacles, each being a column of a 2m radius.  
The obstacles were placed to create a challenging track for the quad to navigate, forcing the quadcopter to traverse a corridor surrounded by these columns in order to reach the goal point.
The ECBF was designed to enforce an effective margin of 100cm from each obstacle, with $h(x) = \min_{o_i \in \mathcal{O}} \| P(x - o_i) \| - 2.1$, where $P$ represents a map from $(x,y,z) \mapsto (x,y)$ space and $\mathcal{O}$ represents the set of obstacle centers.

We provide plots depicting results comparing baselines in \Cref{fig:obs-results-large} and trajectory rollouts in \Cref{fig:obs-trajs-large}.
In this setting, the mismatch between the nominal and true dynamics is greater during rollouts of larger values of $r$.
Higher values of $r$ cause the quad to engage in stopping behavior as it attempts to make progress towards the goal but reaches areas that are not permitted given the robustification coefficient: this stopping behavior exacerbates the mismatch between the nominal and true dynamics, as velocity and acceleration damping comes into effect while the quad attempts to sharply change velocity and acceleration.
As a result, the calibrate-once trajectories fail to exploit the distribution shift resultant from changing the robustification of the CBF, exhibiting significantly worse performance in cumulative reward across episodes (\Cref{fig:obs-results-large}(b)).  
Indeed, while the initial value of $r=2$ results in the quads attempting to circumvent the maze entirely, the calibrate-once value of $r=0.671$ results in degenerate behavior where the quads attempt to make progress but realize that $r$ is still too large to allow proper pathfinding and get stuck (\Cref{fig:obs-trajs-large}(b)).
In contrast, our algorithm succeeds in safely converging on a value of $r$ which allows the quad to consistently brave the maze while maintaining the required margin from all obstacles (\Cref{fig:obs-trajs-large}(c)).

\end{fullonly}
\arxonly{\onecolumn}
\section{Collected Tables and Figures}
\label{app:tables_figures}

This section reproduces the main result figures in enlarged format for improved readability.

\subsection{Example 1}
\begin{figure}[H]
    \centering
    \begin{subfigure}[t]{0.50\textwidth}
        \includegraphics[width=\linewidth]{figures/final/ex1_trajectories.pdf}
        \caption{Trajectories}
    \end{subfigure}\hfill 
    \begin{subfigure}[t]{0.48\textwidth}
        \includegraphics[width=\linewidth]{figures/final/ex1_scores.pdf}
        \caption{Score distributions}
    \end{subfigure}\\[0.5em] 
    \begin{subfigure}[t]{0.50\textwidth}
        \includegraphics[width=\linewidth]{figures/final/ex1_residual.pdf}
        \caption{Residual norm}
    \end{subfigure}
    \caption{Example~1 (enlarged). $u_0\!=\!0.3$, $T\!=\!2$, $\alpha\!=\!0.1$, $\gamma\!=\!0.5$.}
    \label{fig:prop1-large}
\end{figure}

\subsection{Inverted Pendulum}

\begin{figure}[H]
    \centering
    \captionsetup{font=small,skip=2pt}
    \begin{subfigure}[t]{0.49\linewidth}
        \includegraphics[width=\linewidth]{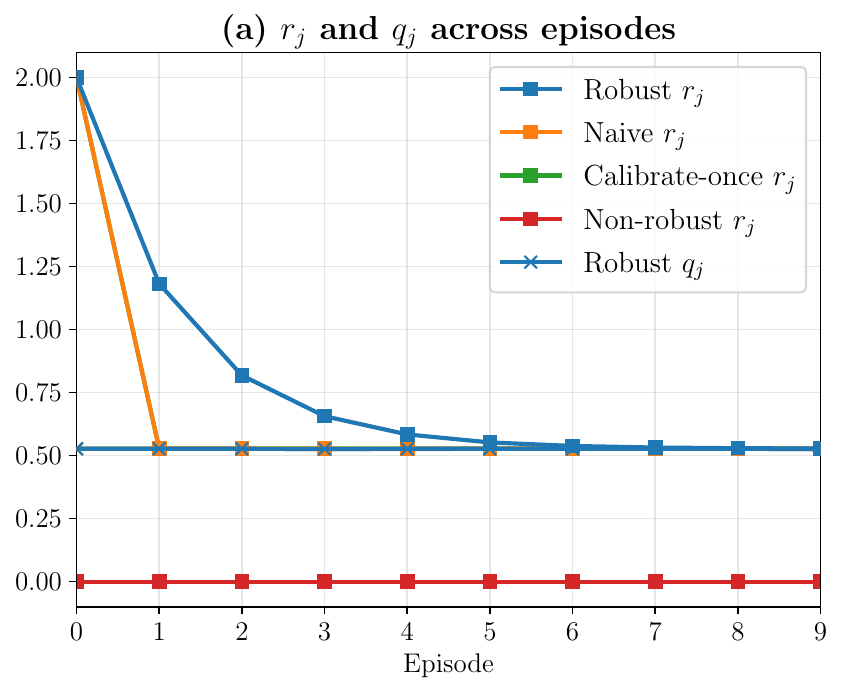}
    \end{subfigure}\hfill
    \begin{subfigure}[t]{0.49\linewidth}
        \includegraphics[width=\linewidth]{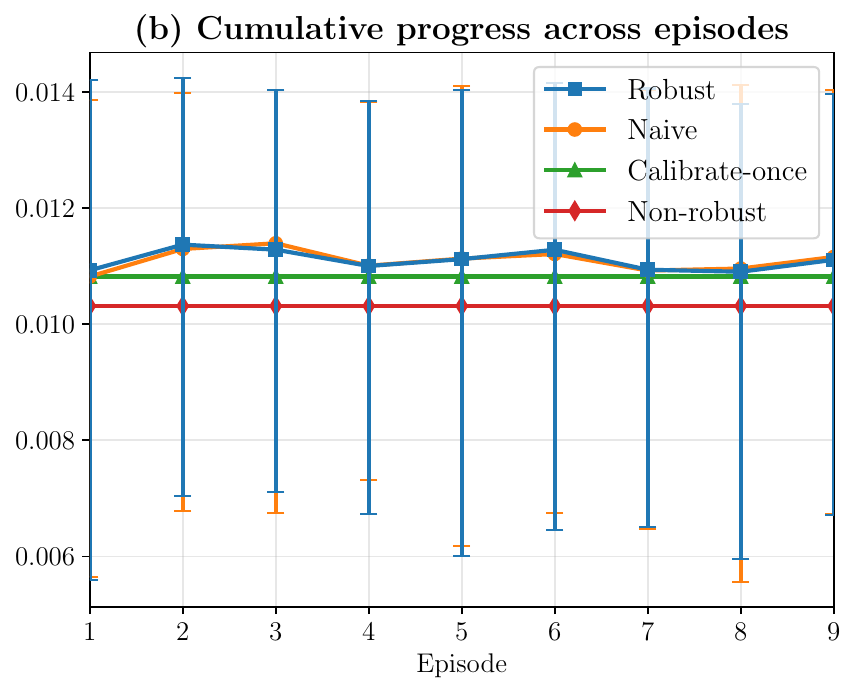}
    \end{subfigure}\hfill \\
    \begin{subfigure}[t]{0.49\linewidth}
        \includegraphics[width=\linewidth]{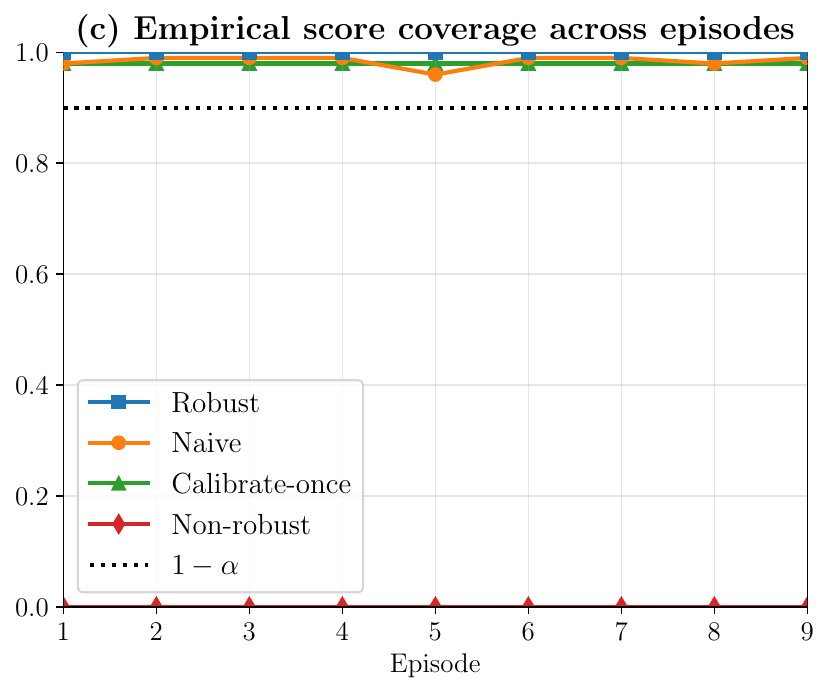}
    \end{subfigure}\hfill
    \begin{subfigure}[t]{0.49\linewidth}
        \includegraphics[width=\linewidth]{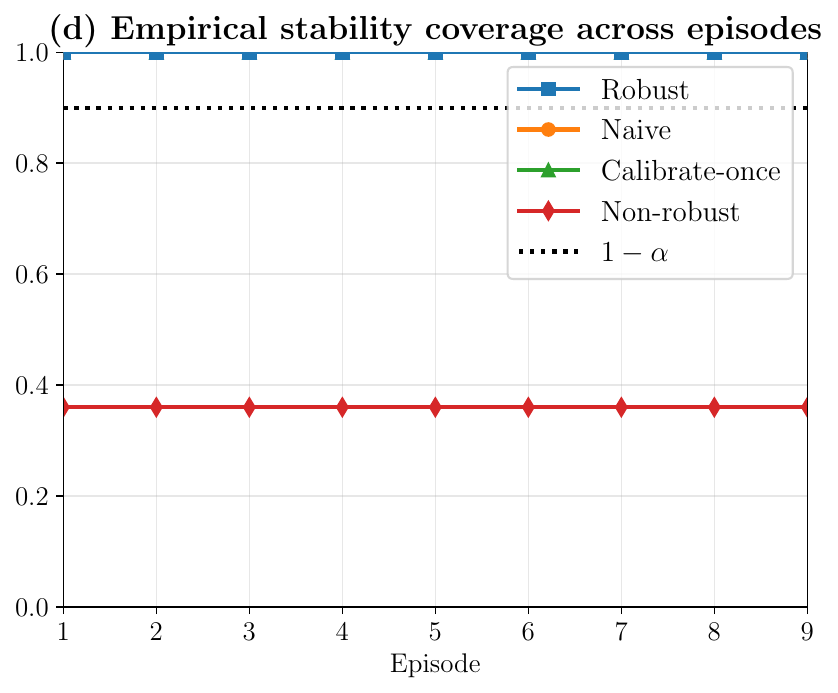}
    \end{subfigure}
    \caption{Inverted pendulum. (a)~$r_j$, $q_j$ across episodes. (b)~Cumulative progress towards $(0,0)$. (c)~Score coverage $s_j^{(i)}\!\le\! r_j$. (d)~Stability coverage per Theorem~\ref{thm:det_ct_clf}.}
    \label{fig:ip-results-large}
    \subonly{\vspace{-.2cm}}
\end{figure}

\begin{figure}[H]
    \centering
    \captionsetup{font=small,skip=2pt}
    \begin{subfigure}[t]{0.49\linewidth}
        \includegraphics[width=\linewidth]{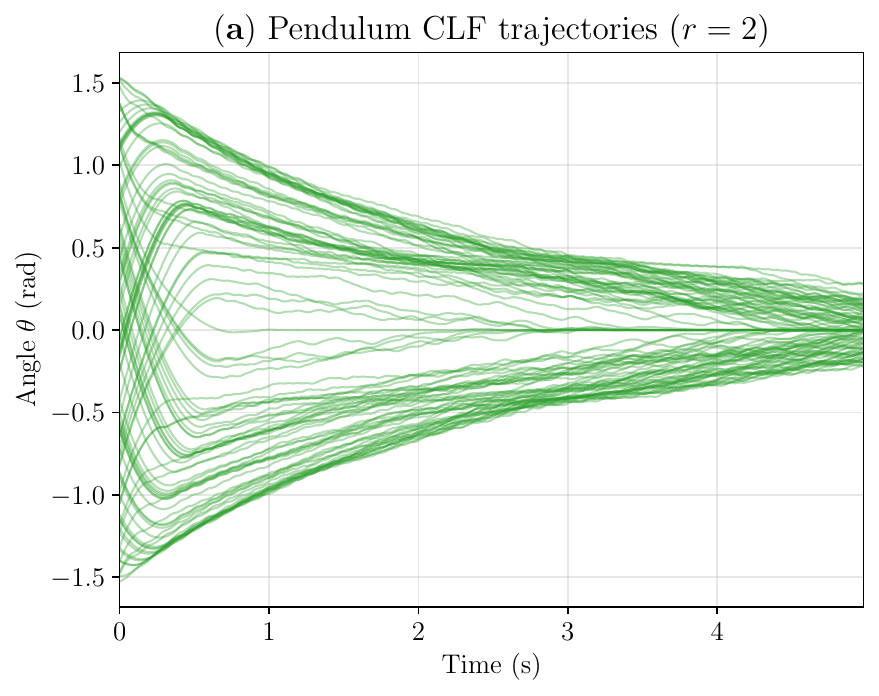}
    \end{subfigure}\hfill
    \begin{subfigure}[t]{0.49\linewidth}
        \includegraphics[width=\linewidth]{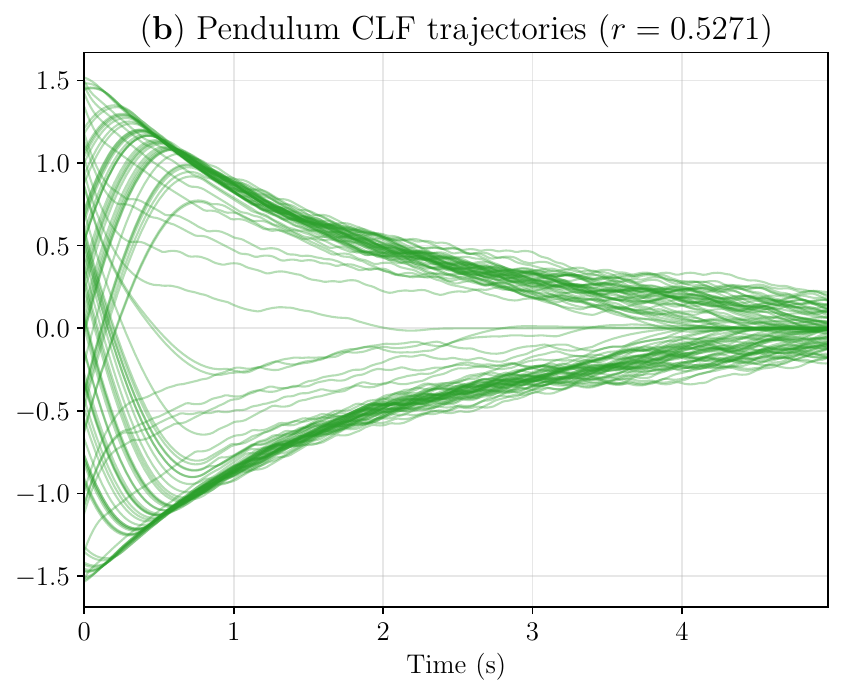}
    \end{subfigure}\hfill \\
    \begin{subfigure}[t]{0.49\linewidth}
        \includegraphics[width=\linewidth]{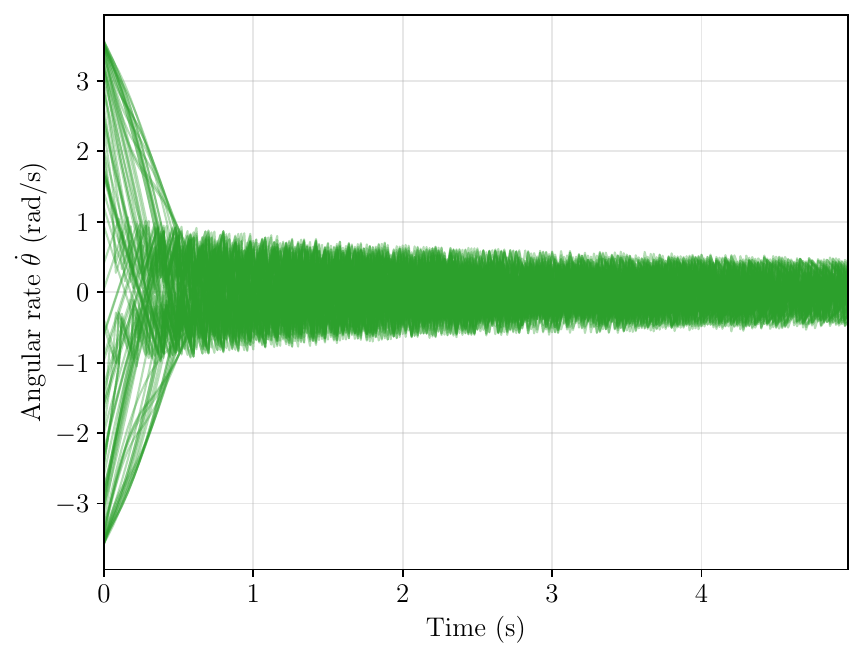}
    \end{subfigure}\hfill
    \begin{subfigure}[t]{0.49\linewidth}
        \includegraphics[width=\linewidth]{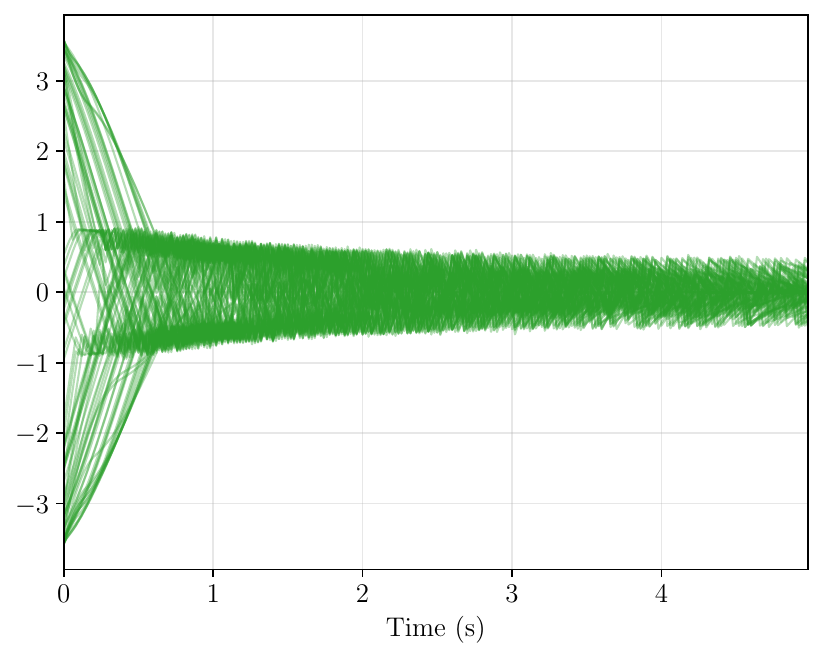}
    \end{subfigure}\hfill 
    \caption{Inverted pendulum trajectories. (a)~$r\!=\!r_0\!=\!2$. (b)~$r\!=\!r_{\text{calibrate-once}}$. Colors indicate stability violation per Theorem~\ref{thm:det_ct_clf}.}
    \label{fig:ip-trajectories-ab-large}
    \subonly{\vspace{-.8cm}}
\end{figure}

\begin{figure}[H]
    \centering
    \captionsetup{font=small,skip=2pt}
    \begin{subfigure}[t]{0.49\linewidth}
        \includegraphics[width=\linewidth]{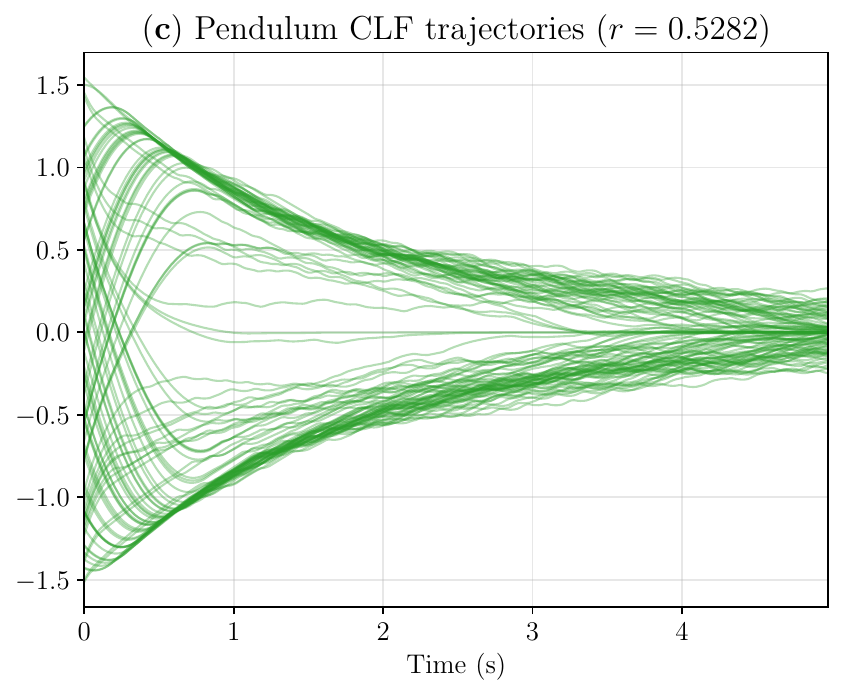}
    \end{subfigure}\hfill
    \begin{subfigure}[t]{0.49\linewidth}
        \includegraphics[width=\linewidth]{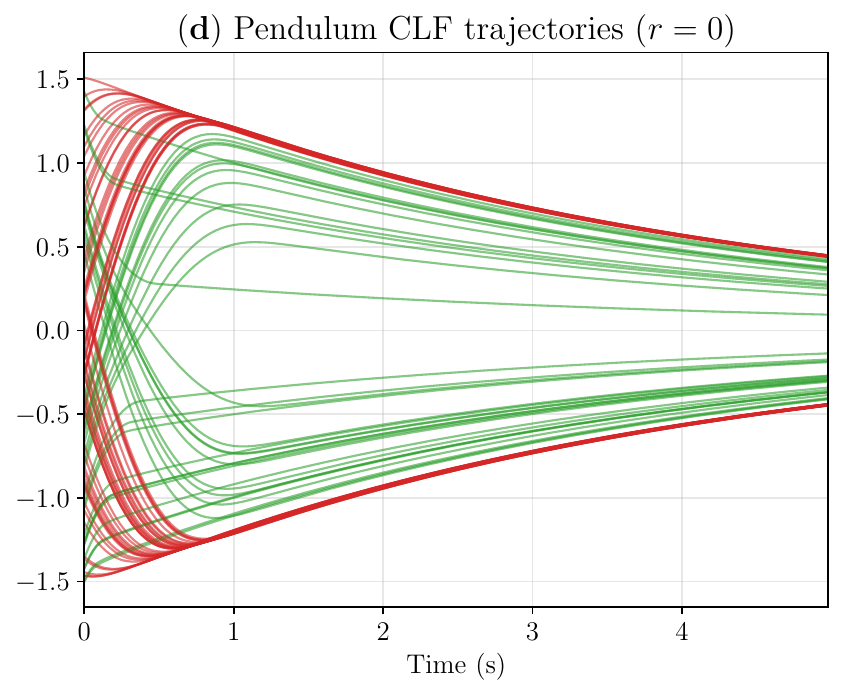}
    \end{subfigure} \\
    \begin{subfigure}[t]{0.49\linewidth}
        \includegraphics[width=\linewidth]{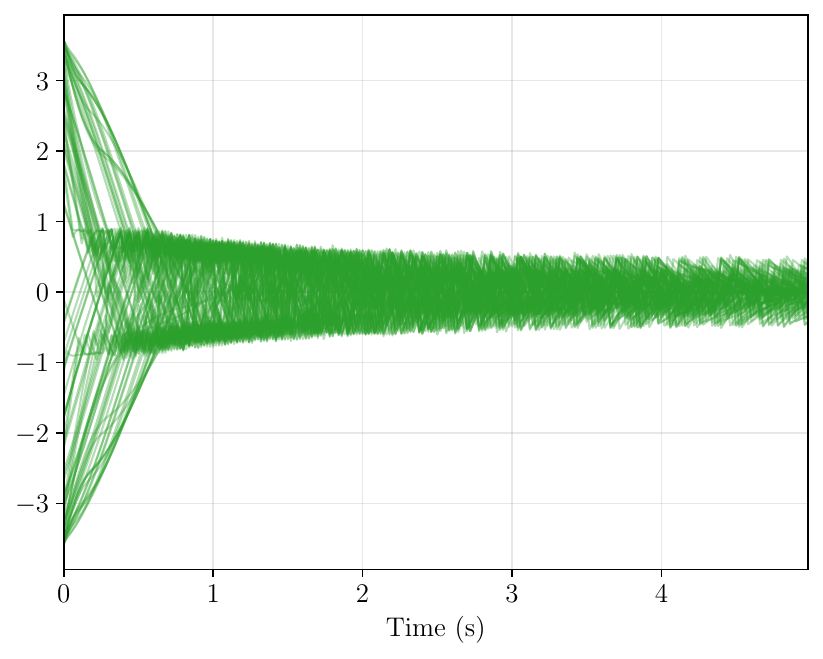}
    \end{subfigure}\hfill
    \begin{subfigure}[t]{0.49\linewidth}
        \includegraphics[width=\linewidth]{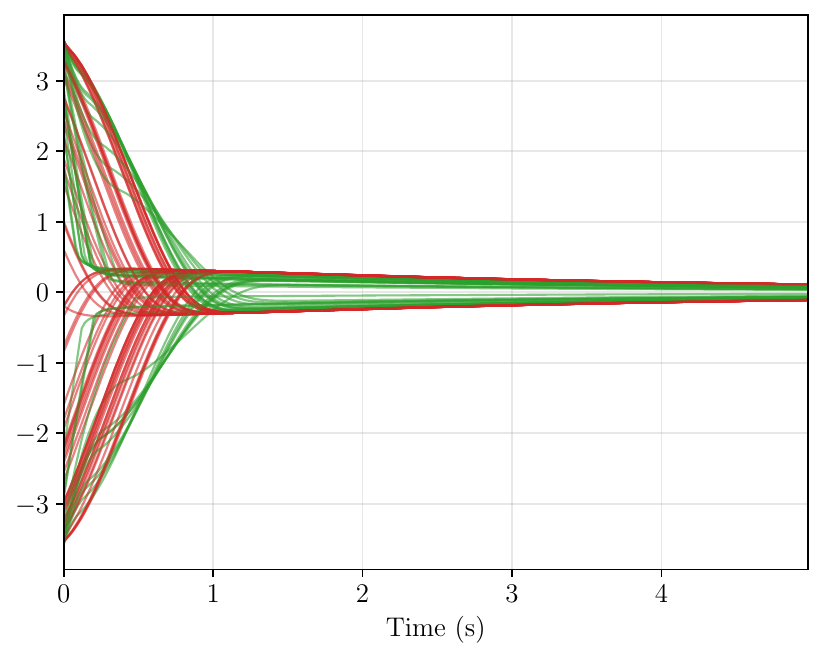}
    \end{subfigure}
    \caption{Inverted pendulum trajectories. (c)~$r\!=\!r_{\text{robust}}$. (d)~$r\!=\!0$. Colors indicate stability violation per Theorem~\ref{thm:det_ct_clf}.}
    \label{fig:ip-trajectories-cd-large}
    \subonly{\vspace{-.8cm}}
\end{figure}

\subsection{Multi-Obstacle Maze}

\begin{figure}[H]
    \centering
    \begin{subfigure}[t]{0.48\textwidth}
        \includegraphics[width=\linewidth]{figures/Final_plots/maze_margin.pdf}
        \caption{Margin $r_j$}
    \end{subfigure}\hfill
    \begin{subfigure}[t]{0.48\textwidth}
        \includegraphics[width=\linewidth]{figures/Final_plots/maze_score_cov.pdf}
        \caption{Score coverage}
    \end{subfigure}\\[0.5em]
    \begin{subfigure}[t]{0.48\textwidth}
        \includegraphics[width=\linewidth]{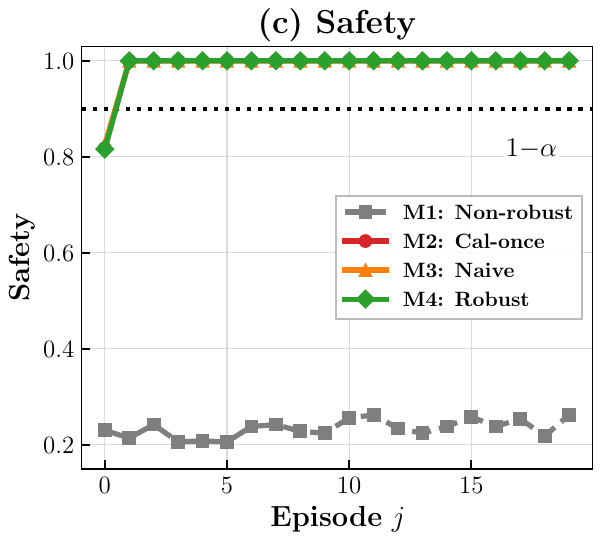}
        \caption{Safety (non-robust omitted; safety $\approx 0.23$)}
    \end{subfigure}
    \caption{Maze episodic results (enlarged).}
    \label{fig:maze-episodic-large}
\end{figure}

\begin{figure}[H]
    \centering
    \includegraphics[width=\textwidth]{figures/Final_plots/maze_traj_original.pdf}
    \caption{Maze trajectories: non-robust ($r\!=\!0$). 21 safe, 79 unsafe.}
    \label{fig:maze-traj-orig-large}
\end{figure}

\begin{figure}[H]
    \centering
    \includegraphics[width=\textwidth]{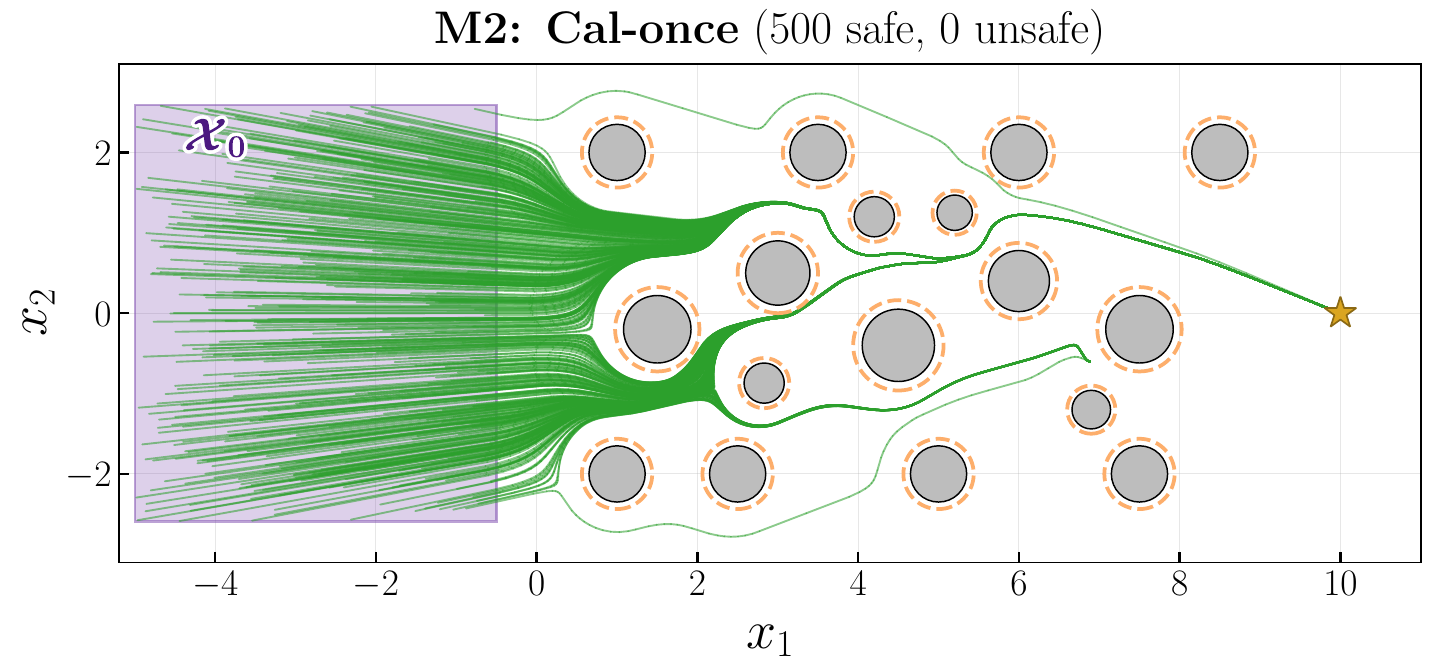}
    \caption{Maze trajectories: calibrate-once. 100 safe, 0 unsafe.}
    \label{fig:maze-traj-cal-large}
\end{figure}

\begin{figure}[H]
    \centering
    \includegraphics[width=\textwidth]{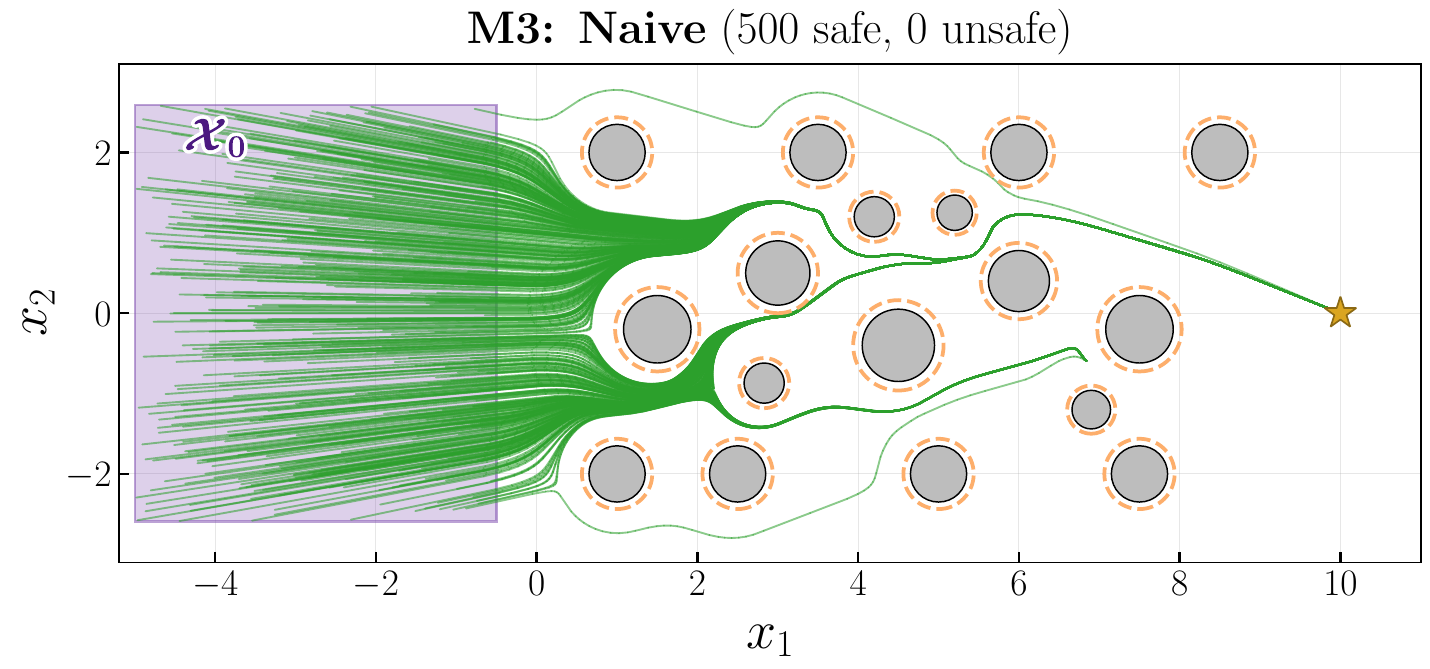}
    \caption{Maze trajectories: naive. 100 safe, 0 unsafe.}
    \label{fig:maze-traj-naive-large}
\end{figure}

\begin{figure}[H]
    \centering
    \includegraphics[width=\textwidth]{figures/Final_plots/maze_traj_srcr.pdf}
    \caption{Maze trajectories: SR-CR (ours). 100 safe, 0 unsafe.}
    \label{fig:maze-traj-srcr-large}
\end{figure}

\subsection{Quadcopter Obstacle-Avoidance}

\begin{figure}[H]
    \centering
    \begin{subfigure}[t]{0.48\textwidth}
        \includegraphics[width=\linewidth]{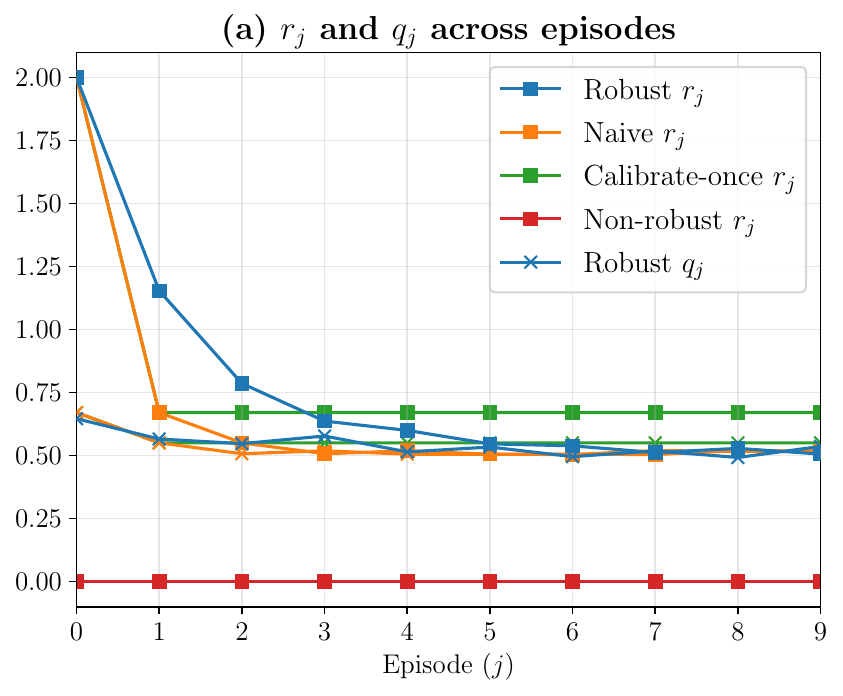}
    \end{subfigure}\hfill
    \begin{subfigure}[t]{0.48\textwidth}
        \includegraphics[width=\linewidth]{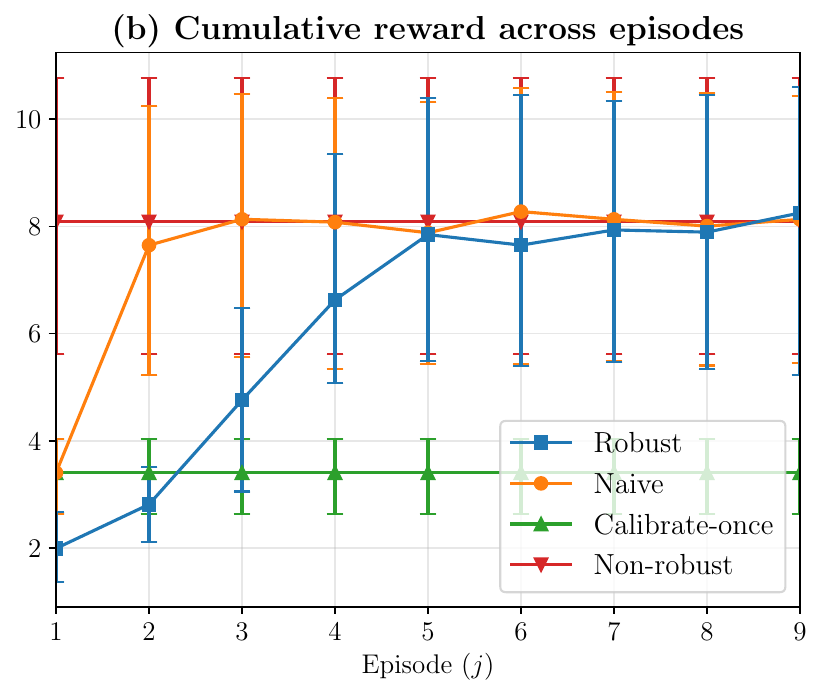}
    \end{subfigure}\\[0.5em]
    \begin{subfigure}[t]{0.48\textwidth}
        \includegraphics[width=\linewidth]{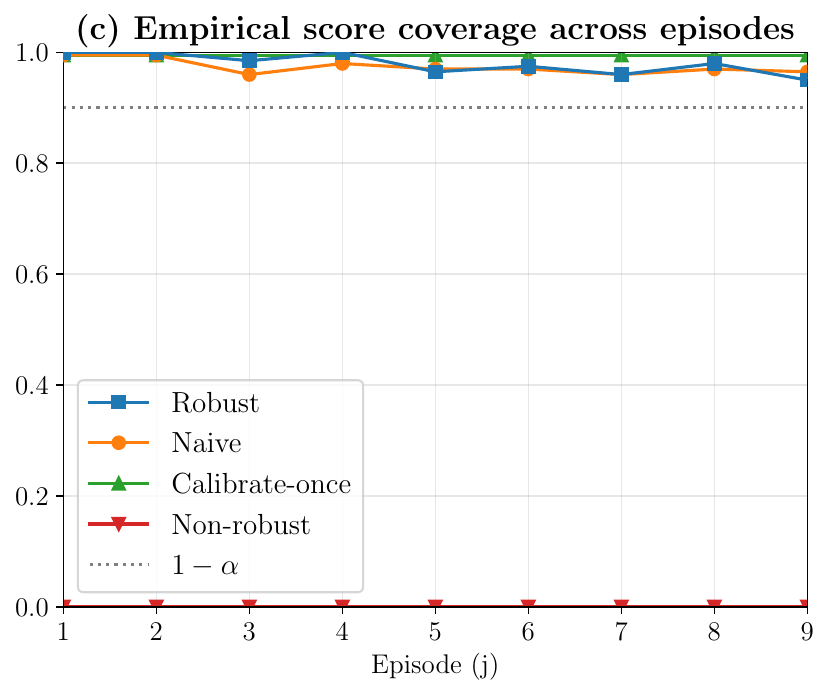}
    \end{subfigure}\hfill
    \begin{subfigure}[t]{0.48\textwidth}
        \includegraphics[width=\linewidth]{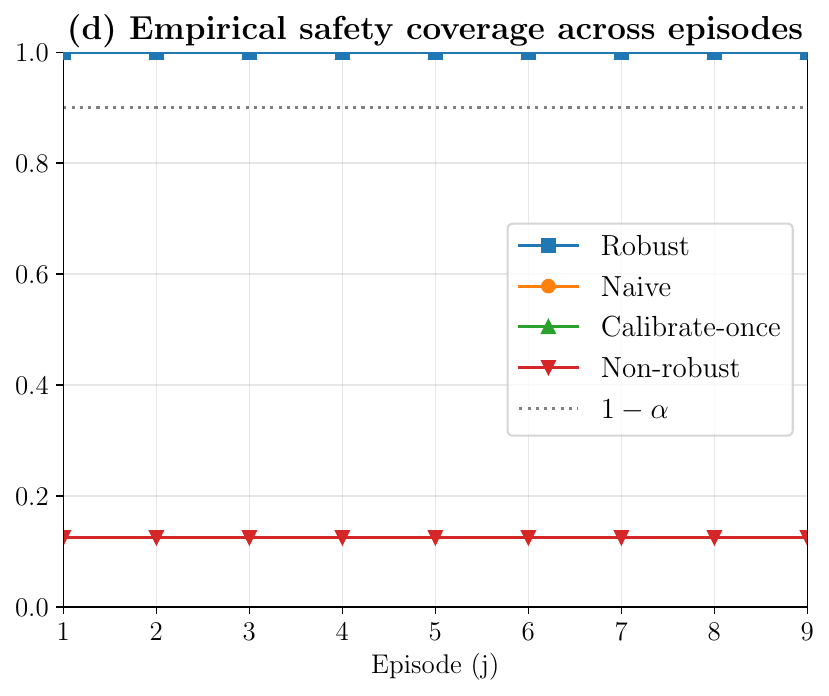}
    \end{subfigure}
    \caption{Quadcopter obstacle-avoidance (enlarged).
    (a)~$r_j$, $q_j$ across episodes. (b)~Cumulative progress towards goal. (c)~Score coverage $s_j^{(i)}\!\le\! r_j$. (d)~Safety: fraction with $\max_{t} h(x(t))\!\le\! 0$. Non-robust $q_j$ omitted from (a) ($q=250.38$, collision dynamics).}
    \label{fig:obs-results-large}
\end{figure}

\begin{figure}[H]
    \centering
    \begin{subfigure}[t]{0.48\textwidth}
        \includegraphics[width=\linewidth]{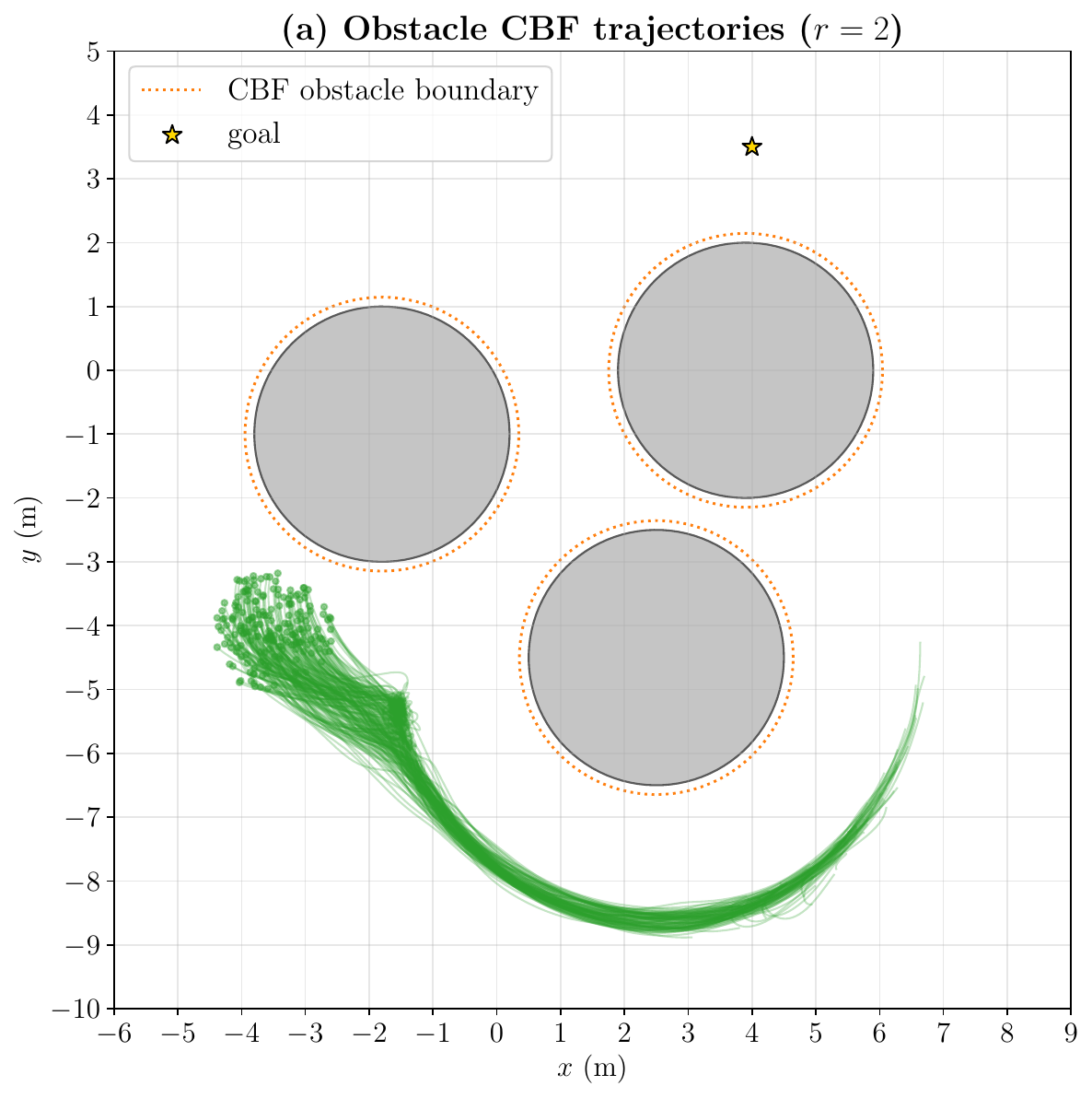}
        \caption{$r=r_0=2$}
    \end{subfigure}\hfill
    \begin{subfigure}[t]{0.48\textwidth}
        \includegraphics[width=\linewidth]{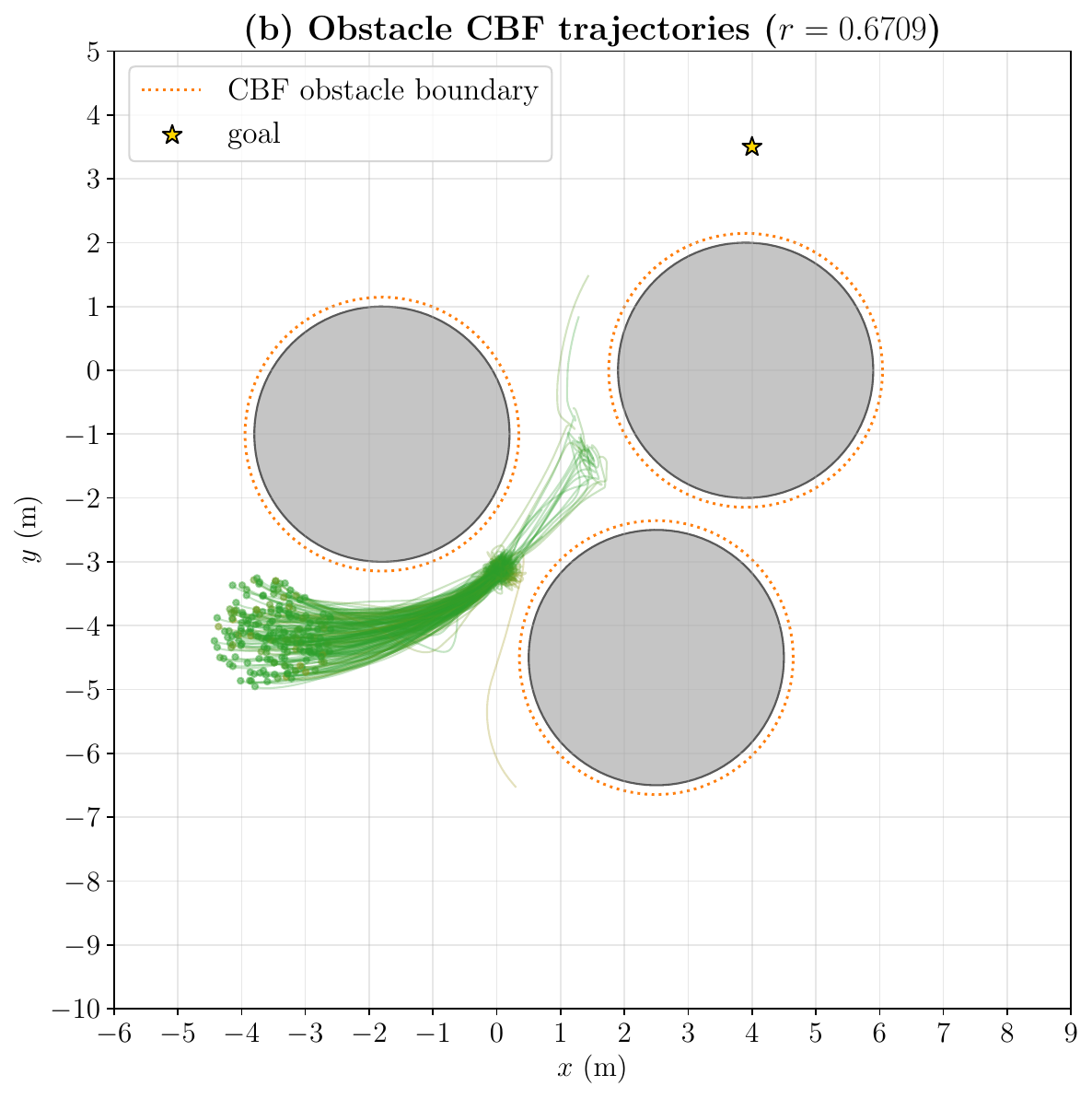}
        \caption{$r=r_{\text{cal-once}}$}
    \end{subfigure}\\[0.5em]
    \begin{subfigure}[t]{0.48\textwidth}
        \includegraphics[width=\linewidth]{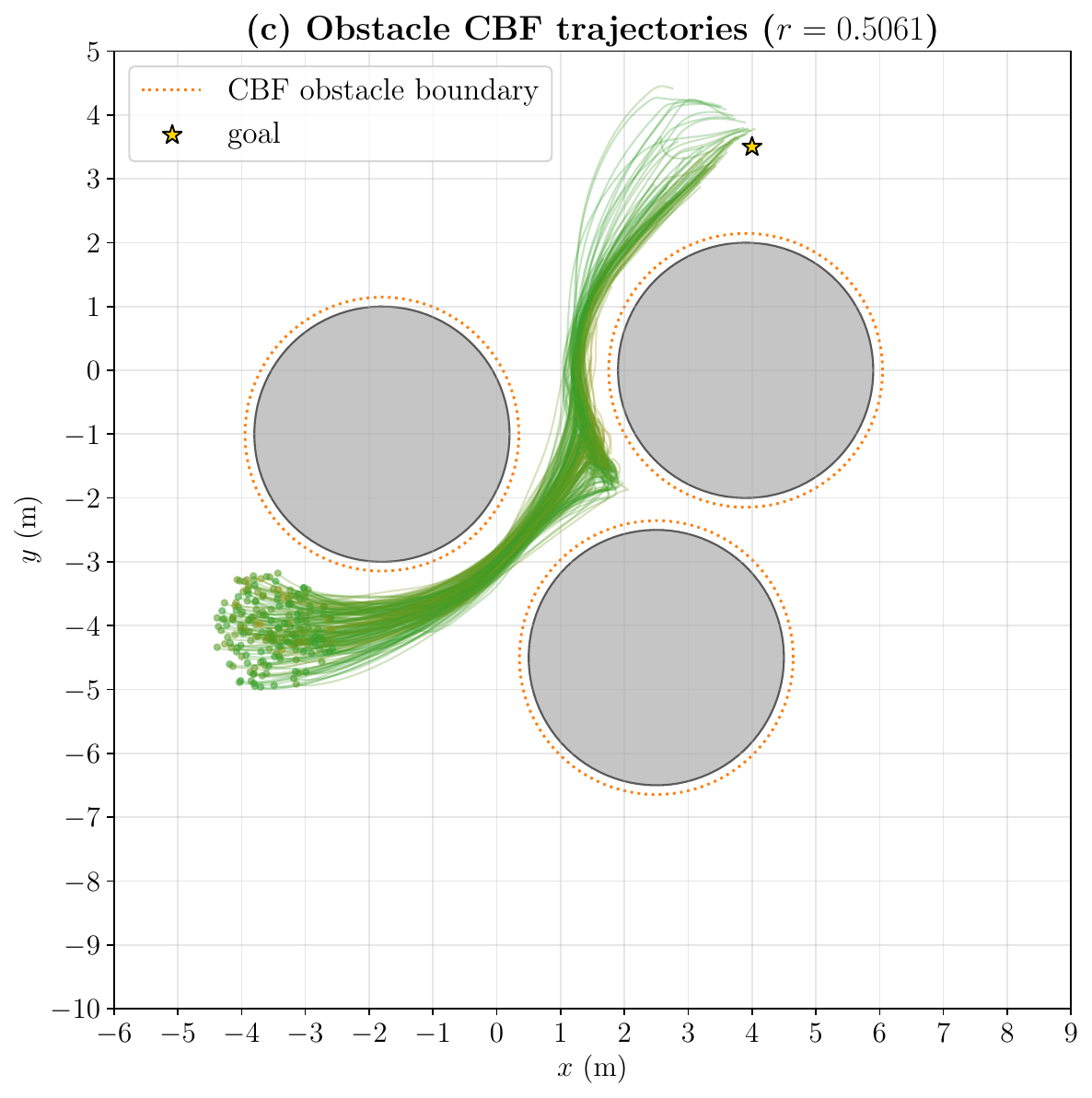}
        \caption{$r=r_{\text{robust}}$}
    \end{subfigure}\hfill
    \begin{subfigure}[t]{0.48\textwidth}
        \includegraphics[width=\linewidth]{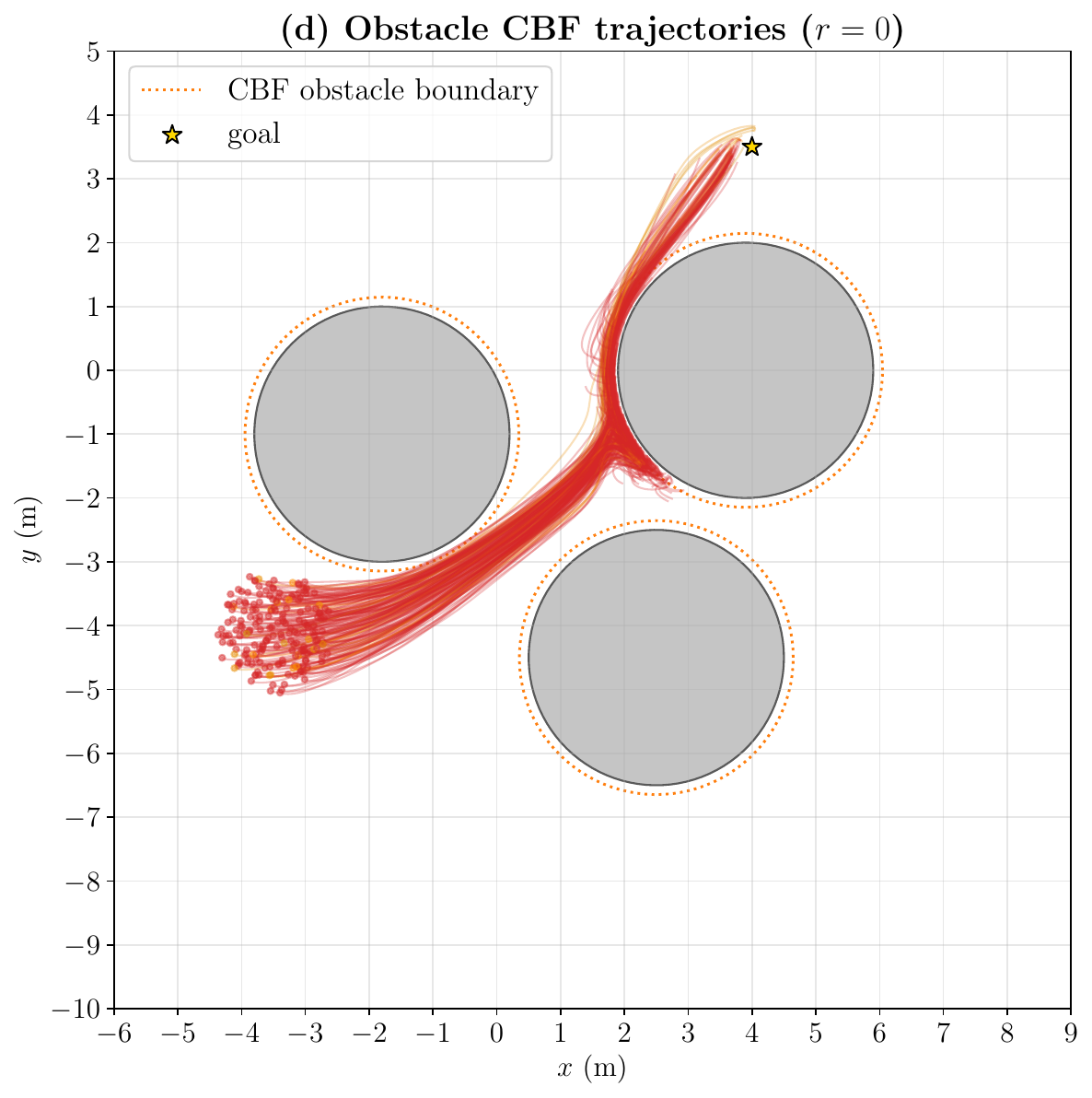}
        \caption{$r=0$}
    \end{subfigure}
    \caption{Quadcopter trajectories (enlarged).
    Orange: $h(x)\!=\!0$ boundary. Green/red: safe/unsafe.}
    \label{fig:obs-trajs-large}
\end{figure}

\arxonly{\twocolumn}

\ifsubmit
  \makeatletter
  \end{minipage}\end{lrbox}%
  \makeatother
  \end{refsection}
\fi
 
\end{document}